\newtheorem{theorem}{Theorem}
\newtheorem{lemma}{Lemma}
\newtheorem{corollary}{Corollary}
\theoremstyle{definition}
\newtheorem{remark}{Remark}
\newtheorem{mydef}{Definition}
\theoremstyle{definition}
\theoremstyle{definition}
\newtheorem{example}{Example}
\theoremstyle{definition}
\theoremstyle{definition}
\newtheorem{opt}{Optimization}
\renewcommand*\env@matrix[1][c]{\hskip -\arraycolsep
  \let\@ifnextchar\new@ifnextchar
  \array{*\c@MaxMatrixCols #1}}
\newcommand{\Fb}{\mathds{F}}
\newcommand{\xcal}{\mathcal{X}}
\newcommand{\ycal}{\mathcal{Y}}
\newcommand{\acal}{\mathcal{A}}
\newcommand{\ical}{\mathcal{I}}
\newcommand{\bcal}{\mathcal{B}}
\newcommand{\ucal}{\mathcal{U}}
\newcommand{\ecal}{\mathcal{E}}
\newcommand{\pcal}{\mathcal{P}}
\newcommand{\cs}{\mathsf{C}}
\newcommand{\Lb}{{\bf{L}}}
\newcommand{\xb}{{\bf{x}}}
\newcommand{\zb}{{\bf{z}}}
\newcommand{\yb}{{\bf{y}}}
\newcommand{\Hb}{{\bf{H}}}
\newcommand{\cb}{{\bf{c}}}
\newcommand{\Comb}[2]{{}^{#1}C_{#2}}%
\newcommand{\Ab}{{\textbf{A}}}
\newcommand{\ebh}{\hat{\pmb{\epsilon}}}
\newcommand{\bb}{{\pmb{\beta}}}
\newcommand{\ei}{{\pmb{\epsilon}}_{i}}
\newcommand{\eb}{{\pmb{\epsilon}}}
\title{Linear Codes for Broadcasting with Noisy Side Information}
\author{\IEEEauthorblockN{Suman Ghosh and Lakshmi Natarajan}

\thanks{The authors are with Department of Electrical Engineering, Indian Institute of Technology Hyderabad, Sangareddy 502\,285, India (email: \{ee16resch11006,\,lakshminatarajan\}@iith.ac.in).}}%
\begin{document}

\maketitle

\begin{abstract}
We consider network coding for a noiseless broadcast channel where each receiver demands a subset of messages available at the transmitter and is equipped with \emph{noisy side information} in the form an erroneous version of the message symbols it demands. We view the message symbols as elements from a finite field and assume that the number of symbol errors in the noisy side information is upper bounded by a known constant. This communication problem, which we refer to as \emph{broadcasting with noisy side information (BNSI)}, has applications in the re-transmission phase of downlink networks. 
We derive a necessary and sufficient condition for a linear coding scheme to satisfy the demands of all the receivers in a given BNSI network, and show that syndrome decoding can be used at the receivers to decode the demanded messages from the received codeword and the available noisy side information.
We represent BNSI problems as bipartite graphs, and using this representation, classify the family of problems where linear coding provides bandwidth savings compared to uncoded transmission. 
We provide a simple algorithm to determine if a given BNSI network belongs to this family of problems, i.e., to identify if linear coding provides an advantage over uncoded transmission for the given BNSI problem.
We provide lower bounds and upper bounds on the optimal codelength and constructions of linear coding schemes based on linear error correcting codes. For any given BNSI problem, we construct an equivalent index coding problem. A linear code is a valid scheme for a BNSI problem if and only if it is valid for the constructed index coding problem.
\end{abstract}

\begin{IEEEkeywords}
Broadcast channel, index coding, linear error correcting codes, network coding, noisy side information, syndrome decoding
\end{IEEEkeywords}

\section{Introduction} \label{Intro}
We consider the problem of broadcasting $n$ message symbols $x_1,\dots,x_n$ from a finite field $\Fb_q$ to a set of $m$ users $u_1,\dots,u_m$ through a noiseless broadcast channel. The $i^{\text{th}}$ receiver $u_i$ requests the message vector $\xb_{\xcal_i} = (x_j, \, j \in \xcal_i)$ where \mbox{$\xcal_i \subseteq \{1,\dots,n\}$} denotes the demands of $u_i$. We further assume that each receiver knows a noisy/erroneous version of \emph{its own demanded message} as side information. In particular, we assume that the side information at $u_i$ is a $\Fb_q$-vector $\xb_{\xcal_i}^{e}$ such that the demanded message vector $\xb_{\xcal_i}$ differs from the side information $\xb_{\xcal_i}^{e}$ in at the most $\delta_s$ coordinates, where the integer $\delta_s$ determines the quality of side information.
We assume that the transmitter does not know the exact realizations of the side information vectors available at the receivers. The objective of code design is to broadcast a codeword of as small a length as possible such that every receiver can retrieve its demanded message vector using the transmitted codeword and the available noisy side information. We refer to this communication problem as \emph{broadcasting with noisy side information} (BNSI). 

Wireless broadcasting in downlink communication channels has gained considerable attention and has several important applications, such as cellular and satellite communication, digital video broadcasting, and wireless sensor networks.
The BNSI problem considered in this paper models the re-transmission phase of downlink communication channels at the network layer.
Suppose during the initial broadcast phase each receiver of a downlink network decodes its demanded message packet erroneously (such as when the wireless channel experiences outage). 
Instead of discarding this decoded message packet, the erroneous symbols from this packet can be used as noisy side information for the re-transmission phase. 
If the number of symbol errors $\delta_s$ in the erroneously decoded packets is not large, we might be able to reduce the number of channel uses required for the re-transmission phase by intelligently coding the message symbols at the network layer. 

Consider the example scenario shown in Fig.~\ref{fig:subim1}. The transmitter is required to broadcast 4 message symbols $x_1,x_2,x_3,x_4$ to 3 users. Each user requires a subset of the message symbols, for example, User 1, User 2 and User 3 demand $(x_1,x_2,x_3)$, $(x_2,x_3,x_4)$ and $(x_1,x_3,x_4)$, respectively. Suppose during the initial transmission the broadcast channel is in outage, as experienced during temporary weather conditions in satellite-to-terrestrial communications. As a result, at each user, one of the message symbols in the decoded packet is in error. Based on an error detection mechanism (such as cyclic redundancy check codes) all the users request for a re-transmission. We assume that the users are not aware of the position of the symbol errors. 
% % However user decodes the demanded set with at most $1$ error. So, the transmitter needs to go for retransmission.
\begin{figure}[t]
\begin{subfigure}{0.5\textwidth}
\centering
\includegraphics[width=3.25in]{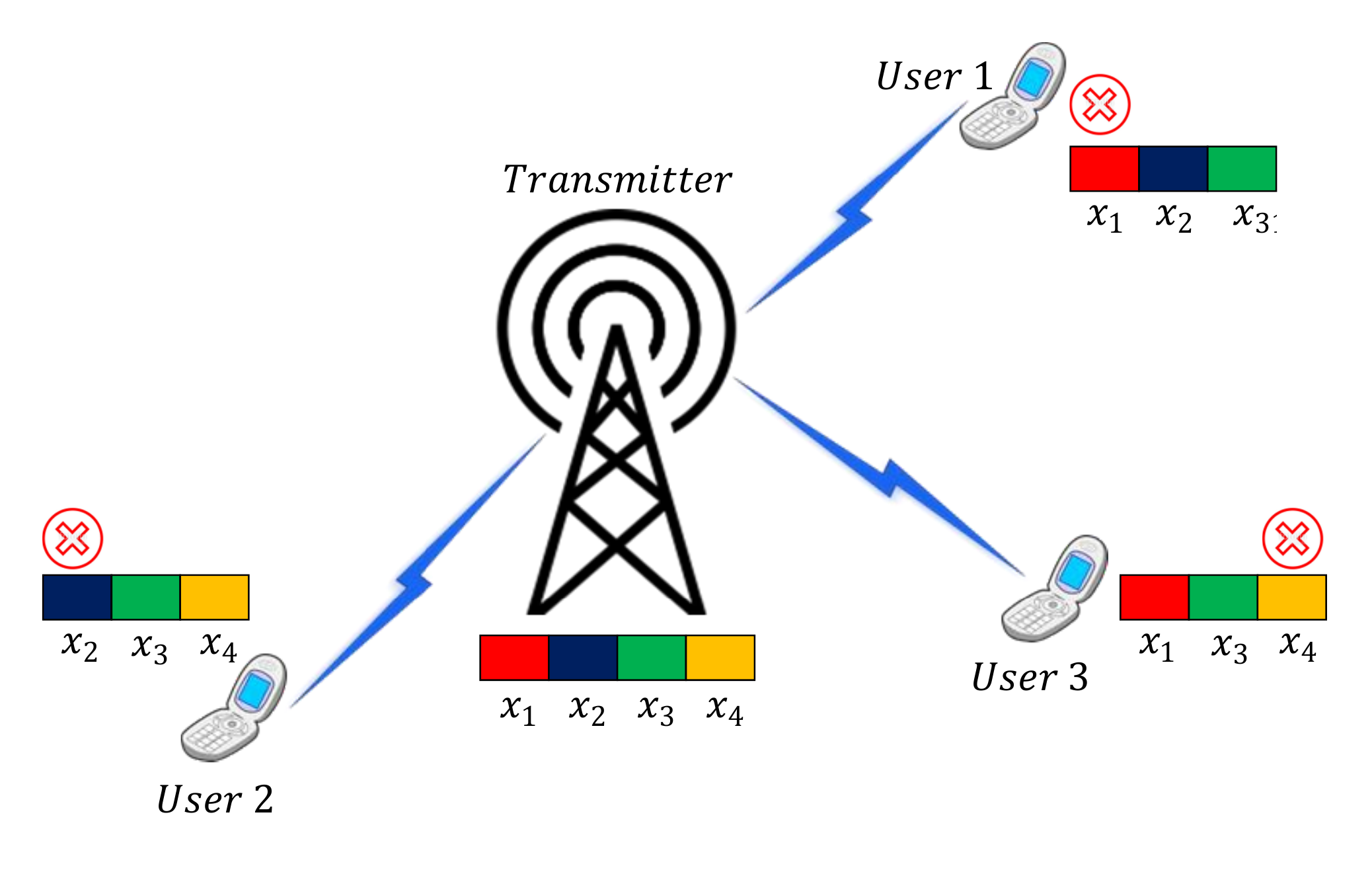} 
\caption{Transmission phase}
\label{fig:subim1}
\end{subfigure}
\begin{subfigure}{0.5\textwidth}
\centering
\includegraphics[width=3.25in]{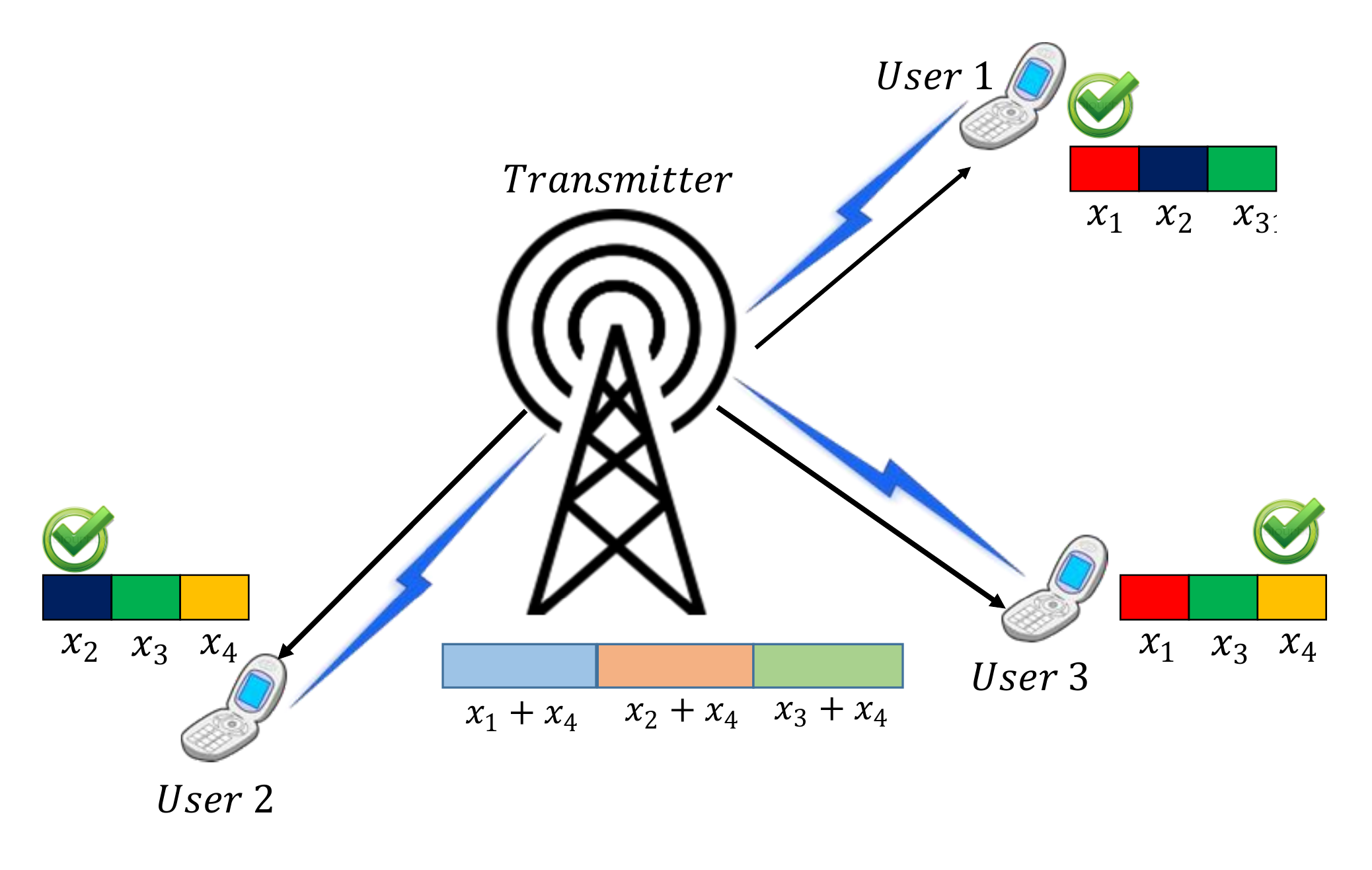}
\caption{Retransmission phase}
\label{fig:subim2}
\end{subfigure}
\caption{An example of broadcast channel with noisy side information.}
\label{fig:image0}
\end{figure}
The transmitter attempts a retransmission when the channel conditions improve.
Instead of retransmitting each message packet individually, which will require $4$ symbols to be transmitted, the transmitter will broadcast the coded sequence $(x_1+x_4,x_2+x_4,x_3+x_4)$ consisting of $3$ symbols, as shown in Fig.~\ref{fig:subim2}. 
Upon receiving this coded sequence it can be shown that using appropriate decoding algorithm (Examples~\ref{exmp2} and~\ref{exmp3} in Sections~\ref{sec3} and~\ref{sec4}, respectively) each user can correctly retrieve its own demanded message symbols using the erroneous version that it already has. 
By using a carefully designed code the transmitter is be able to reduce the number of channel uses in the retransmission phase. 
% The collection of the function values together constitute a {\textit{codeword}}. Fig.~\ref{fig:subim2} shows the retransmission phase for the given example where the transmitter transmits a codeword ${\bf{c}}=(x_1+x_4,x_2+x_4,x_3+x_4)$ of length $3$. 
%In satellite communication, satellites can use these type of coded re-transmission scheme to save power and expensive link usage. This BNSI problem also models the HARQ (Hybrid Automatic Repeat Request) re-transmission scheme. BNSI problem is a special case of one of two types of HARQ schemes \cite{HARQ} i.e. Incremental Redundancy (IR) HARQ with one retransmission phase for a broadcast network.

\subsection{Related Work}
\emph{Index coding}~\cite{YBJK_IEEE_IT_11} is a related code design problem that is concerned with the transmission of a set of information symbols to finitely many receivers in a noiseless broadcast channel where each receiver demands a subset of information symbols from the transmitter and already knows a \emph{different} subset of symbols as side information. The demand subset and the side information subset at each receiver in index coding are disjoint and the side information is assumed to be noiseless. 
Several results on index coding are available based on algebraic and graph theoretic formulations~\cite{linear_prog_10,MCJ_IT_14,VaR_GC_16,
MBBS_2016,MaK_ISIT_17,tehrani2012bipartite,SDL_ISIT_13,
LOCKF_2016,MAZ_IEEE_IT_13,agar_maz_2016}. 
% have been done on algebraic and graph theoretic formulation of Index Coding problem, lower and upper bounds on the minimum block length of an Index Code. 
The problem of index coding under noisy broadcast channel conditions has also been studied. Dau et al.~\cite{Dau_IEEE_J_IT_13} analyzed linear index codes for error-prone broadcast channels. Several works, for example~\cite{karat_rajan_2017,samuel_rajan_2017}, provide constructions of error correcting index codes. Kim and No~\cite{Kim_No_2017} consider errors both during broadcast channel transmission as well as in receiver side information.

Index coding achieves bandwidth savings by requiring each receiver to know a subset of messages that it does not demand from the source. This side information might be gathered by overhearing previous transmissions from the source to other users in the network.
In contrast, the coding scenario considered in this paper does not require a user to overhear and store data packets that it does not demand (which may incur additional storage and computational effort at the receivers), but achieves bandwidth savings by exploiting the erroneous symbols already available from prior failed transmissions to the same receiver.
To the best of our knowledge, no code design criteria, analysis of code length or code constructions are available for the class of broadcast channels with noisy side information considered in this paper.

\subsection{Contributions and Organization}

We view broadcasting with noisy side information as a coding theoretic problem at the network layer. We introduce the system model and provide relevant definitions in Section~\ref{sec2}. We consider linear coding schemes for the BNSI problem and provide a necessary and sufficient condition for a linear code to meet the demands of all the receivers in the broadcast channel (Theorem~\ref{thm1} and Corollary~\ref{corr1}, Section~\ref{sec3}).  Given a linear coding scheme for a BNSI problem, we show how each receiver can decode its demanded message from the transmitted codeword and its noisy side information using the syndrome decoding technique (Section~\ref{sec4}). We then provide an exact characterization of the family of BNSI problems where the number of channel uses required with linear coding is strictly less than that required by uncoded transmission (Theorem~\ref{thm2}, Section~\ref{sec5b}). We provide a simple algorithm to determine if a given BNSI network belongs to this family of problems using a representation of the problem in terms of a bipartite graph (Algorithm~2, Section~\ref{sec5c}). Next we provide lower bounds on the optimal codelength (Section \ref{lb}). A simple construction of an encoder matrix based on linear error correcting code is described (Section \ref{linecc}). Based on this construction we then provide upper bounds on the optimal codelength (Section \ref{ub}). Finally we relate the BNSI problem with  index coding problem. We show that each BNSI problem is equivalent to an index coding problem (Section \ref{ic_bnsi}). We show that any linear code is a valid coding scheme for a BNSI problem if and only if it is valid for the equivalent index coding problem (Theorem~\ref{BNSI_IC_L}, Section~\ref{ic_bnsi}).
% From the relation, we relate a valid encoder matrix for a BNSI problem with a valid encoder matrix for an Index Coding problem. 
A lower bound on optimal codelength of a BNSI problem is also derived from the equivalent index coding problem (Section \ref{lb_ic}).  

\emph{Notation}: Matrices and row vectors are denoted by bold uppercase and lowercase letters, respectively. For any positive integer $n$, the symbol $[n]$ denotes the set $\{1,\dots,n\}$. The Hamming weight of a vector $\xb$ is denoted as $wt(\xb)$. The symbol $\Fb_q$ denotes the finite field of size $q$, where $q$ is a prime power. The $n \times n$ identity matrix is denoted as ${\bf I}_n$. For any matrix ${\bf L} \in \Fb_q^{n \times N}$, $rowspan\{{\bf L}\}$ denotes the subspace of $\Fb_q^N$ spanned by the rows of ${\bf L}$, and $\Lb^T$ is the transpose of $\Lb$.
  
\section{System Model and Definitions} \label{sec2}
Suppose a transmitter intends to broadcast a vector of $n$ information symbols from a finite field $\Fb_q$ denoted as \mbox{$\xb=(x_1,x_2,\dots,x_n) \in \Fb_q^n$} to $m$ users or receivers denoted as $u_1,u_2,\dots,u_m$. The demanded information symbol vector of $i^{\text{th}}$ user $u_i$ is denoted as \mbox{$\xb_{\xcal_i}=(x_j,j \in \xcal_i) \in \Fb_q^{|\xcal_i|}$} where $\xcal_i \subseteq [n]$ is the \textit{demanded information symbol index set} of the $i^{\text{th}}$ user. The $m$-tuple \mbox{$\xcal=(\xcal_1,\xcal_2,\dots,\xcal_m)$} represents the demands of all the $m$ receivers in the broadcast channel. The erroneous version of the demanded information symbol vector available as side information at user $u_i$ is denoted as \mbox{$\xb_{\xcal_i}^{e} \in \Fb_q^{|\xcal_i|}$}. We will assume that the noisy side information $\xb_{\xcal_i}^{e}$ differs from the actual demanded message vector $\xb_{\xcal_i}$ in at the most $\delta_s$ coordinates, i.e. \mbox{$\xb_{\xcal_i}^{e}=\xb_{\xcal_i}+\ei$}, where the noise vector $\ei \in \Fb_q^{|\xcal_i|}$ and \mbox{$wt(\ei) \leq \delta_s$}. We will further assume that the transmitter and all the receivers know the value of $\delta_s$ and $\xcal$, but are unaware of the exact realization of the noise vectors $\eb_1,\dots,\eb_m$.

The coding problem considered in this paper is to generate a transmit codeword $\cb=(c_1,\dots,c_N) \in \Fb_q^N$ of as small a length $N$ as possible to be broadcast from the transmitter such that each user $u_i$, $i \in [m]$, can correctly estimate its own demanded message $\xb_{\xcal_i}$ using the codeword ${\bf{c}}$ and the noisy side information $\xb_{\xcal_i}^{e}$. Note that the task of decoding $\xb_{\xcal_i}$ from $\cb$ and $\xb_{\xcal_i}^{e}$ is equivalent to that of decoding the error vector $\eb_i = \xb_{\xcal_i}^{e} - \xb_{\xcal_i}$ at the user $u_i$.

The problem of designing a coding scheme for broadcasting $n$ information symbols to $m$ users with demands \mbox{$\xcal=(\xcal_1,\dots,\xcal_m)$} that are aided with noisy side information with at the most $\delta_s$ errors will be called the \emph{$(m,n,\xcal,\delta_s)$-BNSI (Broadcasting with Noisy Side Information) problem}. 
\begin{mydef}
A \emph{valid encoding function} of codelength $N$ for the ($m,n,\xcal,\delta_s$)-BNSI problem over the field $\Fb_q$ is a function
\begin{equation*}
\mathfrak{E}:\Fb_q^n\rightarrow \Fb_q^N
\end{equation*}
such that for each user $u_i,$ \mbox{$i \in [m]$} there exists a decoding function
\mbox{$\mathfrak{D}_i:\Fb_q^N \times \Fb_q^{|\xcal_i|} \rightarrow \Fb_q^{|\xcal_i|}$}
satisfying the following property:
$\mathfrak{D}_i(\mathfrak{E}(\xb),\xb_{\xcal_i}+\ei)=\xb_{\xcal_i}$
for every \mbox{$\xb \in \Fb_q^n$} and all \mbox{$\ei \in \Fb_q^{|\xcal_i|}$} with $wt(\ei)\leq \delta_s$. 
\end{mydef}
The aim of the code construction is to design a tuple $(\mathfrak{E},\mathfrak{D}_1,\mathfrak{D}_2,\dots, \mathfrak{D}_m)$ of encoding and decoding functions that minimizes the codelength $N$ and to calculate the \emph{optimal codelength} for the given problem which is the minimum codelength among all valid BNSI coding schemes. 
In this paper we will consider only linear coding schemes for the BNSI problem. By imposing linearity, we are able to utilize the rich set of mathematical tools available from linear algebra and the theory of error correcting codes to analyze the BNSI network. 
%% Linearity also allows us to realize the encoding function with low complexity at the transmitter as the multiplication of the information symbol vector $\xb$ with an appropriately designed matrix $\bf{L}$.  

\begin{mydef}
A coding scheme $(\mathfrak{E},\mathfrak{D}_1,\mathfrak{D}_2,\dots, \mathfrak{D}_m)$ is said to be \emph{linear} if the encoding function \mbox{$\mathfrak{E}: \Fb_q^n \to \Fb_q^N$} is an $\Fb_q$-linear transformation. 
\end{mydef}

For a linear coding scheme, the codeword $\bf{c}=\mathfrak{E}(\xb)=\xb\Lb$, where $\xb \in \Fb_q^n$ and $\Lb \in \Fb_q^{n \times N}$. The matrix $\Lb$ is the \emph{encoder matrix} of the linear coding scheme. The minimum codelength among all valid linear coding schemes for the $(m,n,\xcal,\delta_s)$-BNSI problem over the field $\Fb_q$ will be denoted as either $N_{q,opt}(m,n,\xcal,\delta_s)$ or simply $N_{q,opt}$ if there is no ambiguity.

Note that the trivial coding scheme that transmits the information symbols $\xb$ `uncoded', i.e., $\bf{c}=\mathfrak{E}(\xb)=\xb$ is a valid coding scheme since each receiver $u_i$ can retrieve the demanded message $\xb_{\xcal_i}$ directly from the received codeword. Further, this code is linear with ${\bf{L}}={\bf{I}}_{n}$. Thus, we have the following trivial upper bound on the optimum linear codelength
\begin{equation} \label{eq:trivial}
N_{q,opt}(m,n,\xcal,\delta_s) \leq n.
\end{equation} 
We now introduce a representation of the BNSI problem as a bipartite graph.
\begin{mydef}
The bipartite graph \mbox{$\mathcal{B}=(\mathcal{U},\mathcal{P},\mathcal{E})$} corresponding to the $(m,n,\xcal,\delta_s)$-BNSI problem consists of the node-sets \mbox{$\mathcal{U}=\{u_1,u_2,\dots,u_m\}$} and \mbox{$\mathcal{P}=\{x_1,x_2,\dots,x_n\}$} and the set of undirected edges \mbox{$\mathcal{E}= \left\{\,\{u_i,x_j\}~|~i \in [m] \text{ and } j \in \xcal_i\right\}$}. 
\end{mydef}

The set $\mathcal{U}$ denotes the \textit{user-set} and $\mathcal{P}$ denotes the set of packets or the \textit{information symbol-set} and $\mathcal{E}$ represents the demands of each user in the broadcast channel.
Note that the degree of the user node $u_i$ in $\bcal$ equals $|\xcal_i|$.

%% \begin{wrapfigure}{o}{0.2\textwidth}
%% \vspace{-10mm}
%% \includegraphics[width=1.5in,height=1.6in]{paper_pic_pdf/bipartite_graph} 
%% \vspace{-7mm}
%% \caption{bipartite graph for the example}
%% \label{fig:image1}
%% \end{wrapfigure}

\begin{figure}[h!]
\centering
\includegraphics[width=1.5in]{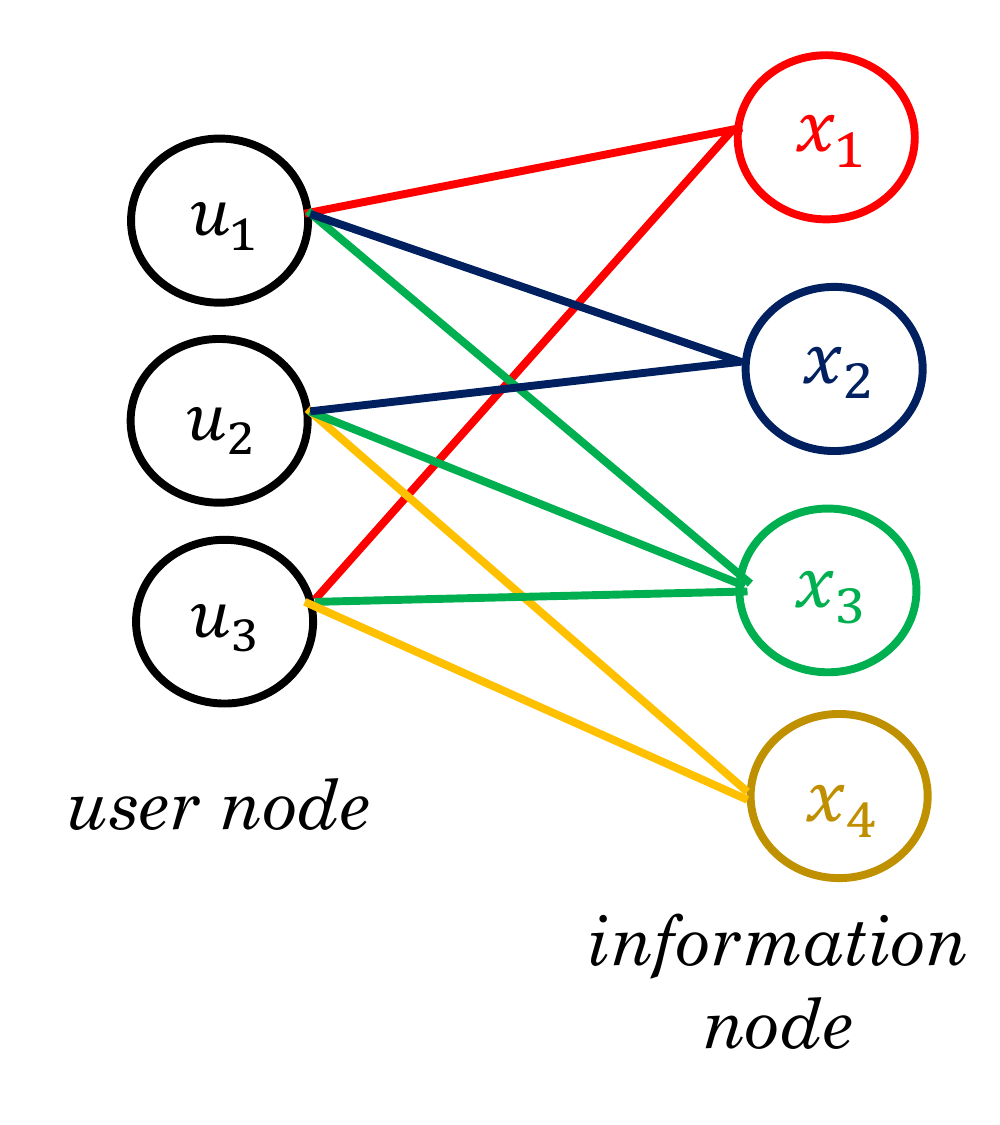}
\captionsetup{justification=centering}
\caption{\footnotesize{Bipartite graph $\mathcal{B}$ for the BNSI problem in Example~\ref{exmp1}}.}
\label{fig:image1}
\vspace{-5mm}
\end{figure} 

\begin{example} \label{exmp1}
Consider the BNSI problem with \mbox{$n=4$} information symbols, \mbox{$m=3$} users, and user demand index sets \mbox{$\xcal_1=\{1,2,3\}$}, \mbox{$\xcal_2=\{2,3,4\}$}, \mbox{$\xcal_3=\{1,3,4\}$}. The bipartite graph \mbox{$\mathcal{B}=(\mathcal{U},\mathcal{P},\mathcal{E})$} in Fig.~\ref{fig:image1} describes this scenario where \mbox{$\mathcal{U}=\{u_1,u_2,u_3\}$}, \mbox{$\mathcal{P}=\{x_1,x_2,x_3,x_4\}$} and $\mathcal{E}$= $\{\{u_1,x_1\}$,$\{u_1,x_2\}$,$\{u_1,x_3\}$,$\{u_2,x_2\}$,\\
$\{u_2,x_3\}$,$\{u_2,x_4\}$,$\{u_3,x_1\}$,$\{u_3,x_3\}$,$\{u_3,x_4\}\}$
\end{example}

\section{Design Criterion for the Encoder Matrix} \label{sec3}

Here, we derive a necessary and sufficient condition for a matrix ${\bf L} \in \Fb_q^{n \times N}$ to be a valid encoder matrix for the $(m,n,\xcal,\delta_s)$-BNSI problem over $\Fb_q$.
We now define the set $\ical(q,m,n,\xcal,\delta_s)$ of vectors ${\bf z}$ of length $n$ such that \mbox{$wt({\bf z}_{\xcal_i}) \in [2\delta_s]$} for some choice of $i \in [m]$, i.e., 
\begin{equation} \label{ical_def}
\ical(q,m,n,\xcal,\delta_s)=\bigcup\limits_{i=1}^{m} \left\{\,\zb \in \Fb_q^n~|~ 1 \leq wt(\zb_{\xcal_i}) \leq 2\delta_s \, \right\}.
\end{equation}  
When there is no ambiguity we will denote $\ical(q,m,n,\xcal,\delta_s)$ simply as $\ical$.
 
\begin{theorem} \label{thm1}
A matrix $\Lb\in\Fb_q^{n \times N}$ is a valid encoder matrix for the $(m,n,\xcal,\delta_s)$-BNSI problem if and only if
\begin{equation*}
{\bf zL} \neq {\bf 0},\quad \forall \zb \in \ical(q,m,n,\xcal,\delta_s).
\end{equation*} 
\end{theorem}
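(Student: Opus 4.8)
The plan is to reduce the existence of valid decoders to a pairwise distinguishability condition at each receiver, and then to recast that condition algebraically by passing to the difference $\zb=\xb-\xb'$ of two candidate messages. The guiding observation is that user $u_i$ sees only the pair $(\xb\Lb,\,\xb_{\xcal_i}+\ei)$, so a valid decoder $\mathfrak{D}_i$ exists exactly when this pair determines $\xb_{\xcal_i}$ unambiguously over all $\xb$ and all $\ei$ with $wt(\ei)\le\delta_s$. Thus $\Lb$ is valid if and only if, for every $i$, whenever two situations $(\xb,\ei)$ and $(\xb',\ei')$ yield the same codeword $\xb\Lb=\xb'\Lb$ and the same side information $\xb_{\xcal_i}+\ei=\xb'_{\xcal_i}+\ei'$, they necessarily demand the same symbols, $\xb_{\xcal_i}=\xb'_{\xcal_i}$. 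Setting $\zb=\xb-\xb'$, the first condition reads $\zb\Lb=\mathbf{0}$, the second reads $\zb_{\xcal_i}=\ei'-\ei$ (so $wt(\zb_{\xcal_i})\le 2\delta_s$ by the triangle inequality), and the desired conclusion reads $\zb_{\xcal_i}=\mathbf{0}$. Comparing with \eqref{ical_def}, this is precisely the assertion that no $\zb\in\ical$ lies in the kernel of the map $\zb\mapsto\zb\Lb$.

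For the ``if'' direction I would make this decoder explicit. Given an input $(\cb,\yb)$, let $\mathfrak{D}_i$ search for any pair $(\hat\xb,\hat\ei)$ with $\hat\xb\Lb=\cb$, $\hat\xb_{\xcal_i}+\hat\ei=\yb$ and $wt(\hat\ei)\le\delta_s$, and output $\hat\xb_{\xcal_i}$. The true transmitted pair is always a feasible candidate, so it remains only to check well-definedness: if $(\hat\xb,\hat\ei)$ and $(\tilde\xb,\tilde\ei)$ are both feasible, then $\zb=\hat\xb-\tilde\xb$ satisfies $\zb\Lb=\mathbf{0}$ and $\zb_{\xcal_i}=\tilde\ei-\hat\ei$, whence $wt(\zb_{\xcal_i})\le2\delta_s$. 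The hypothesis forces $\zb\notin\ical$, and since $wt(\zb_{\xcal_i})\le2\delta_s$ the only way $\zb$ can fail to enter $\ical$ through index $i$ is $wt(\zb_{\xcal_i})=0$; hence $\hat\xb_{\xcal_i}=\tilde\xb_{\xcal_i}$, the output is unique, and it equals the genuine $\xb_{\xcal_i}$.

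For the ``only if'' direction I would argue by contraposition: assume some $\zb\in\ical$ satisfies $\zb\Lb=\mathbf{0}$, say with $1\le wt(\zb_{\xcal_i})\le2\delta_s$ for some $i$, and construct two confusable situations at $u_i$. The key technical step is to split the ``syndrome'' $\zb_{\xcal_i}$: partition its support into two sets of size at most $\delta_s$ each (possible precisely because $wt(\zb_{\xcal_i})\le2\delta_s$) and thereby write $\zb_{\xcal_i}=\ei'-\ei$ with $wt(\ei),wt(\ei')\le\delta_s$. Taking any $\xb$ and setting $\xb'=\xb-\zb$ gives $\xb\Lb=\xb'\Lb$ and $\xb_{\xcal_i}+\ei=\xb'_{\xcal_i}+\ei'$, yet $\xb_{\xcal_i}\ne\xb'_{\xcal_i}$ because $\zb_{\xcal_i}\ne\mathbf{0}$; no decoding function can resolve both, so $\Lb$ is not valid.

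The only genuine subtlety, and the step I would be most careful about, is this decomposition of $\zb_{\xcal_i}$ into a difference of two weight-$\le\delta_s$ vectors: it is exactly here that the threshold $2\delta_s$ in the definition of $\ical$ enters, and it is what makes the weight bounds in the two directions match. Everything else is bookkeeping of the quantifiers ``for every $\xb$ and all $\ei$ with $wt(\ei)\le\delta_s$'' and the observation that validity of $\Lb$ is equivalent to simultaneous decodability at every user, which is why the union over $i$ in \eqref{ical_def} is the correct object.
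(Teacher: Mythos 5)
Your proposal is correct and follows essentially the same route as the paper's proof: reduce validity of $\Lb$ to pairwise distinguishability of message pairs at each receiver, pass to the difference $\zb=\xb-\xb'$, and translate the confusability condition into $1\le wt(\zb_{\xcal_i})\le 2\delta_s$ for some $i$. The only difference is one of explicitness --- the paper asserts the chain of equivalences directly, while you spell out the two directions (the exhaustive-search decoder for sufficiency, and the splitting of the support of $\zb_{\xcal_i}$ into two parts of weight at most $\delta_s$ for necessity), the latter being exactly the step the paper leaves implicit in writing $\zb_{\xcal_i}=\ei'-\ei$.
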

\begin{proof}
The encoding function $\mathfrak{E}(\xb)={\bf xL}$ is valid for the given BNSI problem if and only if for each $i \in [m]$, user $u_i$ can uniquely determine ${\bf x}_{\xcal_i}$ from the received codeword ${\bf xL}$ and the side information $\xb_{\xcal_i} + \eb_i$. Hence, for two distinct values of the demanded message $\xb_{\xcal_i}$ and $\xb'_{\xcal_i}$, if the noise vectors $\eb_i$ and $\eb'_i$ are such that the noisy side information at $u_i$ is identical (i.e., $\xb_{\xcal_i} + \eb_i = \xb'_{\xcal_i} + \eb'_i$) then the corresponding transmit codewords ${\bf xL}$ and ${\bf x'L}$ must be distinct for $u_i$ to distinguish the message ${\bf x}_{\xcal_i}$ from ${\xb'_{\xcal_i}}$.
Equivalently, the condition \mbox{${\bf xL} \neq {\bf x'L}$} should hold for every pair $\xb,\xb' \in \Fb_q^n$ such that \mbox{$\xb_{\xcal_i} \neq \xb'_{\xcal_i} $} and $\xb_{\xcal_i}+\ei=\xb'_{\xcal_i}+\ei'$ for some choice of $\ei,\ei' \in \Fb_q^{|\xcal_i|}$ with $wt(\ei), wt(\ei') \leq \delta_s$. % $wt(\ei') \leq \delta_s$.
Therefore, $\Lb$ is a valid encoder matrix if and only if % it satisfies the following conditions,
\begin{equation}
\xb\Lb \neq \xb'\Lb \label{equ1}
\end{equation}
\mbox{$\forall \xb,\xb' \in \Fb_q^n$} such that \mbox{$\xb_{\xcal_i} \neq \xb'_{\xcal_i}$} and \mbox{$\xb_{\xcal_i}-\xb'_{\xcal_i}=\ei'-\ei$}, \mbox{$wt(\ei) \leq \delta_s$}, \mbox{$wt(\ei') \leq \delta_s$} for some $i \in [m]$ and $\ei,\ei' \in \Fb_q^{|\xcal_i|}$.  
Denoting \mbox{$\zb=\xb-\xb'$} the condition in~(\ref{equ1}) can be reformulated as \mbox{$\zb\Lb \neq {\pmb{0}}$} for all $\zb \in \Fb_q^n$ such that \mbox{$\zb_{\xcal_i} \neq \pmb{0}$} and \mbox{$\zb_{\xcal_i}=\ei'-\ei$}, \mbox{$wt(\ei) \leq \delta_s$}, \mbox{$wt(\ei') \leq \delta_s$}, for some choice of \mbox{$i \in [m]$} and $\ei,\ei' \in \Fb_q^{|\xcal_i|}$.
% \mbox{$\forall i \in [m]$} and \mbox{$\forall \zb \in \Fb_q^n$}. 
Equivalently, \mbox{${\bf zL} \neq {\bf 0}$} when \mbox{$wt(\zb_{\xcal_i})=wt(\ei-\ei')\leq 2\delta_s$} and \mbox{$wt(\zb_{\xcal_i}) \neq {\bf 0}$} for some \mbox{$i \in [m]$}. 
% Therefore, $\Lb$ is a valid encoder matrix if and only if $wt(\zb\Lb) \geq 1,\quad \forall \zb \in \mathcal{I}$
The statement of the theorem then follows.
\end{proof}

\begin{example} \label{exmp2}
Consider the $(3,4,\xcal,1)$-BNSI problem of Example~\ref{exmp1} with the field size $q=2$. It is straightforward to verify that $\mathcal{I}=\Fb_2^4\setminus\{\pmb{0},\pmb{1}\}$, where $\pmb{0}$ and $\pmb{1}$ denote the all zero and all one vector respectively in $\Fb_2^4$.
Let
\begin{equation} \label{eq:exmp2}
\Lb=
\begin{bmatrix}
1 & 0 & 0\\
0 & 1 & 0\\
0 & 0 & 1\\
1 & 1 & 1
\end{bmatrix}.
\end{equation}
It is easy to check that $\forall \zb \in \mathcal{I}$, $\zb\Lb \neq {\bf 0}$ because $wt(\zb)$ is either $1,2$ or $3$ and any $3$ rows of $\Lb$ are linear independent. 
Hence, the matrix ${\bf L}$ in~\eqref{eq:exmp2} is a valid encoder matrix, and this coding scheme with codelength $N=3$ saves $1$ channel use with respect to uncoded transmission.
It can be verified that no $4 \times 2$ binary matrix satisfies the criteria of Theorem~\ref{thm1} for this problem, and hence, $N_{2,opt}=3$.
\end{example}

We now provide a restatement of Theorem~\ref{thm1} in terms of the span of the rows of submatrices of ${\bf L}$. 
Towards this we first introduce some notation. For each $i \in [m]$, let \mbox{$\ycal_i=[n] \setminus \xcal_i$}. The set $\ycal_i$ is the index set of messages that are not demanded by $u_i$. 
For any $\acal \subseteq [n]$, ${\bf L}_{\acal}$ is the matrix consisting of the rows of $\Lb$ with indices given in $\acal$.

\begin{corollary} \label{corr1}
$\Lb$ is a valid encoder matrix for the $(m,n,\xcal,\delta_s)$-BNSI problem if and only if for every \mbox{$i \in [m]$},
any non-zero linear combination of any $2\delta_s$ or fewer rows of ${\bf{L}}_{\mathcal{X}_i}$ does not belong to $rowspan\{\Lb_{\mathcal{Y}_i}\}$.
\end{corollary}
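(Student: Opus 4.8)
The plan is to derive the corollary directly from Theorem~\ref{thm1} by rewriting the single condition $\zb\Lb \neq \mathbf{0}$ in a form that separates, for each user, the coordinates of $\zb$ indexed by $\xcal_i$ from those indexed by $\ycal_i$. Since $\ical$ is a union over $i \in [m]$, I would first observe that the statement ``$\zb\Lb \neq \mathbf{0}$ for all $\zb \in \ical$'' is logically equivalent to the conjunction over $i \in [m]$ of the per-user conditions ``$\zb\Lb \neq \mathbf{0}$ for every $\zb \in \Fb_q^n$ with $1 \leq wt(\zb_{\xcal_i}) \leq 2\delta_s$.'' This is just the tautology that $(\exists i\, C_i) \Rightarrow P$ is equivalent to $\forall i\,(C_i \Rightarrow P)$, quantified over $\zb$. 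It therefore suffices to prove, for each fixed $i$, that the per-user condition is equivalent to the $rowspan\{\Lb_{\ycal_i}\}$ statement in the corollary.

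Next, fixing $i$, I would decompose any $\zb$ as $\zb\Lb = \zb_{\xcal_i}\Lb_{\xcal_i} + \zb_{\ycal_i}\Lb_{\ycal_i}$, where $\zb_{\xcal_i}$ and $\zb_{\ycal_i}$ are the subvectors of $\zb$ supported on $\xcal_i$ and $\ycal_i$, and $\Lb_{\xcal_i}$, $\Lb_{\ycal_i}$ are the corresponding row submatrices. The crucial point is that the weight constraint restricts only $\zb_{\xcal_i}$ (to weights $1,\dots,2\delta_s$), leaving $\zb_{\ycal_i} \in \Fb_q^{|\ycal_i|}$ entirely free. Holding $\zb_{\xcal_i}$ fixed and letting $\zb_{\ycal_i}$ range over all of $\Fb_q^{|\ycal_i|}$, the vector $-\zb_{\ycal_i}\Lb_{\ycal_i}$ traces out exactly the subspace $rowspan\{\Lb_{\ycal_i}\}$, the minus sign being immaterial since a subspace is closed under scalar multiplication. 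Hence $\zb_{\xcal_i}\Lb_{\xcal_i} + \zb_{\ycal_i}\Lb_{\ycal_i} \neq \mathbf{0}$ for every completion $\zb_{\ycal_i}$ if and only if $\zb_{\xcal_i}\Lb_{\xcal_i} \notin rowspan\{\Lb_{\ycal_i}\}$.

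Finally I would translate the weight constraint into a sparsity constraint on linear combinations: as $\zb_{\xcal_i}$ ranges over all coefficient vectors of weight between $1$ and $2\delta_s$, the products $\zb_{\xcal_i}\Lb_{\xcal_i}$ are precisely the nonzero linear combinations of $2\delta_s$ or fewer rows of $\Lb_{\xcal_i}$. Quantifying over these recovers exactly the corollary's requirement for user $i$ (and covers the degenerate case where the rows are dependent, since then $\zb_{\xcal_i}\Lb_{\xcal_i}=\mathbf{0} \in rowspan\{\Lb_{\ycal_i}\}$ forces invalidity, consistent with taking $\zb_{\ycal_i}=\mathbf{0}$ in Theorem~\ref{thm1}). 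Taking the conjunction over all $i \in [m]$ then completes the equivalence.

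I expect the only step needing genuine care to be the middle one: arguing cleanly that the free coordinates $\zb_{\ycal_i}$ sweep out the entire $rowspan\{\Lb_{\ycal_i}\}$, and that ``$\zb_{\xcal_i}\Lb_{\xcal_i}$ stays nonzero under every completion $\zb_{\ycal_i}$'' is precisely ``$\zb_{\xcal_i}\Lb_{\xcal_i}$ lies outside that row span.'' Everything else is bookkeeping of quantifiers and a restatement of the Hamming-weight bound as a bound on the number of rows involved in a linear combination.
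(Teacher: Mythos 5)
Your proof is correct and follows essentially the same route as the paper's: both decompose $\zb\Lb = \zb_{\xcal_i}\Lb_{\xcal_i} + \zb_{\ycal_i}\Lb_{\ycal_i}$ and observe that as $\zb_{\ycal_i}$ ranges freely, $-\zb_{\ycal_i}\Lb_{\ycal_i}$ sweeps out exactly $rowspan\{\Lb_{\ycal_i}\}$. Your version merely spells out the quantifier bookkeeping over the union defining $\ical$ and the degenerate case of dependent rows, which the paper leaves implicit in its ``immediately follows'' step.
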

\begin{proof} 
From Theorem~\ref{thm1}, ${\bf L}$ is a valid encoder matrix if and only if ${\bf zL} \neq 0$ whenever $1 \leq wt({\bf z}_{\xcal_i}) \leq 2\delta_s$ for some choice of $i \in [m]$. Since ${\bf zL} = \zb_{\xcal_i}\Lb_{\xcal_i} + \zb_{\ycal_i}\Lb_{\ycal_i}$, we have $\zb_{\xcal_i}\Lb_{\xcal_i} \neq -\zb_{\ycal_i}\Lb_{\ycal_i}$ for any \mbox{$\zb_{\xcal_i} \in \Fb_q^{|\xcal_i|} \setminus \left\{ {\bf 0} \right\}$} with $wt(\zb_{\xcal_i}) \leq 2\delta_s$ and for any \mbox{$\zb_{\ycal_i} \in \Fb_q^{|\ycal_i|}$}. The corollary then immediately follows from this observation.
\end{proof}

\begin{corollary} \label{corr2} 
If $\Lb$ is a valid encoder matrix for the $(m,n,\xcal,\delta_s)$-BNSI problem then any $2\delta_s$ or fewer rows of $\Lb_{\xcal_i}$ are linearly independent for every $i \in [m]$.
\end{corollary}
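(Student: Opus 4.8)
The plan is to prove the statement by contradiction, leaning directly on Corollary~\ref{corr1}. Suppose $\Lb$ is a valid encoder matrix but, for some $i \in [m]$, there is a set of $t \leq 2\delta_s$ rows of $\Lb_{\xcal_i}$ that are linearly dependent. The first step is to extract a coefficient vector witnessing this dependence: there exist scalars, not all zero, whose corresponding combination of these $t$ rows equals the zero vector. Encoding these coefficients as a vector $\zb_{\xcal_i} \in \Fb_q^{|\xcal_i|}$ supported on the chosen rows, I obtain $\zb_{\xcal_i} \neq \mathbf{0}$ with $wt(\zb_{\xcal_i}) \leq t \leq 2\delta_s$ and $\zb_{\xcal_i}\Lb_{\xcal_i} = \mathbf{0}$.

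The key observation is that $\mathbf{0} \in rowspan\{\Lb_{\ycal_i}\}$ trivially. Hence $\zb_{\xcal_i}\Lb_{\xcal_i}$ is a non-zero linear combination of at most $2\delta_s$ rows of $\Lb_{\xcal_i}$ that equals $\mathbf{0}$, and this vector lies in $rowspan\{\Lb_{\ycal_i}\}$. This contradicts Corollary~\ref{corr1}, which asserts that no such combination can belong to $rowspan\{\Lb_{\ycal_i}\}$. Therefore no collection of $2\delta_s$ or fewer rows of $\Lb_{\xcal_i}$ can be dependent, which is the claim. Equivalently, I could argue straight from Theorem~\ref{thm1}: extending the same $\zb_{\xcal_i}$ by $\zb_{\ycal_i} = \mathbf{0}$ yields a vector $\zb \in \ical(q,m,n,\xcal,\delta_s)$ with $\zb\Lb = \zb_{\xcal_i}\Lb_{\xcal_i} + \zb_{\ycal_i}\Lb_{\ycal_i} = \mathbf{0}$, contradicting $\zb\Lb \neq \mathbf{0}$ for all $\zb \in \ical$.

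There is essentially no hard step here; the corollary is an immediate weakening of the necessary-and-sufficient condition, obtained by specializing to $\zb_{\ycal_i} = \mathbf{0}$ (equivalently, by noting that the zero vector always lies in the relevant row span). The only point I would state carefully is the reading of ``non-zero linear combination'' in Corollary~\ref{corr1} as a combination with at least one non-zero coefficient; once that is fixed, linear independence of the rows is precisely the assertion that such a combination is never the zero vector. Finally, I would remark that the converse of Corollary~\ref{corr2} fails in general, so linear independence of small subsets of $\Lb_{\xcal_i}$ is only a necessary, not a sufficient, condition on $\Lb$.
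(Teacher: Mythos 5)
Your proof is correct and follows essentially the same route as the paper: both arguments rest on Corollary~\ref{corr1} together with the observation that $\mathbf{0} \in rowspan\{\Lb_{\ycal_i}\}$, with your contradiction framing being merely the contrapositive of the paper's direct statement. Your alternative derivation straight from Theorem~\ref{thm1} via $\zb_{\ycal_i} = \mathbf{0}$ is a fine equivalent reformulation, not a genuinely different method.
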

\begin{proof}
From Corollary~\ref{corr1}, if ${\bf L}$ is a valid encoder matrix any non-zero linear combination of $2\delta_s$ or fewer rows of ${\bf L}_{\xcal_i}$ is not in $rowspan\{\Lb_{\ycal_i}\}$. Since ${\bf 0} \in rowspan\{\Lb_{\ycal_i}\}$, $\zb_{\xcal_i}\Lb_{\xcal_i} \neq {\bf 0}$ if $1 \leq wt(\zb_{\xcal_i}) \leq 2\delta_s$.
Therefore, any $2\delta_s$ or fewer rows of $\Lb_{\xcal_i}$ must be linearly independent.
\end{proof}

\section{Syndrome Decoding} \label{sec4}

We now propose a decoding procedure for linear coding schemes for an arbitrary $(m,n,\xcal,\delta_s)$-BNSI problem which uses similar concept of \textit{syndrome decoding} for linear error correcting codes. Consider a code for $(m,n,\xcal,\delta_s)$-BNSI problem generated by a valid encoder matrix $\Lb \in \Fb_q^{n \times N}$. 
The user $u_i, i \in [m]$ receives the codeword 
$\xb\Lb=\xb_{\xcal_i}\Lb_{\xcal_i}+\xb_{\ycal_i}\Lb_{\ycal_i}$
and also possesses erroneous demanded information symbol vector $\xb_{\xcal_i}^{e}=\xb_{\xcal_i}+ \ei$, where $\ei \in \Fb_q^{|\xcal_i|}$ and $wt(\ei)\leq \delta_s$.
 
Considering user $u_i$, suppose \mbox{$\beta_i \subseteq \ycal_i$} denotes the index set of the rows of $\Lb$ that form a basis for $rowspan\{\Lb_{\ycal_i}\}$. Therefore $rowspan\{\Lb_{\beta_i}\}$ = $rowspan\{\Lb_{\ycal_i}\}$, and $\Lb_{\beta_i}$ has linearly independent rows. So we can write $\xb_{\ycal_i}\Lb_{\ycal_i}=\bb\Lb_{\beta_i}$ for some  $\bb \in \Fb_q^{|\beta_i|}$. 
Hence, the received codeword is $\cb = \xb\Lb=\xb_{\xcal_i}\Lb_{\xcal_i} + \bb\Lb_{\beta_i}$. Note that $\bb\Lb_{\beta_i}$ is the interference at receiver $u_i$ due to the undesired messages $\xb_{\ycal_i}$.
Regarding $rowspan\{\Lb_{\beta_i}\}$ as a linear code of length $N$ and dimension $|\beta_i|$ over $\Fb_q$, let \mbox{$\Hb_i \in \Fb_q^{(N-|\beta_i|) \times N}$} be a parity check matrix of $rowspan\{\Lb_{\beta_i}\}$. Since $\Lb_{\beta_i}$ is a generator matrix of this code, we have $\Hb_i\Lb^T_{\beta_i}={\bf 0}$.

The syndrome decoder at $u_i$ functions as follows. Given the codeword $\cb$ and the noisy side information $\xb_{\xcal_i}^{e}$, the receiver first computes 
%% Therefore we will have the following,
\begin{align*}
\yb' &= \xb_{\xcal_i}^{e}\Lb_{\xcal_i} - \cb     
= (\xb_{\xcal_i}+ \ei)\Lb_{\xcal_i} - (\xb_{\xcal_i}\Lb_{\xcal_i}+ \bb\Lb_{\beta_i})\\ 
&= \ei\Lb_{\xcal_i} - \bb\Lb_{\beta_i}.   
\end{align*}  
In order to remove the interference from $\bb$, the receiver multiplies ${\yb'}^T$ with $\Hb_i$ to obtain the \emph{syndrome}
\begin{equation*}
\pmb{b}_i^T = \Hb_i {\yb'}^T= \Hb_i {\Lb_{\xcal_i}}^T {\ei}^T - \Hb_i\Lb^T_{\beta_i} \bb^T = \Hb_i {\Lb_{\xcal_i}}^T {\ei}^T.
\end{equation*} 
Defining \mbox{$\Ab_i=\Hb_i {\Lb_{\xcal_i}}^T$}, we have $\Ab_i \ei^T=\pmb{b}^T_i$. Given the syndrome $\pmb{b}_i$ and the matrix $\Ab_i$, the receiver must identify the error vector $\ei$. We now show that $\pmb{b}_i=\Ab_i \ei^T$ uniquely determines $\ei$ provided $wt(\ei) \leq \delta_s$.

\begin{lemma}  \label{lmm11}
If $\Lb$ is a valid encoder matrix and $\ei$ and $\ei'$ are distinct vectors in $\Fb_q^{|\xcal_i|}$ each with Hamming weight at the most $\delta_s$, then $\Ab_i\ei^T \neq \Ab_i\ei'^T$.
\end{lemma}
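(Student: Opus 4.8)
The plan is to reduce the claimed inequality to the non-vanishing condition of Theorem~\ref{thm1}, as restated in Corollary~\ref{corr1}. Since the map $\eb \mapsto \Ab_i \eb^T$ is $\Fb_q$-linear, the inequality $\Ab_i\ei^T \neq \Ab_i\ei'^T$ is equivalent to $\Ab_i(\ei - \ei')^T \neq {\bf 0}$. So I would set $\zb_{\xcal_i} = \ei - \ei'$ and observe that, because $\ei \neq \ei'$ and both have Hamming weight at most $\delta_s$, the vector $\zb_{\xcal_i}$ is non-zero and satisfies $wt(\zb_{\xcal_i}) \leq 2\delta_s$. This places $\zb_{\xcal_i}$ exactly in the regime governed by the design criterion.

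Next I would unwind the definition $\Ab_i = \Hb_i \Lb_{\xcal_i}^T$ to rewrite
$\Ab_i(\ei - \ei')^T = \Hb_i \Lb_{\xcal_i}^T \zb_{\xcal_i}^T = \Hb_i (\zb_{\xcal_i}\Lb_{\xcal_i})^T$.
The key structural fact here is that $\Hb_i$ is a parity check matrix for the code $rowspan\{\Lb_{\beta_i}\} = rowspan\{\Lb_{\ycal_i}\}$, so $\Hb_i {\bf v}^T = {\bf 0}$ holds if and only if ${\bf v} \in rowspan\{\Lb_{\ycal_i}\}$. Thus the syndrome $\Hb_i(\zb_{\xcal_i}\Lb_{\xcal_i})^T$ vanishes precisely when $\zb_{\xcal_i}\Lb_{\xcal_i}$ lies in $rowspan\{\Lb_{\ycal_i}\}$.

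Finally I would invoke the validity of $\Lb$. The vector $\zb_{\xcal_i}\Lb_{\xcal_i}$ is a linear combination of the rows of $\Lb_{\xcal_i}$ whose coefficient vector $\zb_{\xcal_i}$ is non-zero and supported on at most $2\delta_s$ coordinates; by Corollary~\ref{corr2} such a combination is itself non-zero, and by Corollary~\ref{corr1} it cannot belong to $rowspan\{\Lb_{\ycal_i}\}$. Hence the syndrome is non-zero, giving $\Ab_i(\ei - \ei')^T \neq {\bf 0}$ and therefore $\Ab_i\ei^T \neq \Ab_i\ei'^T$.

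I do not anticipate a serious obstacle here; the only points requiring care are the weight bookkeeping — verifying that the difference of two weight-$\leq\delta_s$ vectors lands in the weight-$\leq 2\delta_s$ regime on which the design criterion operates — and correctly identifying the kernel of $\Hb_i$ with $rowspan\{\Lb_{\ycal_i}\}$, so that Corollary~\ref{corr1} can be applied verbatim.
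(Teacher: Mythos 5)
Your proof is correct and takes essentially the same route as the paper's: both reduce the claim, via linearity and the parity-check identity (i.e., $\Hb_i {\bf v}^T = {\bf 0}$ iff ${\bf v} \in rowspan\{\Lb_{\beta_i}\} = rowspan\{\Lb_{\ycal_i}\}$), to showing $(\ei - \ei')\Lb_{\xcal_i} \notin rowspan\{\Lb_{\ycal_i}\}$ for a non-zero difference of weight at most $2\delta_s$, which is exactly Corollary~\ref{corr1}; the paper merely phrases this as a contradiction while you argue directly. Your additional appeal to Corollary~\ref{corr2} is redundant (though harmless), since Corollary~\ref{corr1} already excludes the zero vector because ${\bf 0} \in rowspan\{\Lb_{\ycal_i}\}$.
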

\begin{proof}
Proof by contradiction.
Suppose $\exists$ $\ei,\ei' \in \Fb_q^{|\xcal_i|}$ such that \mbox{$\ei \neq \ei'$} and \mbox{$wt(\ei) \leq \delta_s$}, \mbox{$wt(\ei')\leq \delta_s$} which satisfies \mbox{$\Ab_i \ei^T=\Ab_i \ei'^T$}.
Then we have
\begin{align*}
%% \pmb{b}_i &= \pmb{b}'_i\\
&\Hb_i {\Lb_{\xcal_i}}^T \ei^T = \Hb_i {\Lb_{\xcal_i}}^T \ei'^T\\
\Rightarrow ~&\Hb_i ((\ei-\ei')\Lb_{\xcal_i})^T = \pmb{0}\\
\Rightarrow ~&(\ei-\ei')\Lb_{\xcal_i} \in rowspan\{\Lb_{\beta_i}\} = rowspan\{\Lb_{\ycal_i}\}.
%% (\ei-\ei')\Lb_{\xcal_i} &\in rowspan\{\Lb_{\ycal_i}\}
\end{align*}
Assuming \mbox{$\ei-\ei'=\ei''$}, we have \mbox{$1 \leq wt(\ei'') \leq 2\delta_s$} and \mbox{$\ei''\Lb_{\xcal_i} \in rowspan\{\Lb_{\ycal_i}\}$}. This implies that there exists a non-zero linear combination of $2\delta_s$ or fewer rows of $\Lb_{\xcal_i}$ that belongs to $rowspan\{\Lb_{\ycal_i}\}$ which contradicts the necessary and sufficient criterion (Corollary~\ref{corr1}) for $\Lb$ to be a valid encoder matrix. 
\end{proof}

Now Lemma~\ref{lmm11} leads us to the following syndrome decoding procedure:
given the received codeword $\cb$ and side information $\xb_{\xcal_{i}}^{e}$, the receiver first computes the syndrome $\pmb{b}_i^T = \Hb_i (\xb_{\xcal_i}^e\Lb_{\xcal_i} -\cb )^T$, and then identifies, either by exhaustive search or by using a look up table, the unique vector $\ebh \in \Fb_q^{|\xcal_i|}$ of weight at the most $\delta_s$ that satisfies $\Ab_i \ebh^T = \pmb{b}_i^T$.
If the Hamming weight of the noise $\eb_i$ is at the most $\delta_s$, then the estimate $\ebh$ equals $\ei$, and the receiver retrieves the demanded message through $\xb_{\xcal_i} = \xb_{\xcal_i}^{e} - \ebh$. The algorithm for \emph{syndrome decoding} is given in Algorithm~1 which is valid for any $i \in [m]$. 
Similar to the syndrome decoding procedure of a general linear error correcting code, the proposed algorithm relies on an exhaustive search (or a look up table) to identify the unique solution of weight at the most $\delta_s$ to the linear equation $\Ab_i \ebh = \pmb{b}_i^T$. We are yet to address the problem of designing coding schemes that admit efficient low-complexity implementations of syndrome decoding. 

 \begin{algorithm}[t]
 \caption{Syndrome Decoding}
 \SetAlgoLined
 \textbf{Input: $\cb$, $\xb_{\xcal_i}^{e}$, $\Lb$, $\Hb_i$, $\Ab_i$}\\
 \textbf{Output}: An estimate $\ebh$ of the error vector $\ei$\\
 \textbf{Procedure}
  \begin{itemize}
  \item[] Step 1: Compute $\yb'= (\cb-\xb_{\xcal_i}^e\Lb_{\xcal_i})$ 
 \item[] Step 2: Compute syndrome $\pmb{b}_i^T = \Hb_i \yb'^T $
  \item[] Step 3: Calculate $\Ab_i \eb^T, \forall \eb \in \Fb_q^{|\xcal_i|}$ with $wt(\eb) \leq \delta_s$. Among these vectors identify a vector $\ebh$ that satisfies $\Ab_i \ebh^T=\pmb{b}_i^T$   
 \end{itemize}
\end{algorithm}

\begin{example} \label{exmp3}
We now consider syndrome decoding at user $u_1$ (i.e., $i=1$) for the BNSI problem of Example~\ref{exmp1} with the binary ($q=2$) encoder matrix $\Lb$ given in~\eqref{eq:exmp2} in Example~\ref{exmp2}. For $u_1$, we have $\xcal_1=\{1,2,3\}$, $\ycal_1=\{4\}$, $\Lb_{\xcal_1}={\bf I}_3$ and $\Lb_{\ycal_1}=(1~1~1)$. In this case, the rows indexed by $\beta_1=\ycal_1=\{4\}$ form a basis for $rowspan\{\Lb_{\ycal_1}\}$.
A parity check matrix for $rowspan\{\Lb_{\beta_1}\}$ is
\begin{equation*}
\Hb_1=
\begin{bmatrix}
1 & 0 & 1 \\
0 & 1 & 1
\end{bmatrix}.
\end{equation*}
The corresponding $\Ab_1$ matrix is $\Ab_1= \Hb_1 \Lb_{\xcal_1}^T = \Hb_1 \, {\bf I}_3 = \Hb_1$. The value of $\Ab_1 \eb^T$ for all possible $\eb$ of weight at the most $\delta_s=1$ is given in the following look up table.

\begin{center}
\renewcommand{\arraystretch}{1.25}
\begin{tabular}{c|c|c|c|c}
\hline
$\eb$ & (0 0 0) & (0 0 1) & (0 1 0) & (1 0 0) \\
\hline
$\Ab_1 \eb^T$ & (0 0)$^T$ & (1 1)$^T$ & (0 1)$^T$ & (1 0)$^T$ \\
\hline
\end{tabular}
\end{center}
Note that the syndrome $\Ab_1\eb^T$ is distinct for each possible error vector $\eb$.

Suppose $\xb= (1~0~0~1)$, i.e., the message vector demanded by $u_1$ is $\xb_{\xcal_1}=(1~0~0)$. The transmitter will transmit the codeword $\cb=\xb\Lb=$ (0 1 1). Suppose user $u_1$ has the erroneous demanded information symbol vector \mbox{$\xb_{\xcal_1}^{e}=$ (1 0 1)}, i.e., \mbox{$\eb_1=$ (0 0 1)}. User $u_1$ will calculate the syndrome $\pmb{b}_1^T=\Hb_1(\xb_{\xcal_1}^{e}\Lb_{\xcal_1} - \cb)^T = (1~1)^T$. Using the syndrome look up table, the decoder will output $\ebh=(0~0~1)$ as the estimated error vector. This is subtracted from $\xb_{\xcal_i}^{e}=(1~0~1)$ to obtain the estimate 
$(1~0~0)$ of the demanded message $\xb_{\xcal_1}$.
\end{example}

\section{Characterization of Networks with $N_{q,opt} < n$} \label{sec5}

We remarked in Section~\ref{sec2} that uncoded transmission $\Lb={\bf I}_n$ is a valid linear coding scheme where number of channel uses $N$ is equal to the length $n$ of the message vector. It is important to identify the subset of BNSI problems for which this uncoded transmission is optimal (i.e., $N_{q,opt}=n$), or equivalently, characterize the family of networks where linear coding provides strict gains over uncoded transmission (i.e., \mbox{$N_{q,opt} < n$}). 
This will allow us to identify the key structural properties of BNSI problems that lead to performance gains through network coding and will be helpful in conceiving systematic constructions of explicit encoder matrices.

\subsection{Preliminaries}

We now derive a few results based on which we formulate a necessary and sufficient condition for a BNSI problem to have $N_{q,opt}=n$.  

\begin{lemma} \label{lemma1}
If $\Lb$ is a valid encoder matrix for an $(m,n,\xcal,\delta_s)$-BNSI problem and if \mbox{$|\xcal_i| \leq 2\delta_s$}, for some \mbox{$i \in [m]$}, then \mbox{$rank(\Lb_{\xcal_i})=|\xcal_i|$}, $rowspan(\Lb_{\xcal_i}) \cap rowspan(\Lb_{\ycal_i}) =\{{\bf 0}\}$, and
$rank(\Lb) = rank(\Lb_{\xcal_i}) + rank(\Lb_{\ycal_i}) = |\xcal_i| + rank(\Lb_{\ycal_i})$.
\end{lemma}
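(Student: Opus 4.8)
The plan is to derive all three conclusions directly from Corollaries~\ref{corr1} and~\ref{corr2}, exploiting the hypothesis $|\xcal_i| \leq 2\delta_s$. The key observation that drives everything is that when $\Lb_{\xcal_i}$ has only $|\xcal_i| \leq 2\delta_s$ rows, \emph{every} non-zero linear combination of its rows automatically qualifies as a combination of ``$2\delta_s$ or fewer rows,'' so the quantified statements in the two corollaries apply to arbitrary non-zero coefficient vectors $\zb_{\xcal_i}$.

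I would first establish the rank claim. Since $\Lb_{\xcal_i}$ has exactly $|\xcal_i| \leq 2\delta_s$ rows, Corollary~\ref{corr2} applies to the complete set of its rows and asserts that all $|\xcal_i|$ of them are linearly independent. Hence $rank(\Lb_{\xcal_i}) = |\xcal_i|$. Next I would prove the trivial-intersection claim by contradiction: suppose $\vb \in rowspan\{\Lb_{\xcal_i}\} \cap rowspan\{\Lb_{\ycal_i}\}$ with $\vb \neq \mathbf{0}$, and write $\vb = \zb_{\xcal_i}\Lb_{\xcal_i}$ for some $\zb_{\xcal_i} \in \Fb_q^{|\xcal_i|}$. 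Because $\vb \neq \mathbf{0}$ the coefficient vector is non-zero, and its weight obeys $wt(\zb_{\xcal_i}) \leq |\xcal_i| \leq 2\delta_s$. Thus $\vb$ is a non-zero linear combination of $2\delta_s$ or fewer rows of $\Lb_{\xcal_i}$ that nonetheless lies in $rowspan\{\Lb_{\ycal_i}\}$, directly contradicting Corollary~\ref{corr1}. Therefore $rowspan\{\Lb_{\xcal_i}\} \cap rowspan\{\Lb_{\ycal_i}\} = \{\mathbf{0}\}$.

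Finally, for the rank decomposition I would note that the rows of $\Lb$ split, as an index partition, into those indexed by $\xcal_i$ and those indexed by $\ycal_i = [n] \setminus \xcal_i$, so $rowspan\{\Lb\} = rowspan\{\Lb_{\xcal_i}\} + rowspan\{\Lb_{\ycal_i}\}$. The dimension formula $\dim(U+W) = \dim U + \dim W - \dim(U \cap W)$, combined with the trivial intersection just proved, yields $rank(\Lb) = rank(\Lb_{\xcal_i}) + rank(\Lb_{\ycal_i})$; substituting the first claim $rank(\Lb_{\xcal_i}) = |\xcal_i|$ gives the stated form $|\xcal_i| + rank(\Lb_{\ycal_i})$.

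I do not anticipate a genuine obstacle here, since each piece is a short consequence of the already-established corollaries. The only point requiring care is the weight bookkeeping in the second step: one must explicitly invoke $|\xcal_i| \leq 2\delta_s$ to guarantee that the coefficient vector $\zb_{\xcal_i}$ has weight at most $2\delta_s$, so that Corollary~\ref{corr1} is applicable to it. This length hypothesis is exactly what upgrades the ``$2\delta_s$ or fewer rows'' guarantees into statements about all non-zero row combinations of $\Lb_{\xcal_i}$, and it is what makes the decomposition collapse to a direct sum.
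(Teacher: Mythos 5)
Your proposal is correct and follows essentially the same route as the paper, whose proof is a one-sentence sketch stating that the lemma follows from Corollaries~\ref{corr1} and~\ref{corr2} together with the fact that $\Lb_{\xcal_i}$ has at most $2\delta_s$ rows; you have simply spelled out the details (full-row independence via Corollary~\ref{corr2}, trivial intersection via Corollary~\ref{corr1}, and the dimension formula) that the paper leaves implicit.
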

\begin{proof}
Follows immediately from Corollaries~\ref{corr1} and~\ref{corr2} using the fact that number of rows of $\Lb_{\xcal_i}$ is not more than $2\delta_s$, and using the observation that the rows of $\Lb_{\xcal_i}$ are linearly independent and their span intersects trivially with the span of the rows of $\Lb_{\ycal_i}$.
\end{proof}

From Lemma~\ref{lemma1}, if $|\xcal_i| \leq 2\delta_s$, the rows of $\Lb$ corresponding to the message vectors $\xb_{\xcal_i}$ and those corresponding to $\xb_{\ycal_i}$ are linearly independent. Hence, when encoded using $\Lb$ the message symbols $\xb_{\ycal_i}$ do not interfere with the detection of the symbols in $\xb_{\xcal_i}$.

\subsubsection{Subproblems of a given BNSI problem} \label{sec5_subproblem}

Let $(m,n,\xcal,\delta_s)$ be any given BNSI problem.
Consider the $(m',n',\xcal',\delta_s)$-BNSI problem derived from $(m,n,\xcal,\delta_s)$ by removing the symbols $\xb_{\xcal_i}$, for some $i \in [m]$, from the demands of all the receivers. The derived problem has \mbox{$m'=m-1$} users one corresponding to each $j \in [m] \setminus \{i\}$. The demand of the $j^{\text{th}}$ user is $\xcal'_j = \xcal_j \setminus \xcal_i = \xcal_j \cap \ycal_i$, and $\xcal'=(\xcal'_j, j \in [m] \setminus \{i\})$.
The vector of information symbols for the new problem is $\xb_{[n]\setminus \xcal_i} = \xb_{\ycal_i}$, and the number of message symbols is \mbox{$n'=n-|\xcal_i|$}. The bipartite graph $\bcal'=(\ucal',\pcal',\ecal')$ for the derived problem will consist of the user-set $\ucal'=\ucal \setminus \{u_i\}$, information symbol-set $\pcal'=\{x_k | k \notin \xcal_i\}$, and edge set $\ecal' = \{ \, \{u_j,x_k\} \in \ecal \,| \, k \notin \xcal_i \,\}$. Note that $\bcal'$ is the subgraph of $\bcal$ induced by the nodes $\{x_k | k \notin \xcal_i\}$. 

\begin{lemma} \label{lemma_subproblem}
If $\Lb$ is a valid encoder matrix for $(m,n,\xcal,\delta_s)$, then $\Lb_{\ycal_i}$ is a valid encoder matrix for $(m',n',\xcal',\delta_s)$.
\end{lemma}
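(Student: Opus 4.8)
The plan is to verify the criterion of Theorem~\ref{thm1} directly for the matrix $\Lb_{\ycal_i}$ and the derived problem $(m',n',\xcal',\delta_s)$, by lifting each relevant test vector back to the original problem where the validity of $\Lb$ can be invoked. The first thing I would fix is the bookkeeping of indices: the information symbols of the subproblem are those indexed by $\ycal_i$, so a vector $\zb$ for the subproblem lives in $\Fb_q^{n'}$ with coordinates labelled by $\ycal_i$, and $\Lb_{\ycal_i}$ has exactly $n' = |\ycal_i|$ rows, indexed the same way. Thus the product $\zb\Lb_{\ycal_i}$ is well defined, and by Theorem~\ref{thm1} it suffices to show that $\zb\Lb_{\ycal_i} \neq {\bf 0}$ for every $\zb \in \ical(q,m',n',\xcal',\delta_s)$, i.e., for every $\zb$ (indexed by $\ycal_i$) satisfying $1 \leq wt(\zb_{\xcal'_j}) \leq 2\delta_s$ for some $j \in [m]\setminus\{i\}$, where $\xcal'_j = \xcal_j \cap \ycal_i$.

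Given such a $\zb$, I would define the lifted vector $\tilde{\zb} \in \Fb_q^n$ by zero-padding on the removed coordinates: $\tilde{\zb}_{\ycal_i} = \zb$ and $\tilde{\zb}_{\xcal_i} = {\bf 0}$. The computation $\tilde{\zb}\Lb = \tilde{\zb}_{\xcal_i}\Lb_{\xcal_i} + \tilde{\zb}_{\ycal_i}\Lb_{\ycal_i} = \zb\Lb_{\ycal_i}$ shows that the two products coincide, so it is enough to establish $\tilde{\zb}\Lb \neq {\bf 0}$.

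The key step is then to check that $\tilde{\zb}$ lies in $\ical(q,m,n,\xcal,\delta_s)$, after which the assumed validity of $\Lb$ together with Theorem~\ref{thm1} gives $\tilde{\zb}\Lb \neq {\bf 0}$ immediately, hence $\zb\Lb_{\ycal_i} \neq {\bf 0}$, as required. For this I would restrict $\tilde{\zb}$ to $\xcal_j$ for the same index $j$: since $\xcal_j = (\xcal_j \cap \xcal_i) \cup (\xcal_j \cap \ycal_i)$ is a disjoint union and $\tilde{\zb}$ vanishes on $\xcal_i \supseteq \xcal_j \cap \xcal_i$, one has $wt(\tilde{\zb}_{\xcal_j}) = wt(\tilde{\zb}_{\xcal_j \cap \ycal_i}) = wt(\zb_{\xcal'_j})$, which satisfies $1 \leq wt(\zb_{\xcal'_j}) \leq 2\delta_s$ by hypothesis. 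This exhibits the index $j \in [m]$ witnessing $\tilde{\zb} \in \ical(q,m,n,\xcal,\delta_s)$ and closes the argument.

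This proof is essentially a bookkeeping exercise, and the only point requiring genuine care — what I would treat as the main (though mild) obstacle — is keeping the index sets straight: the subproblem vectors live on the coordinate set $\ycal_i$ rather than on a fresh set $[n']$, and one must confirm that zero-padding on $\xcal_i$ preserves the Hamming weight restricted to $\xcal_j$ precisely because $\xcal_j \setminus \xcal'_j \subseteq \xcal_i$. Demands with $\xcal'_j = \emptyset$ contribute no vectors to $\ical(q,m',n',\xcal',\delta_s)$ and so need no separate treatment.
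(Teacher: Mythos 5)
Your proof is correct and takes essentially the same approach as the paper: the paper verifies the same restriction property via Corollary~\ref{corr1}, noting that $\Lb_{\xcal'_j}$ is a submatrix of $\Lb_{\xcal_j}$ and $rowspan\{\Lb_{\ycal'_j}\} \subset rowspan\{\Lb_{\ycal_j}\}$, which is just the rowspan rephrasing of your zero-padding lift of $\zb$ into $\ical(q,m,n,\xcal,\delta_s)$. Your bookkeeping --- in particular the observation that $\xcal_j \setminus \xcal'_j \subseteq \xcal_i$ preserves the weight, and that users with $\xcal'_j = \emptyset$ contribute no test vectors --- is sound and complete.
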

\begin{proof}
For any $j \in [m] \setminus \{i\}$ we have, \mbox{$\ycal'_j = \ycal_j \cap \ycal_i \subset \ycal_j$} and $\xcal'_j = \xcal_j \cap \ycal_i \subset \xcal_j$. From Corollary~\ref{corr1}, any non-zero linear combination of $2\delta_s$ or fewer rows of $\Lb_{\xcal_j}$ does not belong to $rowspan\{\Lb_{\ycal_j}\}$.
Since $rowspan\{\Lb_{\ycal'_j}\} \subset rowspan\{\Lb_{\ycal_j}\}$ and $\Lb_{\xcal'_j}$ is a submatrix of $\Lb_{\xcal_j}$, we deduce that any non-zero linear combination of $2\delta_s$ or fewer rows of $\Lb_{\xcal'_j}$ is not in $rowspan\{\Lb_{\ycal'_j}\}$. Lemma~\ref{lemma_subproblem} then follows from Corollary~\ref{corr1}.\end{proof}

\subsubsection{A simple coding scheme for a family of BNSI problems} \label{sec:simple_coding}

Consider any BNSI problem $(m,n,\xcal,\delta_s)$ where $|\xcal_i| \geq 2\delta_s + 1$ for all $i \in [m]$, i.e., $|\ycal_i| \leq n - 2\delta_s - 1$. We will now provide a simple coding scheme with $N=n-1$ for any such problem. Let $\Lb \in \Fb_q^{n \times (n-1)}$ be such that its first $(n-1)$ rows form the identity matrix ${\bf I}_{n-1}$ and the last row is the all-one vector ${\bf 1}=(1~1~\cdots~1) \in \Fb_q^{(n-1)}$. Observe that any $(n-1)$ rows of $\Lb$ are linearly independent. We now show that $\Lb$ satisfies the condition in Theorem~\ref{thm1}. For any $\zb \in \ical$, there exists an $i \in [m]$ such that $wt(\zb_{\xcal_i}) \leq 2\delta_s$. 
Using \mbox{$|\ycal_i| \leq n-2\delta_s-1$},
\begin{align*}
wt(\zb) &= wt(\zb_{\xcal_i}) + wt(\zb_{\ycal_i}) 
        \leq 2\delta_s + n - 2\delta_s - 1 = n-1.
\end{align*} 
Since any $(n-1)$ rows of $\Lb$ are linearly independent, $\zb\Lb \neq {\bf 0}$. This proves that $\Lb$ is a valid encoder matrix for this problem.
We do not claim that this scheme is optimal, however, this scheme is useful in proving the main result of this section. The linear code in Example~\ref{exmp2} is an instance of this coding scheme.

\subsection{Characterization of networks with $N_{q,opt} < n$} \label{sec5b}

Suppose a bipartite graph \mbox{$\mathcal{B}=(\mathcal{U},\mathcal{P},\mathcal{E})$} represents an $(m,n,\xcal,\delta_s)$-BNSI problem.
We now define a collection $\Phi(\bcal)$ of subsets of information symbol indices. A non-empty set $\cs \subset [n]$ belongs to $\Phi(\bcal)$ if and only if the subgraph $\bcal'=(\ucal',\pcal',\ecal')$ of $\bcal$ induced by the packet nodes $\pcal_{\cs}=\{x_k \, | \, k \in \cs\}$ has the following property:
$deg(u)\geq 2\delta_s+1$ for all $u \in \ucal'$,  %% \label{equ11}  
where $deg(u)$ is the number of edges incident on the vertex $u$. 
Equivalently, $\Phi(\mathcal{B})$ is the collection of all non-empty $\cs \subset [n]$ such that 
\begin{equation*}
\text{for every } i \in [m],~|\xcal_i \cap \cs| \notin [2\delta_s],\text{i.e., either}~|\xcal_i \cap \cs|=0~\text{or}~|\xcal_i \cap \cs|\geq 2\delta_s+1.
\end{equation*} 

\begin{lemma} \label{lem:suff}
If $\Phi(\bcal)$ is empty, i.e, $\Phi(\bcal)=\phi$, then $N_{q,opt}=n$.
\end{lemma}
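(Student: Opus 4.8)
The plan is to prove the contrapositive statement: if the problem admits a valid linear code of length strictly less than $n$, then $\Phi(\bcal)$ is non-empty. Together with the trivial upper bound $N_{q,opt} \le n$ from~\eqref{eq:trivial}, this immediately gives $N_{q,opt} = n$ whenever $\Phi(\bcal) = \phi$, which is the claim.

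So I would start by assuming there is a valid encoder matrix $\Lb \in \Fb_q^{n \times N}$ with $N < n$. Because $\Lb$ has only $N$ columns, $rank(\Lb) \le N < n$, so its $n$ rows are linearly dependent over $\Fb_q$, and hence there exists a non-zero $\zb \in \Fb_q^n$ with $\zb\Lb = \mathbf{0}$. Invoking Theorem~\ref{thm1}, which guarantees $\zb'\Lb \neq \mathbf{0}$ for every $\zb' \in \ical$ when $\Lb$ is valid, I can conclude that this particular $\zb$ does not lie in $\ical$.

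The final step I would carry out is to extract a member of $\Phi(\bcal)$ from $\zb$. Setting $\cs = \mathrm{supp}(\zb)$, which is non-empty since $\zb \neq \mathbf{0}$, the restriction $\zb_{\xcal_i}$ has exactly $wt(\zb_{\xcal_i}) = |\xcal_i \cap \cs|$ non-zero coordinates for each $i$. By the definition of $\ical$ in~\eqref{ical_def}, the fact that $\zb \notin \ical$ says precisely that $wt(\zb_{\xcal_i}) \notin [2\delta_s]$ for every $i \in [m]$; equivalently $|\xcal_i \cap \cs| = 0$ or $|\xcal_i \cap \cs| \ge 2\delta_s + 1$ for all $i$. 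This is exactly the defining condition for $\cs$ to lie in $\Phi(\bcal)$, so $\Phi(\bcal) \neq \phi$ and the contrapositive is established.

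I expect the only point requiring care to be this translation between the Hamming-weight condition on $\zb$ and the set-intersection condition defining $\Phi(\bcal)$, and in particular verifying that the support $\cs$ is an admissible index set: since $\zb$ may have full support, I would make sure the definition of $\Phi(\bcal)$ permits $\cs = [n]$, in which case its condition reduces to $|\xcal_i| \notin [2\delta_s]$ for all $i$. The remainder is routine linear algebra, so I do not anticipate any deeper obstacle.
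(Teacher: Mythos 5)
Your proof is correct, but it takes a genuinely different and noticeably shorter route than the paper. The paper argues directly from $\Phi(\bcal)=\phi$: since $\cs=[n]$ cannot satisfy the defining condition, some user has $1 \le |\xcal_{i_1}| \le 2\delta_s$, and Lemma~\ref{lemma1} gives the rank splitting $rank(\Lb)=|\xcal_{i_1}|+rank(\Lb_{\ycal_{i_1}})$; it then iterates this peeling via the subproblem machinery of Lemma~\ref{lemma_subproblem} (noting that $\Phi$ of each induced subgraph stays empty) until at most $2\delta_s$ symbols remain, concluding $rank(\Lb)=n$ and hence $N \geq n$, with \eqref{eq:trivial} closing the equality. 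You instead prove the contrapositive in one step: a valid $\Lb$ with $N<n$ has a nontrivial left kernel vector $\zb$, Theorem~\ref{thm1} forces $\zb \notin \ical$, and the identity $wt(\zb_{\xcal_i}) = |\xcal_i \cap \mathrm{supp}(\zb)|$ translates $\zb \notin \ical$ into exactly the membership condition for $\mathrm{supp}(\zb) \in \Phi(\bcal)$ per \eqref{ical_def}. Your flagged edge case $\cs=[n]$ is indeed admissible: despite the notation $\cs \subset [n]$ in the definition, the paper's own proof chooses $\cs=[n]$, and its example exhibits $\{1,2,3,4\} \in \Phi(\bcal)$ with $n=4$, so the containment is meant non-strictly. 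What each approach buys: yours needs only Theorem~\ref{thm1} and elementary dimension counting, and it yields the sharper observation that the support of \emph{any} vector in the left null space of a valid encoder matrix is an element of $\Phi(\bcal)$; the paper's longer induction exercises Lemmas~\ref{lemma1} and~\ref{lemma_subproblem}, which are reused elsewhere (e.g., Lemma~\ref{subprob} and the lower-bound section), and its peeling structure directly mirrors and motivates Algorithm~2.
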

\begin{proof}
Let $\Lb$ be an optimal encoder matrix with \mbox{$N=N_{q,opt}$}. Since $\Phi(\mathcal{B})=\phi$, there doesn't exist any non-empty $\cs \subset [n]$ such that \mbox{$|\xcal_i \cap \cs| \notin [2\delta_s],~\forall i \in [m]$}. In particular, choosing \mbox{$\cs=[n]$} we deduce that there exists at least one user $u_{i_1}$, $i_1 \in [m]$ such that \mbox{$1 \leq |\xcal_{i_1}| \leq 2\delta_s$}. 
By Lemma~\ref{lemma1}, $rank(\Lb)=rank(\Lb_{\xcal_{i_1}}) + rank(\Lb_{\ycal_{i_1}}) = |\xcal_{i_1}| + rank(\ycal_{i_1})$.
Removing the information symbols $\xb_{\xcal_{i_1}}$ from the problem $(m,n,\xcal,\delta_s)$, we obtain a derived BNSI problem $(m^{(1)},n^{(1)},\xcal^{(1)},\delta_s)$ (see Lemma~\ref{lemma_subproblem}, Section~\ref{sec5_subproblem}) where $m^{(1)}=m-1$, $n^{(1)}=n-|\xcal_{i_1}|$ and $\xcal^{(1)}=(\xcal_j \cap \ycal_{i_1}, j \neq i_1)$. 
From Lemma~\ref{lemma_subproblem}, the matrix $\Lb^{(1)}=\Lb_{\ycal_{i_1}}$ is a valid encoder for this problem.
The bipartite graph $\bcal^{(1)}$ of the derived problem is a subgraph of $\bcal$. Since $\Phi(\bcal)$ is empty, it follows from the definition of $\Phi$ that $\Phi(\bcal^{(1)})$ is empty as well. Also, $rank(\Lb) = |\xcal_{i_1}| + rank(\Lb^{(1)})$.

Since $\Phi(\bcal^{(1)})$ is empty, the arguments used with the original problem $\bcal$ in the previous paragraph hold for the derived problem $\bcal^{(1)}$ as well. Hence, there exists an $i_2 \in [m] \setminus \{i_1\}$ such that $rank(\Lb^{(1)})$ = \mbox{$|\xcal_{i_2} \setminus \xcal_{i_1}|$} + $rank(\Lb_{\ycal_{i_1} \cap \ycal_{i_2}})$, and $\Lb^{(2)}=\Lb_{\ycal_{i_1} \cap \ycal_{i_2}}$ is a valid encoder matrix for the problem $(m^{(2)},n^{(2)},\xcal^{(2)},\delta_s)$ derived from $(m^{(1)},n^{(1)},\xcal^{(1)},\delta_s)$ by removing $\xb_{\xcal_{i_2} \setminus \xcal_{i_1}}$.
The bipartite graph $\bcal^{(2)}$ for this problem is a subgraph of $\bcal^{(1)}$, and hence, satisfies $\Phi(\bcal^{(2)}) = \phi$.
Note that $rank(\Lb) = |\xcal_{i_1}| + rank(\Lb^{(1)}) = |\xcal_{i_1}| + |\xcal_{i_2} \setminus \xcal_{i_1}| + rank(\Lb^{(2)}) = |\xcal_{i_1} \cup \xcal_{i_2}| + rank(\Lb^{(2)})$.

We will continue this process until the size of the information symbols-set is at the most $2\delta_s$. Say this will happen in $t^{\text{th}}$ iteration. Then, the matrix $\Lb^{(t)}=\Lb_{\ycal_{i_1} \cap \cdots \cap \ycal_{i_t}}$ is a valid encoder matrix for the $t^{\text{th}}$ derived BNSI problem $(m^{(t)},n^{(t)},\xcal^{(t)},\delta_s)$, and $rank(\Lb) = |\xcal_{i_1} \cup \cdots \cup \xcal_{i_t}| + rank(\Lb^{(t)})$.
Since $\Lb^{(t)}$ has at the most $2\delta_s$ rows, from Corollary~\ref{corr1}, all the rows of $\Lb^{(t)}$ are linearly independent, and hence, $rank(\Lb^{(t)})=|\ycal_{i_1} \cap \cdots \cap \ycal_{i_t}| = n - |\xcal_{i_1} \cup \cdots \cup \xcal_{i_t}|$.
It then follows that $rank(\Lb)=n$.
Thus, the number of columns $N_{q,opt}$ of $\Lb$ satisfies $N_{q,opt} \geq rank(\Lb) \geq n$. From~\eqref{eq:trivial}, we have $N_{q,opt} \leq n$ thereby proving that $N_{q,opt}=n$.
\end{proof}

We will now show that $N_{q,opt}=n$ only if $\Phi(\bcal) = \phi$. %%  only if $N_{q,opt}=n$. % there exists a valid linear coding scheme with $N = n-1$, and hence, $N_{q,opt} < n$.
\begin{lemma} \label{lem:necess}
If $\Phi(\bcal) \neq \phi$, then $N_{q,opt} < n$. 
\end{lemma}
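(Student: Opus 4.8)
The plan is to prove the contrapositive's natural direction directly: assuming $\Phi(\bcal) \neq \phi$, I will exhibit a valid encoder matrix $\Lb$ with strictly fewer than $n$ columns, which immediately gives $N_{q,opt} < n$. Since $\Phi(\bcal)$ is non-empty, fix some non-empty $\cs \subset [n]$ in $\Phi(\bcal)$, so that for every $i \in [m]$ we have $|\xcal_i \cap \cs| = 0$ or $|\xcal_i \cap \cs| \geq 2\delta_s + 1$. The key structural observation is that $\cs$ partitions the users into those that demand nothing from $\cs$ (where $\xcal_i \cap \cs = \emptyset$) and those whose restricted demand $\xcal_i \cap \cs$ is ``large'' (at least $2\delta_s + 1$ symbols). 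This is exactly the structure required by the simple coding scheme of Section~\ref{sec:simple_coding}, which compresses a block of symbols from length $|\cs|$ to length $|\cs| - 1$ whenever every active user demands at least $2\delta_s + 1$ of them.

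First I would construct $\Lb$ in block form by applying the ``identity-plus-all-ones'' idea of Section~\ref{sec:simple_coding} only to the coordinates in $\cs$, while leaving the remaining coordinates $[n] \setminus \cs$ encoded by an identity block. Concretely, after reordering indices so that $\cs$ comes first, I would take
\begin{equation*}
\Lb = \begin{bmatrix} \Lb_{\cs} & \pmb{0} \\ \pmb{0} & {\bf I}_{n - |\cs|} \end{bmatrix},
\end{equation*}
where $\Lb_{\cs} \in \Fb_q^{|\cs| \times (|\cs| - 1)}$ has its first $|\cs| - 1$ rows equal to ${\bf I}_{|\cs| - 1}$ and its last row equal to the all-one vector, exactly as in Section~\ref{sec:simple_coding}. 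This gives $N = n - 1 < n$ columns, so the only remaining task is to verify that $\Lb$ is a valid encoder matrix via Theorem~\ref{thm1}, i.e., that $\zb\Lb \neq \pmb{0}$ for every $\zb \in \ical$.

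Next I would verify the Theorem~\ref{thm1} condition by a case analysis on the witness user. Take any $\zb \in \ical$; by definition there is some $i \in [m]$ with $1 \leq wt(\zb_{\xcal_i}) \leq 2\delta_s$. The block structure of $\Lb$ means $\zb\Lb$ splits as $\zb_{\cs}\Lb_{\cs}$ on the first $|\cs|-1$ coordinates and $\zb_{[n]\setminus\cs}$ on the rest. If $\zb_{[n]\setminus\cs} \neq \pmb{0}$, then $\zb\Lb \neq \pmb{0}$ immediately because the identity block faithfully transmits those coordinates. The substantive case is $\zb_{[n]\setminus\cs} = \pmb{0}$, meaning $\zb$ is supported entirely on $\cs$; here I must show $\zb_{\cs}\Lb_{\cs} \neq \pmb{0}$. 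Since any $|\cs| - 1$ rows of $\Lb_{\cs}$ are linearly independent (as established in Section~\ref{sec:simple_coding}), it suffices to show $wt(\zb_{\cs}) \leq |\cs| - 1$, i.e., that $\zb_{\cs}$ is not the full-support vector.

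The crux of the argument — and the step I expect to be the main obstacle — is ruling out $wt(\zb_{\cs}) = |\cs|$ in this last case. Here I would use the defining property of $\cs \in \Phi(\bcal)$ together with the witnessing user $i$. Because $\zb$ is supported on $\cs$, we have $wt(\zb_{\xcal_i}) = wt(\zb_{\xcal_i \cap \cs})$, and this is at least $1$, so $\xcal_i \cap \cs \neq \emptyset$; by the $\Phi(\bcal)$ property this forces $|\xcal_i \cap \cs| \geq 2\delta_s + 1$. If $\zb_{\cs}$ had full support on $\cs$, then $wt(\zb_{\xcal_i}) = wt(\zb_{\xcal_i \cap \cs}) = |\xcal_i \cap \cs| \geq 2\delta_s + 1$, contradicting $wt(\zb_{\xcal_i}) \leq 2\delta_s$ from the assumption $\zb \in \ical$. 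Hence $\zb_{\cs}$ cannot have full support, so $wt(\zb_{\cs}) \leq |\cs| - 1$ and therefore $\zb_{\cs}\Lb_{\cs} \neq \pmb{0}$. This closes every case, showing $\Lb$ is valid and completing the proof that $N_{q,opt} \leq n - 1 < n$. I would take care to handle the boundary case $|\cs| = 1$ (where $\Lb_{\cs}$ would have zero columns) separately, noting that $|\cs| = 1$ cannot occur for a set in $\Phi(\bcal)$ when $\delta_s \geq 1$, since a single-element $\cs$ demanded by any user would give $|\xcal_i \cap \cs| = 1 \in [2\delta_s]$, violating membership in $\Phi(\bcal)$.
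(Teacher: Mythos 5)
Your proof is correct and takes essentially the same route as the paper: both fix a set $\cs \in \Phi(\bcal)$, transmit $\xb_{[n]\setminus\cs}$ uncoded, and apply the identity-plus-all-ones scheme of Section~\ref{sec:simple_coding} to $\xb_{\cs}$, yielding $N = n-1 < n$. The only difference is presentational --- you verify the composite block matrix directly against the criterion of Theorem~\ref{thm1} (via the observation that a $\zb \in \ical$ supported on $\cs$ cannot have full support, since the witness user satisfies $|\xcal_i \cap \cs| \geq 2\delta_s+1$), whereas the paper reaches the same weight bound $wt(\zb_{\cs}) \leq |\cs|-1$ by reducing to the induced subproblem $\bcal'$ and invoking the validity of the scheme already proved in Section~\ref{sec:simple_coding}.
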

\begin{proof}
Here we will provide a constructive proof where we will design a valid coding scheme with $N < n$.
Since $\Phi(\bcal)$ is non-empty, there exists a non-empty $\cs \subset [n]$ such that for each $i \in [m]$ either $|\xcal_i \cap \cs| \geq 2\delta_s + 1$ or $\xcal_i \cap \cs = \phi$. The proposed linear coding scheme partitions the transmit codeword $\cb$ into two parts $(\cb_1~\cb_2)$. The vector $\cb_1$ carries the symbols $\xb_{[n]\setminus \cs}$ uncoded, i.e., $\cb_1 = \xb_{[n] \setminus \cs}$. When $\cb_2$ is broadcast, we will assume all the receivers know the value of $\xb_{[n]\setminus \cs}$. Thus, the problem of designing the second part of the code transmission, wherein the symbols $\xb_{\cs}$ must be delivered to the receivers, is identical to the BNSI problem $\bcal'=(\ucal',\pcal',\ecal')$ with information symbol-set $\pcal'=\{ x_j \, | j \in \cs\}$, user-set $\ucal'=\{ u_i \, | \, \xcal_i \cap \cs \neq \phi\}$ and demands $\xcal' = (\xcal_i \cap \cs, \forall u_i \in \ucal')$. Since $\cs \in \Phi(\bcal)$, $\xcal_i \cap \cs \neq \phi$ implies $|\xcal_i \cap \cs| \geq 2\delta_s + 1$. Thus, the demand set of every receiver in the problem $\bcal'$ has cardinality at least $2\delta_s+1$. By using the coding scheme of Section~\ref{sec:simple_coding} for the problem $\bcal'$, we require a code length of $|\pcal'|-1=|\cs|-1$ for the vector $\cb_2$.
Hence, the codelength $N$ of the overall coding scheme is the sum of the lengths of $\cb_1$ and $\cb_2$, i.e., $N = n - |\cs| + |\cs| - 1 = n - 1$. We conclude that $N_{q,opt} < n$. 
\end{proof}

The main result of this section follows immediately from Lemmas~\ref{lem:suff} and~\ref{lem:necess}.

\begin{theorem} \label{thm2}
For an $(m,n,\xcal,\delta_s)$-BNSI problem represented by the bipartite graph $\bcal$, $N_{q,opt} = n$ if and only if $\Phi(\bcal) = \phi$.
\end{theorem}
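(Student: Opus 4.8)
The plan is to obtain the biconditional directly by gluing together the two preceding lemmas, with the trivial upper bound \eqref{eq:trivial} supplying the dichotomy that renders the two conditions complementary. Since $N_{q,opt} \leq n$ holds for every BNSI problem, exactly one of the two alternatives $N_{q,opt} = n$ and $N_{q,opt} < n$ is true; this observation is the only structural fact I need beyond the lemmas themselves.

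First I would dispose of the sufficiency direction: if $\Phi(\bcal) = \phi$ then $N_{q,opt} = n$, which is precisely the statement of Lemma~\ref{lem:suff}, so nothing further is required. For the necessity direction I would argue by contraposition. Lemma~\ref{lem:necess} asserts that $\Phi(\bcal) \neq \phi$ implies $N_{q,opt} < n$; taking the contrapositive and invoking the dichotomy above, the equality $N_{q,opt} = n$ (equivalently, the failure of $N_{q,opt} < n$) forces $\Phi(\bcal) = \phi$. Combining the two implications yields $N_{q,opt} = n \iff \Phi(\bcal) = \phi$, which is the assertion of Theorem~\ref{thm2}.

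At the level of the theorem itself there is no genuine obstacle; the logical combination is immediate once the dichotomy from \eqref{eq:trivial} is noted, and the only point warranting care is to make explicit that ``not strictly less than $n$'' coincides with ``equal to $n$,'' which is exactly what the trivial bound guarantees. All of the real difficulty has already been absorbed into the two lemmas, and if I were to identify the hard part it would lie there. Lemma~\ref{lem:suff} is the more delicate of the two: its proof peels off one user at a time, each having demand set of size at most $2\delta_s$, and uses the rank decomposition of Lemma~\ref{lemma1} together with the subproblem reduction of Lemma~\ref{lemma_subproblem} to show inductively that any valid $\Lb$ must satisfy $rank(\Lb) = n$, whence $N_{q,opt} \geq n$. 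Lemma~\ref{lem:necess} is comparatively direct, being a construction: one splits the codeword into an uncoded part carrying $\xb_{[n] \setminus \cs}$ and a coded part handling the induced subproblem on a witnessing set $\cs \in \Phi(\bcal)$, to which the length-$(|\cs|-1)$ scheme of Section~\ref{sec:simple_coding} applies, giving $N = n-1$.
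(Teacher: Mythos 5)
Your proposal is correct and matches the paper exactly: the paper likewise derives Theorem~\ref{thm2} immediately from Lemma~\ref{lem:suff} (sufficiency) and the contrapositive of Lemma~\ref{lem:necess} (necessity), with the trivial bound~\eqref{eq:trivial} ensuring that $N_{q,opt} = n$ and $N_{q,opt} < n$ are the only two alternatives. Your identification of where the real work lies --- the iterative rank argument in Lemma~\ref{lem:suff} and the split-codeword construction in Lemma~\ref{lem:necess} --- is also accurate.
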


\subsection{An algorithm to determine if $\Phi(\bcal)$ is empty} \label{sec5c}

We now propose a simple iterative procedure given in Algorithm~2 which determines whether $\Phi(\mathcal{B})=\phi$ for a given bipartite graph $\mathcal{B}=(\mathcal{U},\mathcal{P},\mathcal{E})$. The idea behind Algorithm~2 is to find $\mathsf{P}_{\cs} \subseteq \mathcal{P}$ for which each user-node in the subgraph induced by information symbol-set $\mathsf{P}_{\cs}$ has degree either $0$ or $2\delta_s+1$. The procedure in Algorithm~2 proceeds as follows 
\begin{itemize}
\item[$~$] Initialize $\mathsf{B}=(\mathsf{U},\mathsf{P},\mathsf{E})$, where $\mathsf{U}=\ucal$, $\mathsf{P}=\pcal$, $\mathsf{E}=\ecal$.
\item[1.] Check whether every user-node in $\mathsf{U}$ has degree at least $2\delta_s+1$ (It can not be $0$ because each user has non-empty demanded information symbol index set). If true, then $\{j \, | \, x_j \in \mathsf{P} \,\} \in \Phi(\mathcal{B})$ and $\Phi(\bcal)$ is non-empty. If false, proceed to Step~2.
\item[2.] Find a user-node $u_i$ with $1 \leq deg(u_i) \leq 2\delta_s$. Modify the graph $\mathsf{B}$ by removing the packet nodes $\{x_j \, | \, j \in \xcal_i\}$ and all the edges incident on these packet nodes. 
Then, remove any user node with zero degree.
If $|\mathsf{P}| \leq 2\delta_s$ declare $\Phi(\bcal)=\phi$, else go to Step~1. 
\end{itemize}
\begin{algorithm}[t]
\caption{Algorithm to determine if $\Phi(\mathcal{B})=\phi$}
\SetAlgoLined
\textbf{Input}: $\mathcal{B}=(\mathcal{U},\mathcal{P},\mathcal{E})$, $\delta_s$\\
\textbf{Output}: {\sf TRUE} if $\Phi(\mathcal{B})=\phi$, {\sf FALSE} otherwise and one element $\mathsf{C} \in \Phi(\bcal)$\\
 \% \% {Initialization:} \\
 $\mathsf{U} \leftarrow \mathcal{U}$,
 $\mathsf{P} \leftarrow \mathcal{P}$,
 $\mathsf{E} \leftarrow \mathcal{E}$,
 Bipartite graph $\mathsf{B}=(\mathsf{U},\mathsf{P},\mathsf{E})$ \\
 \% \% Iteration: \\
 \While{$|\mathsf{P}| > 2\delta_s $}{
   \eIf{$\forall u \in \mathsf{U},$ $deg(u)\geq 2\delta_s+1$}{
   $\mathsf{C} \leftarrow \{j~|~x_j \in \mathsf{P}\}$\\
   \texttt{output} {\sf FALSE}; \texttt{return};
   }{
   Find a $u_i \in \mathsf{U}$ such that $1 \leq deg(u_i) \leq 2\delta_s$ \\
   $\mathsf{P}$ $\leftarrow$ $\mathsf{P} \setminus \{ x_j \, | \, j  \in \xcal_i \}$ \\
   $\mathsf{E}$ $\leftarrow$ $\mathsf{E} \setminus \left\{ \, \{u_k,x_j\}\, |\,j \in \xcal_i \text{ and } \{u_k,x_j\} \in \mathsf{E} \right\}$ \\
   $\mathsf{U} \leftarrow \mathsf{U} \setminus \{ u_k \, | \, deg(u_k) = 0 \}$
  }
 }
\texttt{output} TRUE; \texttt{return};
\% \% $|\mathsf{P}| \leq 2\delta_s$, hence $\Phi =\phi$ 
\end{algorithm}

The correctness of the algorithm follows from the observation that the subgraph of $\mathsf{B}$ obtained in Step~2 by removing the packet nodes $\{x_j | j \in \xcal_i\}$ has non-empty $\Phi$ if and only if the set $\Phi(\mathsf{B})$ of the original graph $\mathsf{B}$ is itself non-empty. This is due to the fact that any member of $\Phi(\mathsf{B})$ will contain no elements from $\xcal_i$ since the degree of $u_i$ is at the most $2\delta_s$.

\begin{example}
Consider the scenario mentioned in Example \ref{exmp2}. Applying Algorithm 2 we will obtain \mbox{$\{1,2,3,4\} \in \Phi(\mathcal{B})$}, so \mbox{$N_{q,opt}<n$}. A valid encoding and decoding scheme over $\Fb_2$ with codelength $3$ for this scenario is given in Example \ref{exmp3}. If we consider the following scenario where \mbox{$n=5$}, \mbox{$m=4$}, \mbox{$\delta_s=1$}, \mbox{$\mathcal{X}_1=\{1,2,3,4\}$}, \mbox{$\mathcal{X}_2=\{4,5\}$}, \mbox{$\mathcal{X}_3=\{1,3,5\}$} and \mbox{$\mathcal{X}_4=\{1,2,4\}$}. Again applying Algorithm 1, we can conclude that for this scenario $\Phi(\mathcal{B})=\phi$, therefore $N_{q,opt}=n=5$
\end{example}
%%%%%%%%%%%%%%%%%%%%%%%%%%%%%%%%%%%%%%%%%%%%%%%
\section{Bounds on $N_{q,opt}$ and some code constructions}
Until now we have not described any systematic construction of an encoder matrix $\Lb$ or any methodology for calculating the optimal codelength $N_{q,opt}$ for a general $(m,n,\xcal,\delta_s)$ BNSI problem. In this section we will present some lower bounds on the optimal codelength $N_{q,opt}$ and constructions of encoder matrices $\Lb$ for $(m,n,\xcal,\delta_s)$ BNSI problem. These constructions will provide upper bounds on $N_{q,opt}$.
\subsection{Lower Bounds on $N_{q,opt}$}  \label{lb}
Here we will describe two lower bounds on $N_{q,opt}$, one of them is based on the size of the \emph{demanded information symbol index set} of each user in a given BNSI problem and the other will be characterized based on the set $\Phi$ defined on a subgraph of the bipartite graph representing the BNSI problem. At first, we will derive a result that will help to obtain the lower bounds on optimal codelength described in the two subsequent sub-sections.

Consider a bipartite graph $\mathcal{B}=(\mathcal{U},\mathcal{P},\mathcal{E})$ that represents the $(m,n,\xcal,\delta_s)$ BNSI problem. For any $\rho \subseteq [n]$, let $x_{\rho}=\{x_j~|~j \in \rho\}$. We will derive a subgraph $\mathcal{B'}=(\mathcal{U'},\mathcal{P'},\mathcal{E'})$ from $\mathcal{B}$ induced by the \emph{information set} \mbox{$x_{\rho}=\mathcal{P'} \subseteq \mathcal{P}$}, where \mbox{$\mathcal{U'}=\{u_i \in \mathcal{U}~|\rho \cap \xcal_i \neq \phi\}$} and $\mathcal{E'}=\{\{u_i,x_j\}\in \mathcal{E}~|~x_j \in 
\mathcal{P'}, u_i \in \mathcal{U'}\}$. The bipartite graph $\mathcal{B'}=(\mathcal{U'},\mathcal{P'},\mathcal{E'})$ represents the $(m',n',\xcal',\delta_s)$ BNSI problem, where $m'=|\mathcal{U'}|$, $n'=|\mathcal{P'}|$ and $\xcal'$ is the tuple $(\xcal'_i=\xcal_i \cap \rho,~\forall u_i \in \mathcal{U'})$. In other words, the $(m',n',\xcal',\delta_s)$ BNSI subproblem is derived from the $(m,n,\xcal,\delta_s)$ BNSI problem by deleting some information symbols from the \emph{information symbol set} of the original BNSI problem. 
\begin{lemma} \label{subprob} 
Let $N_{q,opt}(m',n',\xcal',\delta_s)$ be the optimal codelength over $\Fb_q$ for the $(m',n',\xcal',\delta_s)$ BNSI problem. Then $N_{q,opt}(m',n',\xcal',\delta_s)$ satisfies the following property,
\begin{equation*}
N_{q,opt}(m',n',\xcal',\delta_s) \leq N_{q,opt}(m,n,\xcal,\delta_s).
\end{equation*}
\end{lemma}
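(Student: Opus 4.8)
The plan is to take an optimal encoder matrix $\Lb \in \Fb_q^{n \times N}$ for the full $(m,n,\xcal,\delta_s)$ problem, with $N = N_{q,opt}(m,n,\xcal,\delta_s)$ columns, and to exhibit a valid encoder matrix for the subproblem that uses no more columns. The natural candidate is the submatrix $\Lb_{\rho}$ consisting of the rows of $\Lb$ indexed by $\rho$. Since $\Lb_\rho$ already has exactly $N$ columns and $n' = |\rho|$ rows, proving its validity for the $(m',n',\xcal',\delta_s)$ subproblem immediately yields $N_{q,opt}(m',n',\xcal',\delta_s) \leq N = N_{q,opt}(m,n,\xcal,\delta_s)$.

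To verify validity I would appeal to the characterization in Corollary~\ref{corr1}. For each surviving user $u_i \in \ucal'$ (that is, with $\xcal_i \cap \rho \neq \phi$), its demand in the subproblem is $\xcal'_i = \xcal_i \cap \rho \subseteq \xcal_i$, and its complementary index set within $\pcal'$ is $\ycal'_i = \rho \setminus \xcal'_i = \rho \cap \ycal_i \subseteq \ycal_i$. The rows of the demand submatrix in the subproblem are exactly $\Lb_{\xcal'_i}$, a subset of the rows of $\Lb_{\xcal_i}$, while $rowspan\{\Lb_{\ycal'_i}\} \subseteq rowspan\{\Lb_{\ycal_i}\}$ because $\ycal'_i \subseteq \ycal_i$.

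The core step is then a short contradiction argument. Suppose some non-zero linear combination of at most $2\delta_s$ rows of $\Lb_{\xcal'_i}$ were to lie in $rowspan\{\Lb_{\ycal'_i}\}$. By the two inclusions above, this same vector would be a non-zero linear combination of at most $2\delta_s$ rows of $\Lb_{\xcal_i}$ lying in $rowspan\{\Lb_{\ycal_i}\}$, contradicting the necessary and sufficient condition of Corollary~\ref{corr1} satisfied by the original matrix $\Lb$. Hence no such combination exists, and invoking Corollary~\ref{corr1} in the reverse direction shows that $\Lb_\rho$ is a valid encoder matrix for the subproblem. This mirrors the monotonicity reasoning already used in the proof of Lemma~\ref{lemma_subproblem}.

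I do not expect a genuine obstacle here, as the statement is essentially a restriction/monotonicity property of valid encoder matrices. The only point requiring care is the bookkeeping of index sets: recognizing that passing to the subproblem induced by $\rho$ simultaneously shrinks both the demanded set $\xcal_i$ and the non-demanded set $\ycal_i$ of each surviving user, and that these two simultaneous inclusions are precisely what preserve the condition of Corollary~\ref{corr1}. Once those inclusions are stated cleanly, the validity of $\Lb_\rho$, and hence the claimed inequality on optimal codelengths, follows directly.
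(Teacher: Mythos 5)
Your proposal is correct and follows essentially the same route as the paper: restrict the optimal encoder $\Lb$ to the rows $\Lb_{\rho}$, observe that for each surviving user $\xcal'_i = \xcal_i \cap \rho \subseteq \xcal_i$ and $\ycal'_i = \rho \cap \ycal_i \subseteq \ycal_i$ so that $rowspan\{\Lb_{\ycal'_i}\} \subseteq rowspan\{\Lb_{\ycal_i}\}$, and invoke Corollary~\ref{corr1} to conclude $\Lb_{\rho}$ is valid for the subproblem. Your explicit contradiction framing of the key step is just a slightly more detailed rendering of the paper's one-line deduction, so there is no substantive difference.
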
 
\begin{proof}
 In the $(m',n',\xcal',\delta_s)$ BNSI subproblem, the size of the \emph{demanded information symbol index set} for each user is reduced compared to the original BNSI problem. Consider a valid encoder matrix $\Lb$ with optimal codelength $N_{q,opt}(m,n,\xcal,\delta_s)$ for the $(m,n,\xcal,\delta_s)$ BNSI problem. Now in $(m',n',\xcal',\delta_s)$ BNSI subproblem represented by the subgraph $\mathcal{B'}=(\mathcal{U'},\mathcal{P'},\mathcal{E'})$ of $\mathcal{B}$, any user $u_i \in \mathcal{U'}$ has \mbox{$\xcal'_i=\xcal_i \cap \rho \subset \xcal_i$} and $\ycal'_i =\ycal_i \cap \rho \subset \ycal_i$. As  \textit{rowspan}$\{\Lb_{\ycal'_i}\} \subset$ \textit{rowspan}$\{\Lb_{\ycal_i}\}$ and $\Lb_{\xcal'_i}$ is a submatrix of $\Lb_{\xcal'_i}$, in submatrix $\Lb_{\rho}$ any non-zero linear combination of $2\delta_s$ or fewer rows of $\Lb_{\xcal'_i}$ is not in \textit{rowspan}$\{\Lb_{\ycal'_i}\}$. Therefore using Corollary \ref{corr1}, we conclude that $\Lb_{\rho}$ is a valid encoder matrix for $(m',n',\xcal',\delta_s)$ BNSI problem with codelength $N_{q,opt}(m,n,\xcal,\delta_s)$. Thus the optimal codelength $N_{q,opt}(m',n',\xcal',\delta_s)$ for $(m',n',\xcal',\delta_s)$ BNSI problem does not exceed $N_{q,opt}(m,n,\xcal,\delta_s)$. \end{proof}
\vspace{2mm}
\subsubsection{Lower bound based on size of the demanded information symbol index set of each user}    \label{lb_size}
Consider an $(m,n,\xcal,\delta_s)$ BNSI problem represented by the bipartite graph $\mathcal{B}$. Now we obtain the following lower bound.
\begin{theorem} \label{lowerbound1}
Suppose $S=\{i \in [m]~\vert~|\mathcal{X}_i| \in [2\delta_s]\}$ and let $\xcal_S=\bigcup_{i\in S}{\xcal_i}$. Then the optimal codelength $N_{q,opt}$ over $\Fb_q$ for the $(m,n,\xcal,\delta_s)$ BNSI problem satisfies,
\begin{equation*}
N_{q,opt} \geq |\xcal_S|+ \min\{2\delta
_s,n-|\xcal_S|\}.
\end{equation*}
\end{theorem}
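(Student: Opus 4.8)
The plan is to lower-bound $N_{q,opt}$ by the rank of an optimal valid encoder $\Lb$ (since an optimal $\Lb \in \Fb_q^{n \times N_{q,opt}}$ forces $N_{q,opt} \geq rank(\Lb)$), and then to split this rank into a contribution of exactly $|\xcal_S|$ coming from the ``small'' users in $S$ and a contribution of at least $\min\{2\delta_s,\, n-|\xcal_S|\}$ coming from the remaining information symbols. Throughout I assume that every information symbol is demanded by at least one user, i.e.\ $\bigcup_{i\in[m]}\xcal_i = [n]$ (equivalently, every packet node of $\bcal$ has positive degree); otherwise the $n$ in the bound must be replaced by $|\bigcup_i\xcal_i|$.

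First I would establish the rank decomposition
\begin{equation*}
rank(\Lb) = |\xcal_S| + rank(\Lb_{[n]\setminus\xcal_S}),
\end{equation*}
obtained by peeling off the users of $S$ one at a time, exactly as in the proof of Lemma~\ref{lem:suff}. Writing $S=\{i_1,\dots,i_t\}$ and $T_k = \xcal_{i_1}\cup\cdots\cup\xcal_{i_k}$ (with $T_0=\phi$), at the $k$-th step the matrix $\Lb_{[n]\setminus T_{k-1}}$ is a valid encoder for the subproblem obtained by deleting $\xb_{T_{k-1}}$ (by repeated application of Lemma~\ref{lemma_subproblem}), and in that subproblem user $i_k$ demands $\xcal_{i_k}\setminus T_{k-1}$, of size at most $|\xcal_{i_k}|\leq 2\delta_s$. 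Applying Lemma~\ref{lemma1} to this subproblem yields
\begin{equation*}
rank(\Lb_{[n]\setminus T_{k-1}}) = |\xcal_{i_k}\setminus T_{k-1}| + rank(\Lb_{[n]\setminus T_k}),
\end{equation*}
where I used $([n]\setminus T_{k-1})\setminus(\xcal_{i_k}\setminus T_{k-1}) = [n]\setminus T_k$. Telescoping over $k=1,\dots,t$ and noting $|T_k|=|T_{k-1}|+|\xcal_{i_k}\setminus T_{k-1}|$ gives the claimed identity with $T_t=\xcal_S$. Steps in which $\xcal_{i_k}\setminus T_{k-1}=\phi$ are vacuous and contribute nothing.

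Next I would bound $rank(\Lb_{[n]\setminus\xcal_S})$ from below, the key claim being that any $r:=\min\{2\delta_s,\, n-|\xcal_S|\}$ rows of $\Lb$ indexed by $[n]\setminus\xcal_S$ are linearly independent. Take any $W\subseteq [n]\setminus\xcal_S$ with $|W|=r$ (possible since $r\leq n-|\xcal_S|$) and suppose $\zb\Lb={\bf 0}$ for a nonzero $\zb$ supported on $W$. Then $wt(\zb)\leq r\leq 2\delta_s$, so $wt(\zb_{\xcal_i})\leq 2\delta_s$ for every $i$; since $\Lb$ is valid, Theorem~\ref{thm1} forces $\zb\notin\ical$, which under $wt(\zb_{\xcal_i})\leq 2\delta_s$ is possible only if $wt(\zb_{\xcal_i})=0$ for all $i$. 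Because $\bigcup_i\xcal_i=[n]$, this forces $\zb={\bf 0}$, a contradiction. Hence such $r$ rows are independent, so $rank(\Lb_{[n]\setminus\xcal_S})\geq r$. Combining the two steps,
\begin{equation*}
N_{q,opt}\geq rank(\Lb) = |\xcal_S| + rank(\Lb_{[n]\setminus\xcal_S}) \geq |\xcal_S| + \min\{2\delta_s,\, n-|\xcal_S|\},
\end{equation*}
as desired. I expect the main obstacle to be the bookkeeping in the rank-decomposition step: at every peel one must verify that the surviving matrix is a valid subproblem encoder and that the chosen user's shrunken demand still has size at most $2\delta_s$ so that Lemma~\ref{lemma1} applies, and then check that the unions $T_k$ and the rank increments telescope correctly to $|\xcal_S|$. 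The second step is by contrast a short weight argument resting directly on Theorem~\ref{thm1}.
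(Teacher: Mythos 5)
Your proposal is correct, and it reaches the bound by a genuinely different route than the paper. The paper's proof is combinatorial: it considers the subgraph $\bcal'$ of $\bcal$ induced by the symbols indexed by $\xcal_S$ together with any $\min\{2\delta_s,\,n-|\xcal_S|\}$ of the remaining symbols, traces an execution of Algorithm~2 (peeling the users of $S$, whose degrees in $\bcal'$ are at most $2\delta_s$) to conclude $\Phi(\bcal')=\phi$, invokes Theorem~\ref{thm2} to get $N_{q,opt}(\bcal')=|\xcal_S|+\min\{2\delta_s,\,n-|\xcal_S|\}$, and finishes with the subproblem monotonicity of Lemma~\ref{subprob}. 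You instead unroll the linear algebra underlying Theorem~\ref{thm2}: your rank-peeling via Lemmas~\ref{lemma_subproblem} and~\ref{lemma1} is exactly the engine of Lemma~\ref{lem:suff} (as you acknowledge), and your Step~2 replaces the terminal step of that argument with a direct weight argument from Theorem~\ref{thm1}, which in passing yields a mild strengthening of Corollary~\ref{corr2}: under your covering assumption, \emph{any} $2\delta_s$ or fewer rows of a valid $\Lb$ are linearly independent, not merely rows within a single $\Lb_{\xcal_i}$. The paper's route buys modularity — the fact $\Phi(\bcal')=\phi$ is reused verbatim in Lemma~\ref{compare} to show that the $|\mathsf{B}_{max}|$ bound dominates this one — while your route buys self-containedness (no appeal to the correctness of Algorithm~2) and, more importantly, makes explicit the hypothesis $\bigcup_{i\in[m]}\xcal_i=[n]$, which the paper uses only tacitly and which both proofs genuinely need: if some symbol is demanded by no user, the theorem as stated can fail. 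For instance, with $n=2$, $m=1$, $\xcal_1=\{1\}$, $\delta_s=1$, the bound asserts $N_{q,opt}\geq 1+\min\{2,1\}=2$, yet $\Lb=(1~0)^T$ is a valid encoder matrix (every $\zb\in\ical$ has $z_1\neq 0$, so $\zb\Lb=z_1\neq 0$), giving $N_{q,opt}=1$; correspondingly $\{2\}\in\Phi(\bcal')$, which breaks the paper's claim that $\Phi(\bcal')=\phi$. Your bookkeeping is also sound: validity of $\Lb_{[n]\setminus T_{k-1}}$ at each peel follows by iterating Lemma~\ref{lemma_subproblem}, the shrunken demand $\xcal_{i_k}\setminus T_{k-1}$ has size at most $|\xcal_{i_k}|\leq 2\delta_s$ so Lemma~\ref{lemma1} applies, empty-demand steps are correctly treated as vacuous, and the rank increments telescope to $|\xcal_S|$.
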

\begin{proof}
To derive the lower bound, first we will show that for any subgraph $\mathcal{B'}$ of $\mathcal{B}$ induced by the information symbols indexed by $\xcal_S$ and any $\min\{2\delta_s,n-|\xcal_S|\}$ of the remaining information symbols, the set \mbox{$\Phi(\mathcal{B'})=\phi$}. Then from Theorem \ref{thm2} the optimal codelength $N_{q,opt}(\mathcal{B'})$ over $\Fb_q$ for the subgraph  $\mathcal{B'}$ will be $|\xcal_S|+ \min\{2\delta_s,n-|\xcal_S|\}$, and then using Lemma \ref{subprob}, we have $N_{q,opt}(\mathcal{B}) \geq N_{q,opt}(\mathcal{B'})=|\xcal_S|+ \min\{2\delta_s,n-|\xcal_S|\}$.

 Now to show $\Phi(\mathcal{B'})=\phi$, we will use Algorithm~2. At first Algorithm~2 will take the bipartite graph $\mathcal{B'}$ as input and check whether the size of its \emph{information symbol set} is greater than $2\delta_s$ or not. Now we can have $2$ cases, \textit{Case I.} $S=\phi$ or \textit{Case II.} $S \neq \phi$.
 
\textit{Case I:} If $S=\phi$, $|\xcal_s|=0$. Then the size of the \emph{information symbol set} is \mbox{$\min\{2\delta
_s,n-|\xcal_S|\}$} which is at the most $2\delta_s$. As a result for this case $\Phi(\mathcal{B'})=\phi$ from Algorithm~2.

\textit{Case II:} If $S \neq \phi$, $|\xcal_s|>0$. Hence, the size of the \emph{information symbol set} could be at least $2\delta_s+1$. If so, the bipartite graph $\mathcal{B'}$ will go through the iteration steps in the \emph{while} loop in Algorithm~2. In each step, one \emph{user node} with index from $S$ and its associated \emph{demanded information symbols} will be removed from the bipartite graph $\mathcal{B'}$ since the degree of each of these user nodes is at the most $2\delta_s$. After removing all the information symbols indexed with $\xcal_S$, the remaining number of packets present in the graph will be \mbox{$\min\{2\delta
_s,n-|\xcal_S|\}$} which is at the most $2\delta_s$. Therefore the algorithm will conclude that $\Phi(\mathcal{B'})=\phi$.  
\end{proof}
\vspace{3mm}
\subsubsection{Lower bound based on the set $\Phi(\bcal)$}  \label{lb_phi}
Using Theorem \ref{thm2}, we now provide another lower bound on $N_{q,opt}$ of a BNSI problem. We are interested in a subset $\mathsf{B} \subseteq [n]$ such that the subgraph induced by $x_{\mathsf{B}} \subseteq \mathcal{P}$ denoted by $\mathcal{B}_{x_{\mathsf{B}}}$ satisfies $\Phi(\mathcal{B}_{x_{\mathsf{B}}})=\phi$. Suppose $\mathsf{B}_{max}$ denotes such a $\mathsf{B}$ with largest size. Now the following lower bound holds. 
\begin{theorem} \label{bmax}
The optimal codelength $N_{q,opt}$ over $\Fb_q$ of the $(m,n,\xcal,\delta_s)$ BNSI problem satisfies,
\begin{equation*}
N_{q,opt}(m,n,\xcal,\delta_s) \geq |\mathsf{B}_{max}|.
\end{equation*}
\end{theorem}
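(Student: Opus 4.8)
The plan is to reduce the bound to the two results already established: the exact characterization in Theorem~\ref{thm2} and the monotonicity of the optimal codelength under induced subproblems in Lemma~\ref{subprob}. The key observation is that $\mathsf{B}_{max}$ is precisely the largest information-symbol set on which Theorem~\ref{thm2} forces the optimal codelength to equal the full number of symbols, and Lemma~\ref{subprob} lets us push this value back up to the original problem.

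First I would set $\rho = \mathsf{B}_{max}$ and consider the induced subgraph $\mathcal{B}' = \mathcal{B}_{x_{\mathsf{B}_{max}}}$ as constructed in Section~\ref{lb}. This subgraph represents an $(m',n',\xcal',\delta_s)$ BNSI subproblem whose information-symbol set has size $n' = |\mathsf{B}_{max}|$. By the very definition of $\mathsf{B}_{max}$, this subgraph satisfies $\Phi(\mathcal{B}') = \phi$.

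Next I would apply Theorem~\ref{thm2} to the subproblem $\mathcal{B}'$. Since $\Phi(\mathcal{B}') = \phi$, the optimal codelength of $\mathcal{B}'$ equals the number of its information symbols, that is,
\begin{equation*}
N_{q,opt}(m',n',\xcal',\delta_s) = n' = |\mathsf{B}_{max}|.
\end{equation*}
Finally, invoking Lemma~\ref{subprob} with $\rho = \mathsf{B}_{max}$ gives $N_{q,opt}(m',n',\xcal',\delta_s) \leq N_{q,opt}(m,n,\xcal,\delta_s)$. Chaining the two relations yields $N_{q,opt}(m,n,\xcal,\delta_s) \geq |\mathsf{B}_{max}|$, as claimed.

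There is no substantial obstacle in this argument; the two earlier results are exactly tailored for it. The only points I would be careful about are verifying that $\mathsf{B}_{max}$ is well-defined (since $[n]$ is finite and any singleton already induces a subgraph with empty $\Phi$, a maximizer exists), and noting the consistency check that when $\Phi(\mathcal{B}) = \phi$ one may take $\mathsf{B}_{max} = [n]$, so that the bound reduces to $N_{q,opt} \geq n$, in agreement with Theorem~\ref{thm2}.
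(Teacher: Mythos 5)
Your proposal is correct and matches the paper's own argument: both apply Theorem~\ref{thm2} to the subproblem induced by $\mathsf{B}_{max}$ (where $\Phi$ is empty by definition, forcing optimal codelength $|\mathsf{B}_{max}|$) and then lift this to the original problem via the monotonicity in Lemma~\ref{subprob}. Your added remarks on well-definedness of $\mathsf{B}_{max}$ and the consistency check at $\Phi(\mathcal{B})=\phi$ are sensible extras the paper omits.
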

\begin{proof}
From Theorem \ref{thm2}, we have that for any choice of $\mathsf{B}$ with $\Phi(\mathsf{B})=\phi$, the optimal codelength for the BNSI problem represented by the subgraph induced by $\mathsf{B}$ is $|\mathsf{B}|$. As $\mathsf{B}_{max}$ denotes such $\mathsf{B}$ with largest size, it holds that $|\mathsf{B}| \leq |\mathsf{B}_{max}|$ for all $\mathsf{B}$ such that $\Phi(\mathcal{B}_{x_{\mathsf{B}}})=\phi$. Now using Lemma \ref{subprob}, $N_{q,opt}(m,n,\xcal,\delta_s) \geq |\mathsf{B}_{max}|$. 
\end{proof} 
Now, we will derive a lemma that will provide a comparison between two the lower bounds given in Theorems \ref{lowerbound1} and \ref{bmax}.
\begin{lemma} \label{compare}
Let $\mathsf{B}_{max}$ be a largest subset of $[n]$ such that $\Phi(\mathcal{B}_{x_{\mathsf{B}_{max}}})=\phi$. Also let $\xcal_S=\bigcup_{i\in S}{\xcal_i}$ where, \mbox{$S=\{i \in [m]~\vert~|\mathcal{X}_i| \in [2\delta_s]\}$}. Then, $|\mathsf{B}_{max}| \geq |\xcal_S|+ \min\{2\delta
_s,n-|\xcal_S|\}$.
\end{lemma}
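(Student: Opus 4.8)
The plan is to recognize that Lemma~\ref{compare} is fundamentally a feasibility-versus-optimality statement: the proof of Theorem~\ref{lowerbound1} does not merely assert a lower bound, it explicitly exhibits a subset of $[n]$ whose induced subgraph has empty $\Phi$ and whose cardinality is exactly $|\xcal_S| + \min\{2\delta_s, n - |\xcal_S|\}$. Since $\mathsf{B}_{max}$ is by definition a \emph{largest} subset with empty $\Phi$, this witness set immediately forces $|\mathsf{B}_{max}|$ to be at least that large, and the lemma drops out.

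Concretely, first I would extract the construction hidden inside the proof of Theorem~\ref{lowerbound1}. There one fixes an arbitrary set $T \subseteq [n] \setminus \xcal_S$ of cardinality $\min\{2\delta_s, n - |\xcal_S|\}$ and forms $\rho = \xcal_S \cup T$. Since $\xcal_S$ and $T$ are disjoint, $|\rho| = |\xcal_S| + \min\{2\delta_s, n - |\xcal_S|\}$. The proof of Theorem~\ref{lowerbound1} already shows, by running Algorithm~2 on the subgraph $\mathcal{B}'$ induced by $x_\rho$, that $\Phi(\mathcal{B}') = \phi$: every user in $S$ has degree at most $2\delta_s$ in $\mathcal{B}'$ and is therefore removed during the \emph{while} loop, after which the surviving information symbol set has size at most $2\delta_s$ and the algorithm returns $\Phi = \phi$.

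Next I would invoke the definition of $\mathsf{B}_{max}$ directly. By the previous step $\rho$ is a subset of $[n]$ whose induced subgraph satisfies $\Phi(\mathcal{B}_{x_\rho}) = \phi$, so $\rho$ is a feasible candidate in the maximization defining $\mathsf{B}_{max}$. Because $\mathsf{B}_{max}$ is a largest such subset, we obtain $|\mathsf{B}_{max}| \geq |\rho| = |\xcal_S| + \min\{2\delta_s, n - |\xcal_S|\}$, which is precisely the asserted inequality.

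The only points requiring care — and hence the closest thing to an obstacle — are the degenerate cases, and I expect these to be where a careless write-up could slip. If $S = \phi$, then $\xcal_S = \phi$ and $\rho = T$ has size $\min\{2\delta_s, n\} \leq 2\delta_s$, so $\Phi(\mathcal{B}_{x_\rho}) = \phi$ holds trivially (this is Case~I of Theorem~\ref{lowerbound1}); if $\xcal_S = [n]$, then $n - |\xcal_S| = 0$, $T = \phi$, and $\rho = [n]$, so I must confirm the witness set is still a legitimate feasible point, i.e.\ that $\Phi(\mathcal{B}_{x_\rho}) = \phi$ even for the full graph. Both situations are already subsumed by the two cases analyzed in the proof of Theorem~\ref{lowerbound1}, so no new argument is needed; once the explicit witness set $\rho$ is pulled out of that proof, the entire lemma is a one-line consequence of maximality.
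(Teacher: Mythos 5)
Your proposal is correct and follows essentially the same route as the paper: the paper's proof of Lemma~\ref{compare} likewise takes the witness set $\xcal_S$ together with any $\min\{2\delta_s,\,n-|\xcal_S|\}$ of the remaining information symbols, cites the proof of Theorem~\ref{lowerbound1} (which runs Algorithm~2) to show its induced subgraph has empty $\Phi$, and then invokes the maximality of $\mathsf{B}_{max}$. Your extra attention to the degenerate cases $S=\phi$ and $\xcal_S=[n]$ is already covered by the two cases in the proof of Theorem~\ref{lowerbound1}, exactly as you note.
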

\begin{proof}
In the proof of Theorem \ref{lowerbound1}, we have already shown that for any subgraph $\mathcal{B'}$ of $\mathcal{B}$ induced by the information symbols indexed by $\xcal_S$ and any of the remaining $\min\{2\delta_s,n-|\xcal_S|\}$ information symbols, the set \mbox{$\Phi(\mathcal{B'})=\phi$}. Therefore these information symbols constitute a set $\mathsf{B}$ such that $\Phi(\mathcal{B}_{x_{\mathsf{B}}})=\phi$. Since $\mathsf{B}_{max}$ is a set of largest size among all $\mathsf{B}$ with the property $\Phi(\mathcal{B}_{x_{\mathsf{B}}})=\phi$ the inequality in Lemma \ref{compare} holds. 
\end{proof}
From Lemma \ref{compare}, we can remark that given an $(m,n,\xcal,\delta_s)$ BNSI problem the lower bound on optimal codelength $N_{q,opt}$ found in Theorem \ref{bmax} is at least as good as the lower bound found in Theorem \ref{lowerbound1}. However the lower bound in Theorem \ref{lowerbound1} can be calculated easily while we do not know of an efficient technique to compute $|\mathsf{B}_{max}|$.     

\begin{example} \label{lbexp}
Consider the BNSI problem scenario mentioned in Example \ref{exmp1}. For this problem scenario $|\xcal_S|=0$, hence from Theorem \ref{lowerbound1} we have $N_{q,opt} \geq \min\{2\delta_s,n\}=2$. Also we can check that any subset of $\{1,2,3,4\}$ of size $3$, i.e., $\{1,2,3\}$, $\{1,3,4\}$, $\{2,3,4\}$, serves as $\mathsf{B}_{max}$. So, from Theorem \ref{bmax} $N_{q,opt} \geq |\mathsf{B}_{max}|=3$.  A valid encoding and decoding scheme over $\Fb_2$ is given in Example \ref{exmp3} that meets this lower bound for this scenario. Further, this scheme can be easily generalized to any finite field $\Fb_q$. Hence, $N_{q,opt}=3$ for this problem for any $\Fb_q$.
\end{example}

\subsection{Construction of encoder matrix $\Lb$ based on linear error correcting codes} \label{linecc}
In this subsection, we describe a construction of a valid encoder matrix $\Lb$ for an $(m,n,\xcal,\delta_s)$ BNSI problem based on linear error correcting codes over $\Fb_q$. Consider a parity check matrix \mbox{$\Hb \in \Fb_q^{(n'-k') \times n'}$} of an $[n',k']$ linear error correcting code over $\Fb_q$ where $n'$, $k'$ denote the blocklength and the dimension of the code,  respectively. Let $d_{min}$ be the \emph{minimum distance} of the code. Then any set of ($d_{min}-1$) columns of $\Hb$ are linearly independent and at least one set of $d_{min}$ columns are linearly dependent \cite{huffman_2010}. Define, $\eta =2\delta_s+ \max_{i \in [m]}{|\ycal_i|}$, where $\ycal_i$ is the index set of the messages that are not demanded by $i^{th}$ user in the $(m,n,\xcal,\delta_s)$ BNSI problem. Now if $d_{min} \geq \eta+1$, $n'=n$ and $\Lb=\Hb^T$, the following lemma holds.
\begin{lemma}
If $\Hb$ is a parity check matrix of an $[n,k',d_{min}]$ code over $\Fb_q$ with $d_{min} \geq 2\delta_s+ \max_{i \in [m]}{|\ycal_i|}+1$, then $\Lb=\Hb^T$ is a valid encoder matrix for the $(m,n,\xcal,\delta_s)$ BNSI problem.
\end{lemma}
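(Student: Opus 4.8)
The plan is to verify the criterion of Corollary~\ref{corr1} directly: for every $i \in [m]$, I must show that no non-zero linear combination of $2\delta_s$ or fewer rows of $\Lb_{\xcal_i}$ lies in $rowspan\{\Lb_{\ycal_i}\}$. Since $\Lb = \Hb^T$, the rows of $\Lb$ are precisely the columns of $\Hb$, so $\Lb_{\xcal_i}$ consists of the columns of $\Hb$ indexed by $\xcal_i$ (transposed), and $\Lb_{\ycal_i}$ consists of the columns indexed by $\ycal_i$. The entire statement thus reduces to a statement about linear dependence among columns of the parity check matrix $\Hb$, which I can control using the minimum distance bound $d_{min} \geq \eta + 1$.

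First I would fix an arbitrary user $i$ and suppose, for contradiction, that there exists a non-zero $\zb_{\xcal_i} \in \Fb_q^{|\xcal_i|}$ with $wt(\zb_{\xcal_i}) \leq 2\delta_s$ such that $\zb_{\xcal_i}\Lb_{\xcal_i} \in rowspan\{\Lb_{\ycal_i}\}$. This means there is some $\zb_{\ycal_i} \in \Fb_q^{|\ycal_i|}$ with $\zb_{\xcal_i}\Lb_{\xcal_i} = -\zb_{\ycal_i}\Lb_{\ycal_i}$, i.e. $\zb_{\xcal_i}\Lb_{\xcal_i} + \zb_{\ycal_i}\Lb_{\ycal_i} = \pmb{0}$. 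Assembling $\zb_{\xcal_i}$ and $\zb_{\ycal_i}$ into a single vector $\zb \in \Fb_q^n$ (placing entries in the coordinates indexed by $\xcal_i$ and $\ycal_i$ respectively), this is exactly $\zb\Lb = \pmb{0}$, equivalently $\Hb\zb^T = \pmb{0}$. So $\zb$ would be a non-zero codeword of the $[n,k',d_{min}]$ code defined by $\Hb$ (a vector in its right null space).

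The key step is then the weight count. The support of $\zb$ is contained in $\xcal_i \cup \ycal_i = [n]$, but I can bound it sharply: the $\xcal_i$-coordinates contribute at most $wt(\zb_{\xcal_i}) \leq 2\delta_s$ to the weight, while the $\ycal_i$-coordinates contribute at most $|\ycal_i|$. Hence $wt(\zb) \leq 2\delta_s + |\ycal_i| \leq 2\delta_s + \max_{i \in [m]}|\ycal_i| = \eta$. Since $\zb_{\xcal_i} \neq \pmb{0}$, we also have $\zb \neq \pmb{0}$, so $1 \leq wt(\zb) \leq \eta < d_{min}$. But every non-zero codeword has weight at least $d_{min}$, giving the contradiction. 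I should double-check that a non-zero codeword of weight below $d_{min}$ genuinely cannot exist, which is the defining property of minimum distance, equivalently the statement that any $d_{min}-1$ columns of $\Hb$ are linearly independent (cited from~\cite{huffman_2010}); since $wt(\zb) \leq \eta \leq d_{min}-1$, the corresponding columns of $\Hb$ would be linearly dependent, which is impossible.

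The main obstacle, such as it is, is bookkeeping rather than mathematical depth: I must be careful to identify $rowspan\{\Lb_{\ycal_i}\}$ correctly and to reassemble the partial vectors $\zb_{\xcal_i}, \zb_{\ycal_i}$ into a full-length $\zb$ whose support respects the partition $[n] = \xcal_i \sqcup \ycal_i$, so that the weight bound $wt(\zb) = wt(\zb_{\xcal_i}) + wt(\zb_{\ycal_i})$ is valid. Once that is set up, the argument is a clean contraposition of Corollary~\ref{corr1} against the minimum-distance property of $\Hb$, and since the bound $wt(\zb) \leq \eta$ holds uniformly using $\max_{i}|\ycal_i|$, the conclusion holds for every $i$ simultaneously, so $\Lb = \Hb^T$ is a valid encoder matrix.
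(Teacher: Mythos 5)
Your proof is correct and is essentially the paper's argument in contrapositive form: the paper directly notes that any $\eta = 2\delta_s + \max_i |\ycal_i| \leq d_{min}-1$ rows of $\Lb = \Hb^T$ (i.e., columns of $\Hb$) are linearly independent, so any $2\delta_s$ rows of $\Lb_{\xcal_i}$ together with all rows of $\Lb_{\ycal_i}$ are independent and Corollary~\ref{corr1} applies, whereas you assume a violation of Corollary~\ref{corr1} and exhibit the resulting nonzero codeword $\zb$ of weight at most $\eta < d_{min}$. Since ``every nonzero codeword has weight at least $d_{min}$'' and ``any $d_{min}-1$ columns of $\Hb$ are linearly independent'' are the same fact, the two write-ups differ only in packaging, and yours is complete.
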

\begin{proof}
From Corollary \ref{corr1} we know that to be a valid encoder matrix for an $(m,n,\xcal,\delta_s)$ BNSI problem it is sufficient that any $|\ycal_i|+2\delta_s$ rows in $\Lb$ are linearly independent for each $i \in [m]$. As $(|\ycal_i|+2\delta_s) \leq \eta$, if we consider $\Lb=\Hb^T$ and $d_{min} \geq \eta+1$, $\Lb$ has any set of $\eta$ rows as linearly independent. In particular for any $i^{th}$ user, $i \in [m]$, any $2\delta_s$ or fewer rows of $\Lb_{\xcal_i}$ and all the rows of $\Lb_{\ycal_i}$ together form a linearly independent  set. Therefore, $\Lb$ is a valid encoder matrix for the $(m,n,\xcal,\delta_s)$ BNSI problem.
\end{proof}
 We can utilize a linear error correcting code having blocklength $n$ and $d_{min} \geq \eta+1$ over $\Fb_q$ with maximum possible dimension $k'$ such that $n-k'$ is minimized. Then $\Lb$ will be the transpose of a parity check matrix of the error correcting code with codelength $N=(n-k')$. 
 
\begin{example}  
 Suppose $m=4$, $n=6$, $\xcal_1=\{1,2,3,4\}$, $\xcal_2=\{2,3,4,5\}$, $\xcal_3=\{1,3,4,5,6\}$, $\xcal_4=\{2,3,4,5,6\}$ and $\delta_s=1$. Therefore $\eta=2\delta_s+ \max_{i \in [m]}{|\ycal_i|}=2+2=4$. We now use a $[6,k']$ linear error correcting code over $\Fb_q$ with maximum possible $k'$ having $d_{min} \geq 5$. From~\cite{codetables}, we can find that such codes over $\Fb_2$ are $[6,1,6]$ and $[6,1,5]$ and the resulting codelength $N$ for both the cases will be $5$. Over $\Fb_5$ such a linear error correcting code is $[6,2,5]$ and the resulting codelength $N=4$.  
\end{example}

Among all the linear error correcting codes over $\Fb_q$ having blocklength $n$ and $d_{min}= \eta+1$, the dimension $k'$ will be maximum for \emph{Maximum Distance Separable} (MDS) codes if such an MDS code exists over $\Fb_q$. Suppose $\Lb$ is a valid encoder matrix for an $(m,n,\xcal,\delta_s)$ BNSI problem constructed based on the transpose of a parity check matrix $\Hb$ of an MDS code over $\Fb_q$($q \geq n$) having blocklength $n'= n$ and $d_{min}= \eta+1$. Then the dimension of the code $k'= n'-d_{min}+1 = (n-\eta)^+ = (n-2\delta_s - \max_{i \in [m]}{|\ycal_i|})^+=(n-2\delta_s - \max_{i \in [m]}{(n-|\xcal_i|)})^+=\min_{i \in [m]}{(|\mathcal{X}_i|-2\delta_s)^+}$, where $x^+=x~\text{for}~x \geq 0~\text{and}~x^+=0~\text{for}~x<0$.

\begin{example} \label{exmp4} 
 Consider the BNSI problem scenario, where $m=4$, $n=10$, $\delta_s=1$, $\xcal_1=\{1,3,5,7,9\}$, $\xcal_2=\{2,4,6,8,10\}$, $\xcal_3=\{1,2,4,6,8,10\}$, $\xcal_4=\{3,4,5,6,7,9\}$. For this example, $k'=\min_{i \in [m]}{(|\mathcal{X}_i|-2\delta_s)^+}$ $=3$. Over $\Fb_{16}$ there exists an $[10,3]$ linear error correcting code with $d_{min}=8$  which is an MDS code. 
%% A parity check matrix $\Hb$ of the $[10,3,8]$ code over $\Fb_8$ is, 
%% \begin{equation*}
%% \Hb =
%% \begin{bmatrix}
%% 1 & 0 & 0 & 0 & 0 & 0 & 0 & 1 & 1 & 1\\
%% 0 & 1 & 0 & 0 & 0 & 0 & 0 & 1 & \al & \al^2\\
%% 0 & 0 & 1 & 0 & 0 & 0 & 0 & 1 & \al^2 & \al^4\\
%% 0 & 0 & 0 & 1 & 0 & 0 & 0 & 1 & \al^3 & \al^6\\
%% 0 & 0 & 0 & 0 & 1 & 0 & 0 & 1 & \al^4 & \al\\
%% 0 & 0 & 0 & 0 & 0 & 1 & 0 & 1 & \al^5 & \al^3\\
%% 0 & 0 & 0 & 0 & 0 & 0 & 1 & 1 & \al^6 & \al^5\\
%% \end{bmatrix}
%% \end{equation*}
%% where $\alpha$ be a primitive element in $\Fb_8$. 
The transpose of the parity-check matrix of this code is a valid encoder matrix for the BNSI problem. Note that using this MDS code we save $3$ transmissions compared to uncoded scheme.   
\end{example}
\subsection{Upper Bounds on $N_{q,opt}$ }  \label{ub}
Here, we will describe three upper bounds on $N_{q,opt}$ of an $(m,n,\xcal,\delta_s)$ BNSI problem. First one is based on the code construction from linear error correcting codes as given in Section \ref{linecc}, the second one is based on \emph{disjoint elements} of the set $\Phi(\bcal)$ defined over the bipartite graph $\bcal$ which represents the $(m,n,\xcal,\delta_s)$ BNSI problem and the last one is based on partitioning the set of information symbols. 
\vspace{2mm}
\subsubsection{Upper bound based on linear error correcting codes}  \label{ub_ecc}
From Section \ref{linecc}, we have a valid encoder matrix of an $(m,n,\xcal,\delta_s)$ BNSI problem with codelength $N=n'-k'$ derived from an $[n',k']$ linear error correcting code having blocklength $n'=n$ and $d_{min} \geq \eta+1$ with maximum possible dimension $k'$. Let $k(q,n,d_{min})$ be the largest possible dimension among all linear error correcting codes over $\Fb_q$ with blocklength $n$ and minimum distance at least $d_{min}$. Then we have $N_{q,opt} \leq n-k(q,n,d_{min})$. From this inequality condition, we now obtain an upper bound on the optimal codelength $N_{q,opt}$.
\begin{theorem} \label{thm12}
The optimal codelength $N_{q,opt}$ over $\Fb_q~(q \geq n)$ for an $(m,n,\xcal,\delta_s)$ BNSI problem satisfies
\begin{equation*} 
N_{q,opt} \leq n-\min\limits_{i \in [m]}{(|\mathcal{X}_i|-2\delta_s)^+}.  
\end{equation*}
\end{theorem}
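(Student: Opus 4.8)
The plan is to leverage the code construction from Section~\ref{linecc} directly, since the theorem is essentially an existence statement about MDS codes over a sufficiently large field. Recall that from the construction, any $[n',k']$ linear error correcting code over $\Fb_q$ with blocklength $n'=n$ and $d_{min} \geq \eta+1$ yields a valid encoder matrix $\Lb = \Hb^T$ of codelength $N = n-k'$. Hence $N_{q,opt} \leq n - k(q,n,d_{min})$, where $k(q,n,d_{min})$ is the largest dimension of such a code. So the entire task reduces to showing that over $\Fb_q$ with $q \geq n$ there exists a code with $d_{min} = \eta+1$ achieving dimension exactly $k' = \min_{i \in [m]}(|\xcal_i| - 2\delta_s)^+$.

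First I would invoke the existence of MDS codes. The standard fact is that over any finite field $\Fb_q$ with $q \geq n$, there exists an $[n,k']$ MDS code for every $0 \leq k' \leq n$ (for instance, a Reed--Solomon code or its extension). An MDS code meets the Singleton bound with equality, so it satisfies $d_{min} = n - k' + 1$. I would choose the dimension so that the minimum distance matches the requirement $d_{min} = \eta + 1$, that is, set $n - k' + 1 = \eta + 1$, giving $k' = n - \eta$. Using the definition $\eta = 2\delta_s + \max_{i \in [m]}|\ycal_i|$ together with $|\ycal_i| = n - |\xcal_i|$, this rearranges (exactly as carried out in the paragraph preceding Example~\ref{exmp4}) to $k' = \min_{i \in [m]}(|\xcal_i| - 2\delta_s)$.

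The one subtlety I would have to handle is the positive-part operation $(\,\cdot\,)^+$ in the theorem statement. If $\min_{i \in [m]}(|\xcal_i| - 2\delta_s) \leq 0$, then the required dimension $k' = n - \eta$ is non-positive, meaning $d_{min} = \eta + 1 > n$; no nontrivial code of blocklength $n$ can have such a large minimum distance, and the bound degenerates. In this regime the claimed upper bound is simply the trivial bound $N_{q,opt} \leq n$ from~\eqref{eq:trivial}, since $\min_{i \in [m]}(|\xcal_i| - 2\delta_s)^+ = 0$. So I would split into two cases: when $\min_{i \in [m]}(|\xcal_i| - 2\delta_s) > 0$, I apply the MDS construction to obtain a valid encoder of codelength $N = n - k' = \eta = n - \min_{i \in [m]}(|\xcal_i| - 2\delta_s)$; when $\min_{i \in [m]}(|\xcal_i| - 2\delta_s) \leq 0$, the bound reduces to~\eqref{eq:trivial}.

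The main obstacle, though a mild one, is confirming that the MDS code with the chosen parameters actually yields a \emph{valid} encoder matrix via the preceding lemma, which requires $d_{min} \geq \eta + 1$; I set $d_{min} = \eta+1$, so the hypothesis of that lemma holds with equality and the lemma applies verbatim. The algebraic identity $n - \eta = \min_{i \in [m]}(|\xcal_i| - 2\delta_s)$ is routine given the definitions, so once the existence of the MDS code and the case split are in place, the bound $N_{q,opt} \leq n - \min_{i \in [m]}(|\xcal_i| - 2\delta_s)^+$ follows immediately.
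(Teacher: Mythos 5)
Your proposal is correct and follows essentially the same route as the paper: instantiate the construction of Section~\ref{linecc} with an $[n,k']$ MDS code over $\Fb_q$ ($q \geq n$) having $d_{min}=\eta+1$, and use the identity $n-\eta = \min_{i \in [m]}(|\xcal_i|-2\delta_s)$ derived before Example~\ref{exmp4}. Your explicit case split for $\min_{i}(|\xcal_i|-2\delta_s) \leq 0$ is a slightly more careful treatment of the degenerate regime, which the paper handles only implicitly through the $(n-\eta)^+$ notation.
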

\begin{proof}
The codelength of a valid coding scheme for an $(m,n,\xcal,\delta_s)$ BNSI problem based on linear error correcting codes as given in Section \ref{linecc} will be minimum if the encoder matrix $\Lb$ is derived from an $[n',k']$ linear MDS code with blocklength $n'=n$, dimension $k'$ and $d_{min} = \eta+1$ if such an MDS code exists. We have the dimension of such MDS code is $k'=\min_{i \in [m]}{(|\mathcal{X}_i|-2\delta_s)^+}$. If $q \geq n$ then such an MDS code exists over $\Fb_q$. Hence, the upper bound in Theorem \ref{thm12} holds.
\end{proof}

\subsubsection{Upper bound based on disjoint elements of $\Phi(\mathcal{B})$}   \label{ub_min}
Now we provide an upper bound on optimal codelength $N_{q,opt}$ for an $(m,n,\xcal,\delta_s)$ BNSI problem represented by the bipartite graph $\mathcal{B}=(\mathcal{U},\mathcal{P},\mathcal{E})$. This upper bound is motivated by \emph{Cycle-Covering scheme} for Index Coding~\cite{MAZ_IEEE_IT_13,CASL_ISIT_11}. For each element $\cs \in \Phi(\mathcal{B})$, the subgraph induced by $x_{\cs}$ denoted as $\mathcal{B}_{\cs}=(\mathcal{U}_{\cs},x_{\cs},\mathcal{E}_{\cs})$ represents the $(m_{\cs},n_{\cs},\xcal_{\cs},\delta_s)$ BNSI problem, where $m_{\cs}=|\mathcal{U}_{\cs}|=|\{u_i \in \mathcal{U}~|\cs \cap \xcal_i \neq \phi\}|$, $n_{\cs}=|\cs|$, $\xcal_{\cs}=\{\xcal_{\cs,i}~|\xcal_{\cs,i}=\xcal_i \cap \cs, \forall u_i \in \mathcal{U}_{\cs}\}$ and $\mathcal{E}_{\cs}=\{\{u_i,x_j\} \in \mathcal{E}~|u_i \in \mathcal{U}_{\cs},~j \in \cs\}$. Now since $\cs \in \Phi(\bcal)$ it can be noticed that $\forall u_i \in \mathcal{U}_{\cs}$, the degree of $u
_i$ in $\bcal_{\cs}$, $deg(u_i) \geq 2\delta_s+1$. Therefore we can use the simple coding scheme described in Section~\ref{sec:simple_coding} on $(m_{\cs},n_{\cs},\xcal_{\cs},\delta_s)$ BNSI problem to save one transmission compared to uncoded transmission. Therefore the length of this code to transmit all the information symbols indexed by $\cs \subseteq [n]$ over $\Fb_q$ is $N_{\cs}=|\cs|-1$. For some integer $K$, let $\cs_1,\cs_2,\dots,\cs_K \in \Phi(\bcal)$ and $R=[n] \setminus (\cs_1 \cup \cs_2 \cup \dots \cup \cs_K)$. Given such a collection of elements of $\Phi(\bcal)$, we design a valid coding scheme as follows. We apply the coding scheme described in Section \ref{sec:simple_coding} on each element $\cs_1,\cs_2,\dots,\cs_K$ and transmit the information symbols indexed by the set $R$ uncoded. The codelength for this scheme is
\begin{align*}
N &= \sum_{i=1}^{K}{(|\cs_i|-1)}+|R| 
= \sum_{i=1}^{K}{|\cs_i|}-K+|R|.
\end{align*}
\begin{lemma} \label{disjoint}
Let $N$ be the codelength of the linear coding scheme based on the set $\cs_1,\cs_2,\dots,\cs_K \in \Phi(\bcal)$. Then there exist disjoint $\cs'_1,\cs'_2,\dots,\cs'_{K'} \in \Phi(\bcal)$ such that $K' \leq K$ and the codelength $N'$ of the linear coding scheme based on $\cs'_1,\cs'_2,\dots,\cs'_{K'}$ is at the most $N$. 
\end{lemma}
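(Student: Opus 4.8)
The plan is to exploit a single structural property: the collection $\Phi(\bcal)$ is closed under union. Membership $\cs \in \Phi(\bcal)$ is equivalent to the purely combinatorial condition that $|\xcal_i \cap \cs| \notin [2\delta_s]$ for every $i \in [m]$, i.e. each such intersection is either empty or of size at least $2\delta_s+1$. To establish closure, take $\cs_a,\cs_b \in \Phi(\bcal)$; for each $i$ we have $|\xcal_i \cap (\cs_a \cup \cs_b)| \geq \max\{|\xcal_i \cap \cs_a|,\,|\xcal_i \cap \cs_b|\}$, so this intersection is empty exactly when both $\xcal_i \cap \cs_a$ and $\xcal_i \cap \cs_b$ are empty, and is at least $2\delta_s+1$ otherwise. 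Hence $\cs_a \cup \cs_b \in \Phi(\bcal)$.

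With this in hand I would run a merging procedure on the given collection $\cs_1,\dots,\cs_K$. While two of the current sets, say $\cs_a$ and $\cs_b$, overlap (i.e. $\cs_a \cap \cs_b \neq \phi$), replace the pair by the single set $\cs_a \cup \cs_b$. By the closure property the new set again lies in $\Phi(\bcal)$; the total union $\cs_1 \cup \cdots \cup \cs_K$, and hence the leftover index set $R = [n]\setminus(\cs_1\cup\cdots\cup\cs_K)$, is unchanged; and the number of sets drops by exactly one. Since the count strictly decreases and is bounded below, the procedure terminates after finitely many steps with a subcollection $\cs'_1,\dots,\cs'_{K'}$ that is pairwise disjoint and satisfies $K' \leq K$.

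The remaining task is the codelength bookkeeping at each merge. Writing the codelength of a collection as $\sum_i |\cs_i| - (\text{number of sets}) + |R|$, a single merge of $\cs_a$ and $\cs_b$ changes the term $\sum_i |\cs_i|$ by $|\cs_a \cup \cs_b| - |\cs_a| - |\cs_b| = -|\cs_a \cap \cs_b|$, increases the term $-(\text{number of sets})$ by $1$, and leaves $|R|$ fixed; the net change is $1 - |\cs_a \cap \cs_b| \leq 0$ because the merged pair overlapped. Thus the codelength does not increase at any step, so the final codelength $N'$ of the disjoint family obeys $N' \leq N$, which is exactly the claim.

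I expect the only genuinely delicate point to be the union-closure statement, since everything else is routine bookkeeping. In particular I would double-check the boundary case in which a union grows to all of $[n]$: there one verifies that the simple scheme of Section~\ref{sec:simple_coding} still applies (it yields codelength $n-1$ with empty $R$), so that the codelength formula and the inequality $N' \leq N$ remain valid under the convention used for $\Phi(\bcal)$ elsewhere in the paper.
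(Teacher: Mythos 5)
Your proof is correct, but it takes a genuinely different route from the paper's. The paper disjointifies \emph{downward} by set differences: it puts $\cs'_1=\cs_1$ and $\cs'_k=\cs_k \setminus (\cs_1 \cup \cdots \cup \cs_{k-1})$, which forces disjointness but can destroy membership in $\Phi(\bcal)$ (deleting indices may push some $|\xcal_i \cap \cs'_k|$ into the forbidden range $[2\delta_s]$); the sets that fall out of $\Phi(\bcal)$ are therefore demoted to the uncoded remainder, and the inequality $N' \leq N$ is recovered by counting, using $|\cs_k|-|\cs'_k| \geq 1$ for each such discarded set. You instead go \emph{upward}: you first establish that $\Phi(\bcal)$ is closed under unions --- a fact the paper also proves, but only later and for a different purpose (uniqueness of the maximal element, Lemma~\ref{unique}) --- and then repeatedly merge overlapping sets, so every set in play remains in $\Phi(\bcal)$ and nothing ever has to be discarded; your per-merge accounting $1-|\cs_a\cap\cs_b| \leq 0$ is exact and correct, and termination is immediate since the number of sets strictly decreases. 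Both arguments are sound and yield the same statement, but yours buys two things: it avoids the paper's case split into retained and discarded sets, and it makes visible that the inequality is strict whenever some pair overlaps in two or more indices (each such merge saves $|\cs_a\cap\cs_b|-1 \geq 1$ transmissions), whereas the paper's difference construction can only match $N$. One remark: your cautionary final paragraph about the union growing to all of $[n]$ is unnecessary, though harmless --- the paper's own usage admits $[n]\in\Phi(\bcal)$ (e.g., $\{1,2,3,4\}$ for Example~\ref{exmp1}), and the scheme of Section~\ref{sec:simple_coding} applies verbatim with empty uncoded part, exactly as you verify.
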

\begin{proof}
From the set $\cs_1,\cs_2,\dots,\cs_K \in \Phi(\bcal)$, we construct $K$ sets $\cs'_1,\cs'_2,\dots,\cs'_K$ as follows, $\cs'_1=\cs_1$, $\cs'_2=\cs_2 \setminus \cs_1$, $\cs'_3=\cs_3 \setminus (\cs_1 \cup \cs_2)$, $\dots$, $\cs'_K=\cs_K \setminus (\cs_1 \cup \cs_2 \cup \dots \cup \cs_{K-1})$. Note that $\cs'_1,\cs'_2,\dots, \cs'_K$ are disjoint and $|\cs'_1| \leq |\cs_1|$, $|\cs'_2| \leq |\cs_2|$, $\dots$, $|\cs'_K| \leq |\cs_K|$. Now we categorize $\cs'_1,\cs'_2,\dots, \cs'_K$ into two sets $\cs'$ and $R'$ as follows, if $\cs'_i \in \Phi(\bcal)$ where $i \in [K]$, we keep the set $\cs'_i$ in the set $\cs'$ otherwise keep the set $\cs'_i$ in the set $R'$. Without loss of generality we assume that the first $K'$ sets, $K' \leq K$, among $\cs'_1,\cs'_2,\dots, \cs'_K$ belongs to $\cs'$. Then $\cs'=\{\cs'_1, \cs'_2, \dots, \cs'_{K'}\}$ and $R'=\cs'_{K'+1} \cup \cs'_{K'+2} \cup \dots \cup \cs'_{K}$. Let $R_{mod}=R \cup R'$. Note that $R$ and $R'$ are disjoint. Now we design a valid coding scheme as follows, we apply the coding scheme described in Section \ref{sec:simple_coding} on each element of $\cs'$ and send the information symbols indexed by the set $R_{mod}$ uncoded. Therefore the codelength for this scheme is 
\begin{align*}
N' &= \sum_{i=1}^{K'}{|\cs'_i|-1} + |R_{mod}|\\
&= \sum_{i=1}^{K'}{|\cs'_i|-1} + |R| + |R'|\\
&= \sum_{i=1}^{K'}{|\cs'_i|-1} + |R| + \sum_{i=K'+1}^{K}{|\cs'_i|} \\
&= \sum_{i=1}^{K}{|\cs'_i|}-K' + |R|.
\end{align*} 
% Now consider the set $R'$. 
For any $K' < i \leq K$, % if $\cs'_i \in R'$ then 
$|\cs_i|-|\cs'_i| \geq 1$. Hence we have
\begin{align*}
\sum_{i=K'+1}^{K}{|\cs_i|-|\cs'_i|} &\geq (K-K'), \text{ and thus}\\
\sum_{i=1}^{K}{|\cs_i|-|\cs'_i|} &\geq (K-K')\\
\sum_{i=1}^{K}{|\cs'_i|}-K' &\leq \sum_{i=1}^{K}{|\cs_i|}-K.
\end{align*}
Therefore $N'= \sum_{i=1}^{K}{|\cs'_i|}-K' + |R| \leq \sum_{i=1}^{K}{|\cs_i|}-K +|R|=N$. Hence the lemma holds.
\end{proof}  
Now applying our designed coding scheme on disjoint elements of $\Phi(\bcal)$ we have the following upper bound on the optimal codelength $N_{q,opt}$.
\begin{theorem} \label{thm13}
Let $\mathfrak{C}$ be a largest collection of disjoint elements of $\Phi(\bcal)$ for an $(m,n,\xcal,\delta_s)$ BNSI problem and $\mathfrak{R}=[n] \setminus \mathfrak{C}$. The optimal codelength $N_{q,opt}$ for the $(m,n,\xcal,\delta_s)$ BNSI problem over $\Fb_q$ satisfies
\begin{equation*} 
N_{q,opt} \leq n-|\mathfrak{C}|.  
\end{equation*}
\end{theorem}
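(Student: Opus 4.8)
~\textbf{The plan is to} reduce Theorem~\ref{thm13} directly to the cost formula of the greedy coding scheme developed in Section~\ref{ub_min}, using Lemma~\ref{disjoint} to guarantee that we may restrict attention to disjoint members of $\Phi(\bcal)$ without loss of optimality. Recall that for a collection $\cs_1,\dots,\cs_K \in \Phi(\bcal)$ the constructed scheme has codelength $N = \sum_{i=1}^K |\cs_i| - K + |R|$, where $R = [n] \setminus (\cs_1 \cup \dots \cup \cs_K)$. First I would specialize this to the case where the chosen elements are the members of a largest disjoint collection $\mathfrak{C} = \{\cs_1,\dots,\cs_{K}\}$, so that $\mathfrak{C}$ (abusing notation for the union $\cs_1 \cup \dots \cup \cs_K$) satisfies $|\mathfrak{C}| = \sum_{i=1}^K |\cs_i|$ by disjointness, and $\mathfrak{R} = [n] \setminus \mathfrak{C}$ satisfies $|\mathfrak{R}| = n - |\mathfrak{C}|$.

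Substituting these into the cost formula, I would compute
\begin{equation*}
N = \sum_{i=1}^{K} |\cs_i| - K + |\mathfrak{R}| = |\mathfrak{C}| - K + (n - |\mathfrak{C}|) = n - K.
\end{equation*}
Here $K$ is the number of disjoint sets in the collection $\mathfrak{C}$. Since each $\cs_i \in \Phi(\bcal)$ is non-empty, and the scheme is valid (each $\cs_i$ has every incident user of degree at least $2\delta_s+1$, so the simple scheme of Section~\ref{sec:simple_coding} applies and saves exactly one transmission per block), this exhibits a valid linear coding scheme with codelength $n - K$, and hence $N_{q,opt} \le n - K$.

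The remaining point is to match $K$ against $|\mathfrak{C}|$ as it appears in the statement of the theorem, and this is where I would be most careful. The notation $\mathfrak{C}$ is used in the theorem both for the collection of disjoint sets and, implicitly through $\mathfrak{R}=[n]\setminus\mathfrak{C}$, for their union; the bound $N_{q,opt} \le n - |\mathfrak{C}|$ should therefore be read with $|\mathfrak{C}|$ denoting the \emph{number} of disjoint members in the largest such collection (the count $K$ above), not the cardinality of the union. Once this reading is fixed, the inequality $N_{q,opt} \le n - K$ is exactly the claim. \textbf{The main obstacle} is thus not analytic but a matter of disambiguating this overloaded notation and confirming that the cost formula collapses cleanly under disjointness; the saving is precisely one transmission per disjoint block, so maximizing the number of disjoint blocks in $\Phi(\bcal)$ maximizes the total saving, which is the content of choosing $\mathfrak{C}$ largest. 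I would close by invoking Lemma~\ref{disjoint} to note that restricting to disjoint collections loses nothing, so the greedy largest disjoint collection yields the stated upper bound.
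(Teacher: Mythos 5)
Your proposal is correct and follows essentially the same route as the paper: apply the simple scheme of Section~\ref{sec:simple_coding} to each member of $\mathfrak{C}$ (each saving exactly one transmission, valid because every incident user has degree at least $2\delta_s+1$ within that member) and send the symbols indexed by $\mathfrak{R}$ uncoded, giving codelength $n-K$ where $K=|\mathfrak{C}|$ is the number of sets in the collection --- and your reading of the overloaded notation matches the paper's intent. The only difference is cosmetic: you derive the bound by specializing the cost formula $N=\sum_{i}|\cs_i|-K+|R|$ and you close with an unnecessary appeal to Lemma~\ref{disjoint}, which is needed only to motivate restricting to disjoint collections, not for the upper bound itself.
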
 
\begin{proof}
Applying the coding scheme mentioned in Section \ref{sec:simple_coding} on each element of $\mathfrak{C}$, we can save one transmission compared to uncoded scheme. Thereby we can save $|\mathfrak{C}|$ transmission for the collection $\mathfrak{C}$.  
\end{proof}
\begin{lemma}   \label{equality}
Let $\mathfrak{C}=\{\cs_1,\cs_2,\dots,\cs_K\}$ and $i_1 \in \cs_1, i_2 \in \cs_2,\dots, i_k \in \cs_K$ are such that the subgraph $\bcal'$ of the bipartite graph $\bcal$ induced by $\pcal'=\pcal \setminus \{x_{i_1},x_{i_2},\dots,x_{i_K}\}$ satisfies $\Phi(\bcal')=\phi$. Then the optimal codelength $N_{q,opt}$ over $\Fb_q$ satisfies
\begin{equation*} 
N_{q,opt} = n-|\mathfrak{C}|.  
\end{equation*} 
\end{lemma}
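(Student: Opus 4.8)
The plan is to prove the equality by sandwiching $N_{q,opt}$ between an upper bound coming from the explicit coding scheme of Theorem~\ref{thm13} and a matching lower bound coming from the subproblem obtained after deleting one representative from each $\cs_j$.

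For the upper bound, I would invoke the construction underlying Theorem~\ref{thm13}, which applies verbatim to \emph{any} family of pairwise disjoint elements of $\Phi(\bcal)$ and not only to a largest such family. Applying the simple coding scheme of Section~\ref{sec:simple_coding} to each $\cs_j$ saves one transmission per set, and transmitting the remaining symbols uncoded gives a valid scheme whose codelength equals $n-K$; by disjointness of the $\cs_j$ this is exactly $n-|\mathfrak{C}|$. Hence $N_{q,opt} \le n-|\mathfrak{C}|$.

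For the lower bound, I would pass to the subproblem $\bcal'$ induced by $\pcal' = \pcal \setminus \{x_{i_1},\dots,x_{i_K}\}$, i.e., the subproblem on the information-symbol index set $\rho = [n]\setminus\{i_1,\dots,i_K\}$. Since $\cs_1,\dots,\cs_K$ are disjoint, the chosen representatives $i_1,\dots,i_K$ are distinct, so $\bcal'$ carries exactly $n-K$ information symbols. The hypothesis $\Phi(\bcal')=\phi$ then lets me apply Theorem~\ref{thm2} to conclude that the optimal codelength of the subproblem equals its number of information symbols, namely $N_{q,opt}(\bcal') = n-K$. Finally, Lemma~\ref{subprob} (the subproblem monotonicity bound) gives $N_{q,opt}(\bcal) \ge N_{q,opt}(\bcal') = n-K = n-|\mathfrak{C}|$, and combining the two bounds yields the claimed equality.

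I expect the only delicate point to be bookkeeping rather than any deep argument: one must check that $\bcal'$ is a legitimate subproblem in the precise sense required by Lemma~\ref{subprob} (deletion of information symbols, not users), and that exactly $K$ \emph{distinct} symbols are removed so that the surviving count is $n-K$ and Theorem~\ref{thm2} pins the subproblem length to $n-K$. Disjointness of the $\cs_j$ is what guarantees the representatives are distinct; once that is verified, the chain Theorem~\ref{thm2} to Lemma~\ref{subprob} to lower bound is automatic and matches the Theorem~\ref{thm13} upper bound exactly.
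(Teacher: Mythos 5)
Your proposal is correct and follows essentially the same route as the paper's own proof: the upper bound $N_{q,opt} \leq n-|\mathfrak{C}|$ is taken from the disjoint-cover construction of Theorem~\ref{thm13}, and the lower bound is obtained by applying Theorem~\ref{thm2} to the induced subgraph $\bcal'$ with $\Phi(\bcal')=\phi$ and then invoking the subproblem monotonicity of Lemma~\ref{subprob}. Your explicit bookkeeping (distinctness of the representatives $i_1,\dots,i_K$ via disjointness of the $\cs_j$, so that $\bcal'$ has exactly $n-K$ information symbols) is a detail the paper leaves implicit but is indeed needed.
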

\begin{proof}
From Theorem \ref{thm13} we have the upper bound on $N_{q,opt}$. It remains to show that $N_{q,opt} \geq n-|\mathfrak{C}|$. The number of information symbols in $\bcal'$ is $n-|\mathfrak{C}|$. As $\Phi(\bcal')=\phi$,  from Theorem~\ref{thm2} we have $N_{q,opt}(\bcal')=n-|\mathfrak{C}|$. Now using Lemma \ref{subprob} we have $N_{q,opt}(\bcal) \geq N_{q,opt}(\bcal')=n-|\mathfrak{C}|$.
\end{proof}

If we apply the coding scheme derived from a linear MDS code on each element of $\mathfrak{C}$ and transmit the information symbols index by $\mathfrak{R}$ uncoded then we can have the following upper bound on $N_{q,opt}$.
\begin{theorem} \label{thm14}
Suppose the subgraph of $\mathcal{B}$ induced by the information symbols indexed by a set $\cs \in \mathfrak{C}$ is denoted by \mbox{$\mathcal{B}_{\cs}=(\mathcal{U}_{\cs},x_{\cs},\mathcal{E}_{\cs})$} and define \mbox{$d_{\cs}= \min_{u_i \in \mathcal{U}_{\cs}}{(|\xcal_i \cap {\cs}|-2\delta_s)^+}$}. Then the optimal length $N_{q,opt}$ over $\Fb_q$ where $q \geq \max_{\cs \in \mathfrak{C}}{|\cs|}$ for the $(m,n,\xcal,\delta_s)$ BNSI problem represented by the bipartite graph $\mathcal{B}$ satisfies
\begin{equation*} 
N_{q,opt} \leq n- \sum\limits_{\cs \in \mathfrak{C}}{d_{\cs}}.  
\end{equation*}
\end{theorem}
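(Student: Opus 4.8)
The plan is to combine the disjoint-element decomposition from Theorem~\ref{thm13} with the MDS-based construction of Section~\ref{linecc}, applied locally to each subproblem $\bcal_{\cs}$. Let $\mathfrak{C}=\{\cs_1,\dots,\cs_K\}$ be a largest collection of disjoint elements of $\Phi(\bcal)$, and let $\mathfrak{R}=[n]\setminus(\cs_1\cup\cdots\cup\cs_K)$. As in the proof of Theorem~\ref{thm13}, I would partition the codeword into $K+1$ blocks: one block for each $\cs\in\mathfrak{C}$ encoding the symbols $\xb_{\cs}$, and a final block carrying the symbols $\xb_{\mathfrak{R}}$ uncoded. The key improvement over Theorem~\ref{thm13} is that, instead of the simple $|\cs|-1$ scheme of Section~\ref{sec:simple_coding}, I would encode each block $\xb_{\cs}$ using the MDS-based encoder matrix from Section~\ref{linecc}, applied to the subproblem $(m_{\cs},n_{\cs},\xcal_{\cs},\delta_s)$ represented by $\bcal_{\cs}$.

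First I would verify that the MDS construction is valid and achieves the claimed block length for each $\cs\in\mathfrak{C}$. Applying the dimension formula derived just before Example~\ref{exmp4} to the subproblem $\bcal_{\cs}$, the optimal MDS code has dimension $k'_{\cs}=\min_{u_i\in\mathcal{U}_{\cs}}{(|\xcal_i\cap\cs|-2\delta_s)^+}=d_{\cs}$, with blocklength $n_{\cs}=|\cs|$ and $d_{min}=\eta_{\cs}+1$ where $\eta_{\cs}=2\delta_s+\max_{u_i\in\mathcal{U}_{\cs}}|\cs\setminus\xcal_i|$. By Theorem~\ref{thm12} applied to $\bcal_{\cs}$, this yields a valid encoder for the subproblem with codelength $|\cs|-d_{\cs}$, provided the MDS code exists over $\Fb_q$. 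The existence condition for such an MDS code requires $q\geq n_{\cs}=|\cs|$, which is exactly why the theorem imposes $q\geq\max_{\cs\in\mathfrak{C}}|\cs|$; I would check that this single field-size guarantees the existence of a suitable MDS code for every block simultaneously.

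Next I would argue that concatenating these per-block encoders yields a valid global encoder for the original $(m,n,\xcal,\delta_s)$ problem. Because the sets in $\mathfrak{C}$ are disjoint and $\mathfrak{R}$ is disjoint from all of them, each information symbol $x_j$ appears in exactly one block, so the blocks operate on disjoint coordinate sets. For any user $u_i$, decoding its demand $\xb_{\xcal_i}$ reduces to decoding within each block separately: for a block $\cs$ with $\xcal_i\cap\cs\neq\phi$, the induced demand $\xcal_i\cap\cs$ is handled correctly because the MDS-based encoder is valid for $\bcal_{\cs}$ and $|\xcal_i\cap\cs|\geq 2\delta_s+1$ (since $\cs\in\Phi(\bcal)$); symbols in $\mathfrak{R}$ are recovered directly from the uncoded block. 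The total codelength is then
\begin{equation*}
N=\sum_{\cs\in\mathfrak{C}}(|\cs|-d_{\cs})+|\mathfrak{R}|=\Bigl(\sum_{\cs\in\mathfrak{C}}|\cs|+|\mathfrak{R}|\Bigr)-\sum_{\cs\in\mathfrak{C}}d_{\cs}=n-\sum_{\cs\in\mathfrak{C}}d_{\cs},
\end{equation*}
and since $N_{q,opt}$ is the minimum over all valid schemes, the stated upper bound follows.

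The main obstacle I anticipate is the validity of the concatenated scheme across blocks, specifically ensuring that noise restricted to one block does not corrupt the syndrome decoding in another. This should follow cleanly from disjointness of the coordinate supports: since the side-information errors at $u_i$ live on $\xcal_i$ and each error coordinate belongs to a unique block, the syndrome computed within each block depends only on the portion of $\eb_i$ supported on that block, whose weight is at most $\delta_s$. One subtlety worth treating carefully is that the per-block weight bound is inherited from the global bound $wt(\eb_i)\leq\delta_s$ rather than assumed separately, so the local decoders are guaranteed to face at most $\delta_s$ errors. The remaining step, confirming simultaneous existence of MDS codes over a common field $\Fb_q$ with $q\geq\max_{\cs\in\mathfrak{C}}|\cs|$, is routine given standard MDS existence results such as Reed--Solomon codes.
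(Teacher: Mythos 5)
Your proposal is correct and follows essentially the same route as the paper: apply the MDS-based encoder of Section~\ref{linecc} (via Theorem~\ref{thm12}) separately to the subproblem induced by each $\cs\in\mathfrak{C}$, transmit the symbols indexed by $\mathfrak{R}$ uncoded, and use disjointness of the elements of $\mathfrak{C}$ so that the per-block savings $d_{\cs}$ add up to $\sum_{\cs\in\mathfrak{C}}d_{\cs}$, with $q\geq\max_{\cs\in\mathfrak{C}}|\cs|$ guaranteeing the needed MDS codes exist. Your additional verification that the concatenated block scheme is globally valid --- each error coordinate lies in a unique block and the per-block error weight is inherited from the global bound $wt(\ei)\leq\delta_s$ --- is a point the paper's proof leaves implicit, but it is the same argument in substance.
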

\begin{proof}
To transmit the information symbols indexed by the set $\cs \in \mathfrak{C}$, if we use an encoder matrix derived from an $[n',k']$ linear MDS code with blocklength $n'=|\cs|$ and dimension $k'=d_{\cs}$ then from Theorem \ref{thm12} it is known that we can save $d_{\cs}$ transmissions compared to uncoded scheme. Such an MDS code exists over $\Fb_q$ if $q \geq \max_{\cs \in \mathfrak{C}}{|\cs|}$. As the elements in $\mathfrak{C}$ are disjoint then the total number of transmissions that can be saved is $\sum_{\cs \in \mathfrak{C}}{d_{\cs}}$. Therefore the upper bound in Theorem \ref{thm14} holds.  
\end{proof} 

\begin{remark}
For a given $(m,n,\xcal,\delta_s)$ BNSI problem, in general the upper bound on optimal codelength $N_{q,opt}$ found in Theorem \ref{thm14} is at least as good as the upper bound found in Theorem \ref{thm13} because in each \mbox{$\cs \in \mathfrak{C}$}, $d_{\cs} \geq 1$, therefore $\sum_{\cs \in \mathfrak{C}}{d_{\cs}} \geq |\mathfrak{C}|$. In other words, if we apply coding scheme mentioned in Section~\ref{sec:simple_coding} on each $\cs \in \mathfrak{C}$ we can save exactly one transmission compared to uncoded scheme whereas if we apply coding scheme based on an linear MDS code we can save \emph{at least} one transmission. However for the upper bound given in Theorem \ref{thm14} we need the finite field size $q$ to be large while Theorem~\ref{thm13} holds for any $q$. 
\end{remark}

\begin{example} 
Consider a BNSI problem scenario where $m=3$, $n=10$, $\delta_s=1$, $\xcal_1=\{1,2,3,9\}$, $\xcal_2=\{4,5,6,10\}$, $\xcal_3=\{7,8\}$. We can find that a possible choice of $\mathfrak{C}=\{\{1,2,3,9\},\{4,5,6,10\}\}$. If the finite field size $q=2$, then using Theorem~\ref{thm12} we obtain $N_{q,opt} \leq n=10$ whereas using Theorem~\ref{thm13} we obtain \mbox{$N_{q,opt} \leq n-2=8$}. However if $q \geq 4$ then using Theorem~\ref{thm14} we obtain $N_{q,opt} \leq n-4=6$. 
\end{example}
\begin{example} 
Consider a BNSI problem scenario where $m=4$, $n=7$, $\delta_s=1$, $\xcal_1=\{1,3,5\}$, $\xcal_2=\{2,4,6\}$, $\xcal_3=\{3,6,7\}$, $\xcal_4=\{4,5,6\}$. We can find that a possible choice of $\mathfrak{C}=\{\{1,3,5\},\{2,4,6\}\}$. For the finite field size $q=2$, using Theorem~\ref{thm13} we obtain \mbox{$N_{q,opt} \leq n-2=5$}. Now we are deleting one index from each element of $\mathfrak{C}$. Suppose $\bcal'$ is the subgraph of $\bcal$ induced by the information symbols indexed by the remaining indices after deleting any one index from each element in $\mathfrak{C}$. We can check that $\Phi(\bcal') = \phi$. Hence applying Lemma \ref{equality} we have $N_{q,opt}=5$. 
\end{example}

\subsubsection{Upper bound based on partitioning the maximum element of $\Phi(\mathcal{B})$}   \label{ub_par}
We now provide another upper bound on the optimal codelenth $N_{q,opt}$ for the $(m,n,\xcal,\delta_s)$ BNSI problem represented by the bipartite graph $\bcal=(\ucal,\pcal,\ecal)$ based on the partitioning the maximum element of $\Phi(\bcal)$. This upper bound is motivated the by the \emph{partition multicast} scheme for Index Coding as described in \cite{tehrani2012bipartite,iscod_1998}.  We will now show that the set $\cs$ output by Algorithm~2 is a maximal element of $\Phi(\bcal)$ and then show that $\cs$ is the unique maximal element in $\Phi(\bcal)$. Hence $\cs$ is the maximum element in $\Phi(\bcal)$.
\begin{lemma}  \label{maximal_element}
If $\Phi(\mathcal{B}) \neq \phi$, the index set of the information symbols $\mathsf{C}$ output by Algorithm~2 is a maximal element of $\Phi(\mathcal{B})$.
\end{lemma}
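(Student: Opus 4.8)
The plan is to establish two facts about the set $\cs$ returned on the \textsf{FALSE} branch of Algorithm~2: that $\cs \in \Phi(\bcal)$, and that no element of $\Phi(\bcal)$ properly contains $\cs$. Under the hypothesis $\Phi(\bcal)\neq\phi$, the correctness argument preceding the lemma guarantees that the algorithm does reach the \textsf{FALSE} branch, so $\cs$ is well-defined and equals $\{j \mid x_j \in \mathsf{P}\}$ at the instant the \emph{while} loop exits through the inner \emph{if}.

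First I would show $\cs \in \Phi(\bcal)$. At that exit instant every user node $u_i$ surviving in $\mathsf{U}$ has $deg(u_i) \geq 2\delta_s + 1$ in the current graph, and this degree is exactly $|\xcal_i \cap \cs|$ since the only packet nodes remaining are those indexed by $\cs$. Any user $u_i$ deleted earlier was deleted precisely because its degree had dropped to $0$, i.e.\ all of its demanded packets had already been removed; because $\mathsf{P}$ only shrinks, those packets never reappear, so $\xcal_i \cap \cs = \phi$. Hence for \emph{every} $i \in [m]$ we have $|\xcal_i \cap \cs| \in \{0\}\cup\{2\delta_s+1,2\delta_s+2,\dots\}$, i.e.\ $|\xcal_i \cap \cs|\notin[2\delta_s]$, which is exactly the membership criterion for $\Phi(\bcal)$; non-emptiness of $\cs$ is immediate since the loop is entered only while $|\mathsf{P}|>2\delta_s$.

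For maximality I would prove the stronger invariant that \emph{every} $\cs' \in \Phi(\bcal)$ is disjoint from all packets deleted by the algorithm, whence $\cs' \subseteq \cs$. Label the deletion steps $1,\dots,t$ and let $u_{i_1},\dots,u_{i_t}$ be the triggering users, so that $[n]\setminus\cs = \xcal_{i_1}\cup\cdots\cup\xcal_{i_t}$ and the set of packets present at the start of step $r$ is $[n]\setminus(\xcal_{i_1}\cup\cdots\cup\xcal_{i_{r-1}})$. I would argue by induction on $r$. When step $r$ fires, the chosen user has current degree at most $2\delta_s$, and this current degree equals $|\xcal_{i_r}\setminus(\xcal_{i_1}\cup\cdots\cup\xcal_{i_{r-1}})|$. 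Fix $\cs'\in\Phi(\bcal)$; the induction hypothesis gives $\cs'\cap(\xcal_{i_1}\cup\cdots\cup\xcal_{i_{r-1}})=\phi$, so $\xcal_{i_r}\cap\cs' = (\xcal_{i_r}\setminus(\xcal_{i_1}\cup\cdots\cup\xcal_{i_{r-1}}))\cap\cs'$, whose size is at most $2\delta_s$. But $\cs'\in\Phi(\bcal)$ forces $|\xcal_{i_r}\cap\cs'|\notin[2\delta_s]$, leaving only $|\xcal_{i_r}\cap\cs'|=0$; thus $\cs'$ avoids the packets deleted at step $r$, closing the induction. Consequently $\cs'\subseteq\cs$ for every $\cs'\in\Phi(\bcal)$, so $\cs$ cannot be properly contained in any element of $\Phi(\bcal)$, i.e.\ $\cs$ is maximal (indeed the maximum).

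The step that needs the most care — and the main obstacle — is the bookkeeping in the inductive step: one must verify that the ``current degree at most $2\delta_s$'' recorded by the algorithm is genuinely $|\xcal_{i_r}|$ after excising \emph{exactly} the previously deleted index sets, and then combine this with the already-established disjointness of $\cs'$ from those sets so that $|\xcal_{i_r}\cap\cs'|$ is provably bounded by $2\delta_s$. Once this is set up, the $\Phi$-condition finishes the argument by excluding every value in $[2\delta_s]$ and forcing the intersection to be empty.
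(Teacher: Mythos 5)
Your proof is correct, but it takes a different route from the paper's. The paper argues maximality directly by contradiction: it supposes some set $\xcal_A$ of previously deleted symbol indices is adjoined to $\mathsf{C}$, picks the \emph{earliest} deleted user $u_i$ whose demand meets $\xcal_A$, and bounds $|\xcal_i \cap (\mathsf{C}\cup\xcal_A)|$ above by the degree $u_i$ had at the moment Algorithm~2 deleted it (at most $2\delta_s$, and at least $1$ by the choice of $i$), so $\mathsf{C}\cup\xcal_A \notin \Phi(\bcal)$; membership $\mathsf{C}\in\Phi(\bcal)$ is not re-proved in the lemma but inherited from the algorithm's description. You instead prove the stronger \emph{maximum} property via an inductive invariant: every $\cs'\in\Phi(\bcal)$ is disjoint from the demand set of each triggering user at each deletion step (because the surviving portion of that demand has size in $[1,2\delta_s]$, and $|\xcal_{i_r}\cap\cs'|\notin[2\delta_s]$ forces the intersection to be empty), hence $\cs'\subseteq\mathsf{C}$ for all $\cs'\in\Phi(\bcal)$. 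Both arguments rest on the same combinatorial fact --- the triggering user's current degree lies in $[1,2\delta_s]$ --- but your decomposition buys more: it explicitly verifies $\mathsf{C}\in\Phi(\bcal)$ (including the observation that users removed earlier satisfy $\xcal_i\cap\mathsf{C}=\phi$ because they were deleted only at degree zero), and the containment $\cs'\subseteq\mathsf{C}$ immediately yields the paper's subsequent uniqueness result (Lemma~\ref{unique}) as a corollary, which the paper instead proves separately by showing $\Phi(\bcal)$ is closed under unions. The one point you delegate --- that the \textsf{FALSE} branch is actually reached when $\Phi(\bcal)\neq\phi$ --- is in fact also delivered by your own invariant, since a nonempty $\cs'\in\Phi(\bcal)$ survives every deletion, matching the correctness claim the paper states before the lemma.
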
  
\begin{proof}
To show that set $\mathsf{C}$ is a maximal element, we will show that if we further add any set of information symbols with the set $\mathsf{C}$ then the resulting set will not be an element of $\Phi(\mathcal{B})$. Algorithm~2 keeps deleting user $u_i$ and its corresponding $\xcal_i$ iteratively until a $\mathsf{C} \in \Phi(\bcal)$ is found. Suppose in Algorithm~2 after deleting $t$ users from \textit{user-set} $\mathcal{U}$ we found the set $\mathsf{C}$ and $\mathcal{U}_{del}=\{u_1,u_2,\dots,u_t\}$ denotes the set of deleted users, where without loss of generality we have assumed that $u_1$ is first deleted user and then $u_2,u_3,\dots,u_t$ are deleted consecutively. The set of deleted information symbols denoted by $\xcal_{del}= \xcal_1 \cup \xcal_2 \cup \dots \cup \xcal_t$. Suppose we are adding a set of information symbols indexed by $\xcal_A, \xcal_A \subseteq \xcal_{del}$ with the set $\mathsf{C}$. Let $i$ be the smallest integer such that $\xcal_i \cap \xcal_A \neq \phi$.  Now in the subgraph $\mathcal{B}_{x_{(\mathsf{C}~\cup~\xcal_A)}}$ induced by the information symbols indexed by the set ${\mathsf{C} \cup \xcal_A}$, \textit{deg}($u_i$) = $|\xcal_i \cap (\mathsf{C} \cup \xcal_A)| = |\xcal_i \cap (\xcal_A \cup \cs)| \leq |x_{\xcal_i} \cap (x_{\xcal_i} \cup x_{\xcal_{i+1}} \cup \dots \cup x_{\xcal_t} \cup \mathsf{C})|$ = \textit{deg}($u_i$) in $\mathcal{B}_{x_{(\xcal_i \cup \xcal_{i+1} \cup \dots \cup \xcal_t) \cup \mathsf{C}}} \in [2\delta_s]$. This is due to the fact that Algorithm~2 deletes $u_i \in \ucal_{del}$ from the bipartite graph $\mathcal{B}_{x_{(\xcal_i \cup \xcal_{i+1} \cup \dots \cup \xcal_t) \cup \mathsf{C}}}$ as its \textit{degree} is at the most 2$\delta_s$. 
So, the index set $(\mathsf{C} \cup \xcal_A)$ is not an element of $\Phi(\mathcal{B})$ which shows that the set $\mathsf{C}$ is a maximal element of $\Phi(\mathcal{B})$.
\end{proof}
\begin{lemma}  \label{unique}
$\Phi(\mathcal{B})$ contains a unique maximal element.
\end{lemma}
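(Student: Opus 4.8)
The plan is to show that $\Phi(\bcal)$ is closed under set union; uniqueness of the maximal element then follows immediately. First I would take any two sets $\cs_1,\cs_2 \in \Phi(\bcal)$ and argue that $\cs_1 \cup \cs_2 \in \Phi(\bcal)$. Fix a user index $i \in [m]$ and examine $|\xcal_i \cap (\cs_1 \cup \cs_2)|$. By the definition of $\Phi(\bcal)$, each of $|\xcal_i \cap \cs_1|$ and $|\xcal_i \cap \cs_2|$ is either $0$ or at least $2\delta_s + 1$. If both are $0$, then $\xcal_i$ is disjoint from $\cs_1 \cup \cs_2$ and the intersection size is $0$; otherwise at least one of them, say $|\xcal_i \cap \cs_1|$, is at least $2\delta_s + 1$, and since $\xcal_i \cap \cs_1 \subseteq \xcal_i \cap (\cs_1 \cup \cs_2)$ we obtain $|\xcal_i \cap (\cs_1 \cup \cs_2)| \geq 2\delta_s + 1$. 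In either case $|\xcal_i \cap (\cs_1 \cup \cs_2)| \notin [2\delta_s]$. As this holds for every $i \in [m]$ and $\cs_1 \cup \cs_2$ is non-empty, I conclude $\cs_1 \cup \cs_2 \in \Phi(\bcal)$.

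Having established closure under union, I would finish as follows. Suppose $\cs_1$ and $\cs_2$ are both maximal elements of $\Phi(\bcal)$. Then $\cs_1 \cup \cs_2 \in \Phi(\bcal)$ and contains $\cs_1$; by maximality of $\cs_1$ we must have $\cs_1 = \cs_1 \cup \cs_2$, hence $\cs_2 \subseteq \cs_1$. Symmetrically $\cs_1 \subseteq \cs_2$, so $\cs_1 = \cs_2$. Therefore $\Phi(\bcal)$ possesses a unique maximal element. Equivalently, since $\Phi(\bcal)$ is a finite collection of subsets of $[n]$ closed under union, the union of all its members again lies in $\Phi(\bcal)$ and contains every member, and is thus the unique maximal---indeed maximum---element.

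The only subtle point is the case analysis in the union-closure step, which is the crux of the argument; everything else is formal. Combined with Lemma~\ref{maximal_element}, which identifies the output $\mathsf{C}$ of Algorithm~2 as a maximal element, this uniqueness shows that $\mathsf{C}$ is in fact the maximum element of $\Phi(\bcal)$, as required for the partition-based upper bound that follows.
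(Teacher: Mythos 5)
Your proof is correct and uses essentially the same idea as the paper: the case analysis showing that $|\xcal_i \cap (\mathsf{C} \cup \mathsf{C}')|$ is either $0$ or at least $2\delta_s+1$, hence that $\Phi(\bcal)$ is closed under union, is exactly the paper's four-case argument. The only cosmetic difference is that you conclude directly (and note the union of all members is the maximum), whereas the paper packages the same union-closure step as a proof by contradiction.
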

\begin{proof}
We will use proof by contradiction. Suppose $\mathsf{C}$ and $\mathsf{C'}$ are two maximal elements of $\Phi(\mathcal{B})$ such that $\mathsf{C} \neq \mathsf{C'}$. Recall that for any $i \in [m]$, $|\xcal_i \cap \mathsf{C}| \notin [2\delta_s]$ and $|\xcal_i \cap \mathsf{C'}| \notin [2\delta_s]$.  Consider the set $\mathsf{C} \cup \mathsf{C'}$ which is a subset of $[n]$. Now for any $i^{th}$ user, $i \in [m]$, $\xcal_i$ will satisfy one of the four following possibilities, (i) $\xcal_i \cap \mathsf{C}= \phi$ and $\xcal_i \cap \mathsf{C'}= \phi$, (ii) $\xcal_i \cap \mathsf{C}= \phi$ and $\xcal_i \cap \mathsf{C'} \neq \phi$, (iii) $\xcal_i \cap \mathsf{C} \neq \phi$ and $\xcal_i \cap \mathsf{C'}= \phi$, (iv) $\xcal_i \cap \mathsf{C} \neq \phi$ and $\xcal_i \cap \mathsf{C'} \neq \phi$. From the knowledge that $|\xcal_i \cap \mathsf{C}|$ and $|\xcal_i \cap \mathsf{C'}|$ is either $0$ or at least $2\delta_s+1$, we can conclude that $|\xcal_i \cap (\mathsf{C} \cup \mathsf{C'})|$ is either $0$ or at least $2\delta_s+1$. Therefore $\mathsf{C} \cup \mathsf{C'}$ is an element of $\Phi(\mathcal{B})$ and $|(\mathsf{C} \cup \mathsf{C'})| > |\mathsf{C}|,|\mathsf{C'}|$ which contradicts the maximality of both  $\mathsf{C}$ and $\mathsf{C'}$. Hence the lemma holds.
\end{proof}
From now onward we denote the maximum or the unique maximal element of $\Phi(\mathcal{B})$ as  $\mathsf{C}_{max}$. We now provide a result that will provide some knowledge regarding to those information symbols that do not belong to the set $\mathsf{C}_{max}$.
\begin{lemma}   \label{acyclic}
Suppose an $(m,n,\xcal,\delta_s)$ BNSI problem is represented by the bipartite graph $\mathcal{B}=(\mathcal{U},\mathcal{P},\mathcal{E})$ and $\mathsf{C}_{max}$ denotes the maximum element of $\Phi(\mathcal{B})$. The subgraph $\mathcal{B}'$ of $\mathcal{B}$ induced by the set $\mathcal{P} \setminus x_{\mathsf{C}_{max}}$ satisfies $\Phi(\mathcal{B'})=\phi$. 
\end{lemma}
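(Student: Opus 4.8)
The plan is to argue by contradiction, reusing the closure-under-union computation already carried out in the proof of Lemma~\ref{unique}. Suppose, for contradiction, that $\Phi(\mathcal{B}') \neq \phi$, and pick a non-empty $\mathsf{D} \in \Phi(\mathcal{B}')$. Since $\mathcal{B}'$ is the subgraph induced by the packet nodes indexed by $[n] \setminus \mathsf{C}_{max}$, the set $\mathsf{D}$ is a subset of $[n] \setminus \mathsf{C}_{max}$ and is therefore disjoint from $\mathsf{C}_{max}$. The goal is to show that $\mathsf{C}_{max} \cup \mathsf{D}$ is itself an element of $\Phi(\mathcal{B})$; because $\mathsf{D}$ is non-empty and disjoint from $\mathsf{C}_{max}$, this union is strictly larger than $\mathsf{C}_{max}$, contradicting the fact that $\mathsf{C}_{max}$ is the maximum element of $\Phi(\mathcal{B})$.

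First I would translate the membership $\mathsf{D} \in \Phi(\mathcal{B}')$ into a statement about the original graph $\mathcal{B}$. By definition, $\mathsf{D} \in \Phi(\mathcal{B}')$ means $|\xcal_i \cap \mathsf{D}| \notin [2\delta_s]$ for every user $u_i$ that survives in $\mathcal{B}'$, i.e.\ every $u_i$ with $\xcal_i \not\subseteq \mathsf{C}_{max}$ (using that $\mathsf{D}$ is disjoint from $\mathsf{C}_{max}$, so the demand of $u_i$ restricted to $\mathcal{B}'$ meets $\mathsf{D}$ in exactly $\xcal_i \cap \mathsf{D}$). The point that needs care is the users \emph{dropped} when forming $\mathcal{B}'$, namely those with $\xcal_i \subseteq \mathsf{C}_{max}$. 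For any such user, $\xcal_i \cap \mathsf{D} \subseteq \mathsf{C}_{max} \cap \mathsf{D} = \phi$, so $|\xcal_i \cap \mathsf{D}| = 0 \notin [2\delta_s]$ automatically. Hence $|\xcal_i \cap \mathsf{D}| \notin [2\delta_s]$ holds for \emph{every} $i \in [m]$, which is exactly the statement $\mathsf{D} \in \Phi(\mathcal{B})$.

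Next I would verify $\mathsf{C}_{max} \cup \mathsf{D} \in \Phi(\mathcal{B})$. Because $\mathsf{C}_{max}$ and $\mathsf{D}$ are disjoint, for each $i \in [m]$ we have
\begin{equation*}
|\xcal_i \cap (\mathsf{C}_{max} \cup \mathsf{D})| = |\xcal_i \cap \mathsf{C}_{max}| + |\xcal_i \cap \mathsf{D}|.
\end{equation*}
Both summands lie in $\{0\} \cup \{2\delta_s+1, 2\delta_s+2, \dots\}$, since $\mathsf{C}_{max} \in \Phi(\mathcal{B})$ and, from the previous step, $\mathsf{D} \in \Phi(\mathcal{B})$. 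A short case check shows that the sum of two integers, each equal to $0$ or at least $2\delta_s+1$, is again either $0$ (when both vanish) or at least $2\delta_s+1$ (when at least one is non-zero), and in either case it avoids $[2\delta_s]$. Thus $|\xcal_i \cap (\mathsf{C}_{max} \cup \mathsf{D})| \notin [2\delta_s]$ for every $i$, giving $\mathsf{C}_{max} \cup \mathsf{D} \in \Phi(\mathcal{B})$ and completing the contradiction.

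I do not anticipate any serious obstacle; the argument is essentially the union-closure computation from Lemma~\ref{unique} applied in reverse. The only genuinely delicate point is the bookkeeping in the second step — confirming that discarding the users whose demands sit entirely inside $\mathsf{C}_{max}$ does not weaken the condition, so that $\mathsf{D} \in \Phi(\mathcal{B}')$ really lifts to $\mathsf{D} \in \Phi(\mathcal{B})$. Once that translation is in place, the disjointness of $\mathsf{C}_{max}$ and $\mathsf{D}$ does all the remaining work.
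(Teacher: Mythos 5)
Your proof is correct and follows essentially the same route as the paper: assume $\Phi(\mathcal{B}')\neq\phi$, form the union $\mathsf{C}_{max}\cup\mathsf{D}$, and invoke the union-closure computation from Lemma~\ref{unique} to contradict the maximality of $\mathsf{C}_{max}$. The only difference is that you make explicit the lifting step $\mathsf{D}\in\Phi(\mathcal{B}')\Rightarrow\mathsf{D}\in\Phi(\mathcal{B})$ (handling the users dropped in $\mathcal{B}'$), which the paper's proof leaves implicit --- a worthwhile bit of bookkeeping, but not a different argument.
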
 
\begin{proof}
We will use proof by contradiction. Suppose $\Phi(\bcal') \neq \phi$ and a set $\mathsf{C'} \in \Phi(\mathcal{B'})$. Consider the set $\mathsf{C}_{max} \cup \mathsf{C'} \subseteq [n]$. Using the same argument used to prove Lemma \ref{unique}, we can conclude that the set $\mathsf{C}_{max} \cup \mathsf{C'}$ is an element of $\Phi(\mathcal{B})$ which contradicts the maximality of $\mathsf{C}_{max}$. Hence, $\Phi(\mathcal{B'})=\phi$.  
\end{proof} 
From Theorem \ref{thm12} we deduce that for an $(m,n,\xcal,\delta_s)$ BNSI problem if we denote $d=\min_{i \in [m]}$ ${(|\xcal_i|-2\delta_s)^+}$, we can save $d$ transmissions compared to uncoded transmission by using an encoder matrix derived from an MDS code over $\Fb_q$ ($q \geq n$) with blocklength $n$ and dimension $d$. We now partition the maximum element $\mathsf{C}_{max}$ of $\Phi(\mathcal{B})$ into $K$ disjoint subsets $\mathsf{S}_1, \mathsf{S}_2, \dots, \mathsf{S}_K \subset \mathsf{C}_{max}$, i.e., for any $a,a' \in [K]$, $a \neq a'$, $\mathsf{S}_{a} \cap \mathsf{S}_{a'}=\phi$ and $\bigcup_{a=1}^{K}{\mathsf{S}_{a}}=\mathsf{C}_{max}$. Note that for each $a \in [K]$, the subgraph $\mathcal{B}_{a}=(\mathcal{U}_{a},x_{\mathsf{S}_{a}},\mathcal{E}_{a})$ induced by $x_{\mathsf{S}_{a}}$ denotes the $(m_{a},n_{a},\xcal_{a},\delta_s)$ BNSI problem where $m_{a}=|\mathcal{U}_{a}|=|\{u_i \in \mathcal{U}~|\mathsf{S}_{a} \cap \xcal_i \neq \phi\}|$, $n_{a}=|\mathsf{S}_{a}|$, \mbox{$\xcal_{a}=\{\xcal_{a,i}~|\xcal_{a,i}=\xcal_i \cap \mathsf{S}_{a}, \forall u_i \in \mathcal{U}_{a}\}$} and $\mathcal{E}_{a}=\{\{u_i,x_j\} \in \mathcal{E}~|u_i \in \mathcal{U}_{a},~j \in \mathsf{S}_{a}\}$. Let $d_{a}=\min_{u_{i'} \in \mathcal{U}_{a}}{(|\xcal_{i'} \cap \mathsf{S}_{a}|-2\delta_s)^+}$. While transmitting the information symbols indexed by   $\mathsf{S}_{a}$, we can save $d_{a}$ transmissions compared to the uncoded scheme by using an encoder matrix derived from an MDS code over $\Fb_q$ ($q \geq |\mathsf{S}_{a}|$) with blocklength $|\mathsf{S}_{a}|$ and dimension $d_{a}$. We encode the symbols in each $\mathsf{S}_{a},~a \in [K]$ independently using this coding scheme. The symbols whose indices are not in $\mathsf{C}_{max}$ are transmitted uncoded. Therefore the total number of transmissions we can save through partitioning is $d_{sum}=\sum_{a=1}^{K}{d_{a}}$. To save maximum transmissions we need to partition the set $\mathsf{C}_{max}$ in such a way that maximizes $d_{sum}$. Therefore the \emph{optimal partitioning} is the solution of the following optimization problem.
\begin{opt}   \label{opt1}
\begin{align*}
\vspace{-8mm}
& \text{maximize}~d_{sum}=\sum_{a=1}^{K}{d_{a}},~~\text{where}~d_{a}=\min_{u_{i'} \in \mathcal{U}_{a}}{(|\xcal_{i'} \cap \mathsf{S}_{a}|-2\delta_s)^+}\\
& \text{subject to}~ 1 \leq K \leq n\\
&\mathsf{S}_1, \mathsf{S}_2, \dots, \mathsf{S}_K \subset \mathsf{C}_{max}~\text{such that}\\
& \text{for any}~a,a' \in [K],a \neq a',~~\mathsf{S}_{a} \cap \mathsf{S}_{a'}=\phi\\ 
& \text{and}~ \bigcup_{a=1}^{K}{\mathsf{S}_{a}}=\mathsf{C}_{max}.
\end{align*}
\end{opt}   

\begin{remark} \label{subopt}
For any $(m,n,\xcal,\delta_s)$ BNSI problem represented by the bipartite graph $\mathcal{B}=(\mathcal{U},\mathcal{P},\mathcal{E})$ if $\mathsf{C}_{max}$ is the only element in $\Phi(\mathcal{B})$ or in other words $|\Phi(\mathcal{B})|=1$, then partitioning $\mathsf{C}_{max}$ into two or more subsets is not optimal. It is trivial to check that if we partition $\mathsf{C}_{max}$, none of the partitions will be an element of $\Phi(\mathcal{B})$. Therefore we can not save any transmission from any of the partition whereas using the full set $\mathsf{C}_{max}$ we can save at  least one transmission.  
\end{remark}
The following upper bound on the optimal codelength $N_{q,opt}$ is a direct result of the optimal partitioning of $\mathsf{C}_{max}$.
\begin{theorem}    \label{thm:ub_par}
Let $D_{sum}$ be the solution to the optimization problem \emph{\textbf{Optimization~1}}. Then optimal codelength $N_{q,opt}$ over $\Fb_q$ satisfies
\begin{equation*} 
N_{q,opt} \leq n-D_{sum}.  
\end{equation*}
\end{theorem}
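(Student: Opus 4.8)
The plan is to construct an explicit valid coding scheme whose codelength equals $n - D_{sum}$, and then invoke the trivial bound $N_{q,opt} \leq n$ established in~\eqref{eq:trivial} to conclude. The scheme is assembled piecewise from the optimal partition $\mathsf{S}_1,\dots,\mathsf{S}_K$ of $\mathsf{C}_{max}$ together with the uncoded transmission of the remaining symbols, so the entire argument is constructive and reduces to verifying that the individual pieces fit together into a valid encoder matrix for the whole problem.

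First I would fix an optimal partition $\mathsf{S}_1,\dots,\mathsf{S}_K$ of $\mathsf{C}_{max}$ attaining $d_{sum} = D_{sum}$, and for each block $a \in [K]$ encode the symbols $\xb_{\mathsf{S}_a}$ using an MDS-based encoder matrix $\Lb^{(a)}$ of the type in Section~\ref{linecc}, which by Theorem~\ref{thm12} is a valid scheme for the subproblem $\mathcal{B}_a$ with codelength $|\mathsf{S}_a| - d_a$. The remaining symbols $\xb_{[n] \setminus \mathsf{C}_{max}}$ are sent uncoded. The overall encoder matrix $\Lb$ is then block-diagonal: the rows indexed by $\mathsf{S}_a$ use the columns of $\Lb^{(a)}$, while each symbol outside $\mathsf{C}_{max}$ occupies its own dedicated column. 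Counting columns, the codelength is
\begin{equation*}
N = \sum_{a=1}^{K}\bigl(|\mathsf{S}_a| - d_a\bigr) + \bigl(n - |\mathsf{C}_{max}|\bigr) = n - \sum_{a=1}^{K} d_a = n - D_{sum},
\end{equation*}
using $\sum_a |\mathsf{S}_a| = |\mathsf{C}_{max}|$ since the $\mathsf{S}_a$ partition $\mathsf{C}_{max}$.

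The key step is to verify, via Corollary~\ref{corr1}, that this block-diagonal $\Lb$ is a valid encoder matrix for the original problem. I would check the condition for an arbitrary user $u_i$: any non-zero linear combination of $2\delta_s$ or fewer rows of $\Lb_{\xcal_i}$ must not lie in $rowspan\{\Lb_{\ycal_i}\}$. The crucial structural fact is that the dedicated columns for symbols outside $\mathsf{C}_{max}$ make those rows linearly independent of everything else and impervious to interference, so a non-zero combination involving any such row is trivially detectable; the argument from Lemma~\ref{acyclic} (that the subgraph induced by $\mathcal{P} \setminus x_{\mathsf{C}_{max}}$ has empty $\Phi$) underlies why the uncoded part suffices for those symbols. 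For a combination supported entirely within $\mathsf{C}_{max}$, since $|\xcal_i \cap \mathsf{C}_{max}| \in \{0\} \cup \{2\delta_s+1, 2\delta_s+2, \dots\}$ (as $\mathsf{C}_{max} \in \Phi(\mathcal{B})$), a weight at most $2\delta_s$ combination of rows of $\Lb_{\xcal_i}$ that is non-zero must draw its support from exactly one block $\mathsf{S}_a$ whose intersection with $\xcal_i$ has size at least $2\delta_s+1$, and validity of $\Lb^{(a)}$ for $\mathcal{B}_a$ (Theorem~\ref{thm12}, via the MDS minimum-distance guarantee) ensures this combination is not absorbed by the interference from the other blocks, which live in disjoint columns.

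\textbf{The main obstacle} I anticipate is handling combinations whose support straddles a block $\mathsf{S}_a$ and symbols outside $\mathsf{C}_{max}$, or that mix across several blocks: I must argue carefully that the block-diagonal column structure forces the syndrome contributions from distinct blocks (and from the uncoded symbols) to occupy disjoint coordinates, so that cancellation across blocks is impossible and each block's contribution must vanish independently. Once the columns are genuinely disjoint across the pieces, this orthogonality makes the per-block MDS guarantees combine cleanly, and the Corollary~\ref{corr1} condition follows block by block. With validity established, the codelength computation above gives $N = n - D_{sum}$, and since $N_{q,opt}$ is the minimum over all valid linear schemes, we obtain $N_{q,opt} \leq n - D_{sum}$, completing the proof.
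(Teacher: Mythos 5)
Your construction is exactly the paper's: the paper obtains Theorem~\ref{thm:ub_par} as a direct consequence of the scheme described just before \textbf{Optimization~1} --- apply the Section~\ref{linecc} MDS-based encoder independently to each block $\mathsf{S}_a$ of an optimal partition of $\mathsf{C}_{max}$, saving $d_a$ transmissions per block, and transmit the symbols outside $\mathsf{C}_{max}$ uncoded --- so your approach matches the paper's, and you additionally spell out the validity verification that the paper leaves implicit.

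One intermediate claim in that verification is false, although your own closing paragraph already contains the correct replacement. You assert that a non-zero combination of at most $2\delta_s$ rows of $\Lb_{\xcal_i}$ supported in $\mathsf{C}_{max}$ ``must draw its support from exactly one block $\mathsf{S}_a$ whose intersection with $\xcal_i$ has size at least $2\delta_s+1$.'' Neither part holds: the support may straddle several blocks, and a block of the partition need not inherit the $\Phi(\bcal)$ property of $\mathsf{C}_{max}$, so $|\xcal_i \cap \mathsf{S}_a|$ can lie anywhere in $[2\delta_s]$ --- Remark~\ref{subopt} makes precisely the point that parts of a partition of $\mathsf{C}_{max}$ need not belong to $\Phi(\bcal)$. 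The clean argument is the one you sketch under ``main obstacle'': since the column sets of the blocks and of the uncoded part are disjoint, $\zb\Lb = {\bf 0}$ forces $\zb$ to vanish on $[n]\setminus\mathsf{C}_{max}$ and forces $\zb_{\mathsf{S}_a}\Lb^{(a)} = {\bf 0}$ for every $a$ separately; picking any block on which $\zb_{\xcal_i}$ is non-zero then yields $1 \le wt(\zb_{\xcal_i \cap \mathsf{S}_a}) \le 2\delta_s$, contradicting the validity of $\Lb^{(a)}$ for the subproblem $\bcal_a$ in which $u_i$ demands $\xcal_i \cap \mathsf{S}_a$. Note that a block with $d_a = 0$ (which happens exactly when some user of $\mathcal{U}_a$ meets $\mathsf{S}_a$ in at most $2\delta_s$ symbols) amounts to uncoded transmission, a trivially valid sub-encoder, so the very case your false sentence was meant to exclude is handled automatically. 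Two smaller points: the appeal to Lemma~\ref{acyclic} is a red herring, since uncoded transmission of any symbol subset is valid regardless of whether the induced subgraph has empty $\Phi$ (that lemma serves lower-bound arguments); and, like the theorem statement itself, your proof implicitly requires $q \ge \max_{a \in [K]}|\mathsf{S}_a|$ for the needed MDS codes to exist, the analogue of the condition $q \ge \max_{\cs \in \mathfrak{C}}|\cs|$ recorded in Theorem~\ref{thm14} --- worth stating, even though the paper's construction paragraph mentions it only in passing.
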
 

\section{BNSI problem and Index Coding} \label{BNSI_IC}
In this section we will show that every BNSI problem is equivalent to an \emph{Index Coding} problem \cite{YBJK_IEEE_IT_11} and using the equivalent Index Coding problem we find a valid encoder matrix for $(m,n,\xcal,\delta_s)$ BNSI problem. We also obtain a lower bound on $N_{q,opt}$ for a $(m,n,\xcal,\delta_s)$ BNSI problem based on this property. 

\subsection{Index Coding with side information}
\emph{Index Coding} \cite{YBJK_IEEE_IT_11} deals with the problem of code design for the transmission of a vector of $n_{IC}$ information symbols or messages denoted as $\xb'_{IC}=(x'_1,x'_2,\dots, x'_{n_{IC}}) \in \Fb_q^{n_{IC}}$ to $m_{IC}$ users denoted as $u'_1,u'_2,\dots,u'_{m_{IC}}$ over a noiseless broadcast channel. It is assumed that $i^{th}$ user $u'_i,~\forall i \in [m_{IC}]$ already knows a part of the transmitted message vector as \emph{side information} denoted as $\xb'_{\xcal_{i,IC}},~\xcal_{i,IC} \subseteq [n_{IC}]$ and demands message $x'_{f(i)},~f(i) \in [n_{IC}]$ where \mbox{$f:[m_{IC}] \rightarrow [n_{IC}],$} such that $f(i) \notin \xcal_{i,IC}$. The set $\xcal_{i,IC}$ is \emph{side information index set} and $f(i)$ is \emph{demanded message index}. Upon denoting $\xcal_{IC}=(\xcal_{1,IC}, \xcal_{2,IC}, \dots, \xcal_{{n_{IC}},IC})$, we describe this Index Coding problem as $(m_{IC},n_{IC},\xcal_{IC},f)$ Index Coding problem. As described in \cite{YBJK_IEEE_IT_11}, a valid \emph{encoding function} over $\Fb_q$ for an $(m_{IC},n_{IC},\xcal_{IC},f)$ Index Coding problem is defined by, 
\begin{equation*}
\mathfrak{E}_{IC}:\Fb_q^{n_{IC}}\rightarrow \Fb_q^{N_{IC}}
\end{equation*}
such that for each user $u'_i,$ \mbox{$i \in [m_{IC}]$} there exists a \emph{decoding function}
\mbox{$\mathfrak{D}_{i,IC}:\Fb_q^{N_{IC}} \times \Fb_q^{|\xcal_{i,IC}|} \rightarrow \Fb_q$}
satisfying the following property:
$\mathfrak{D}_{i,IC}(\mathfrak{E}_{IC}(\xb'_{IC}),\xb'_{\xcal_{i,IC}})=x'_{f(i)}$
for every \mbox{$\xb'_{IC} \in \Fb_q^{n_{IC}}$}.

The design objective is to design a tuple $(\mathfrak{E}_{IC},\mathfrak{D}_{1,IC},\mathfrak{D}_{2,IC},\dots, \mathfrak{D}_{m_{IC},IC})$ of encoding and decoding functions that minimizes the codelength $N_{IC}$ and obtain the \emph{optimal codelength} for the given Index Coding problem which is the minimum codelength among all valid Index Coding schemes. 

A \emph{scalar linear Index Code} for an $(m_{IC},n_{IC},\xcal_{IC},f)$ Index Coding problem is defined as a coding scheme where the encoding function \mbox{$\mathfrak{E}_{IC}:\Fb_q^{n_{IC}}\rightarrow \Fb_q^{N_{IC}}$} is a linear transformation over $\Fb_q$ described as \mbox{$\mathfrak{E}_{IC}(\xb'_{IC})=\xb'_{IC}\Lb_{IC}$}, $\forall \xb'_{IC} \in \Fb_q^{n_{IC}}$, where $\Lb_{IC} \in \Fb_q^{n_{IC} \times N_{IC}}$ is the \emph{encoder matrix for scalar linear Index Code}. The minimum codelength among all valid linear coding schemes for the $(m_{IC},n_{IC},\xcal_{IC},f)$ Index Coding problem over the field $\Fb_q$ will be denoted as $N_{q,opt,IC}(m_{IC},n_{IC},\xcal_{IC},f)$. 

From \cite{Dau_IEEE_J_IT_13} we have a design criterion for a matrix $\Lb_{IC}$ to be a valid encoder matrix for $(m_{IC},n_{IC},\xcal_{IC},f)$ scalar linear Index Coding problem. Following the results in \cite{Dau_IEEE_J_IT_13}, we define the set $\ical_{IC}(q,m_{IC},n_{IC},\xcal_{IC},f)$ or equivalently $\ical_{IC}$ of vectors $\zb$ of length $n$ such that $\zb_{\xcal_{i,IC}}={\bf{0}} \in \Fb_q^{|\xcal_{i,IC}|}$ and $z_{f(i)} \neq 0$ for some choice of $i \in [m_{IC}]$ i.e.,
\begin{equation} \label{i_ic}
\ical(q,m_{IC},n_{IC},\xcal_{IC},f)=\bigcup\limits_{i=1}^{m_{IC}}\{\zb \in \Fb_q^{n_{IC}}~| \zb_{\xcal_{i,IC}}={\bf{0}}~\text{and}~z_{f(i)} \neq 0\}.
\end{equation}
Now from Corollary 3.10 in \cite{Dau_IEEE_J_IT_13} it follows that that $\Lb_{IC}$ is a valid encoder matrix for an $(m_{IC},n_{IC},\xcal_{IC},f)$ scalar linear Index Coding problem if and only if
\begin{equation} \label{L_IC}
\zb\Lb_{IC} \neq {\bf{0}},\quad \forall \zb \in \ical_{IC}.
\end{equation}
It is possible to represent an $(m_{IC},n_{IC},\xcal_{IC},f)$ Index Coding problem by means of a directed bipartite graph as described in \cite{tehrani2012bipartite} which is as follows. The directed bipartite graph \mbox{$\mathcal{B}_{IC}=(\mathcal{U}_{IC},\mathcal{P}_{IC},\mathcal{E}_{IC})$} corresponding to the $(m_{IC},n_{IC},\xcal_{IC},f)$ Index Coding problem consists of the node-sets \mbox{$\mathcal{U}_{IC}=\{u'_1,u'_2,\dots,u'_{m_{IC}}\}$}, $\mathcal{P}_{IC}=\{x'_1,x'_2,\dots,x'_{n_{IC}}\}$, the set of directed edges \mbox{$(u'_i,x'_j) \in \mathcal{E}_{IC}$ if $j \in \xcal_{i,IC}$} and the set of directed edges \mbox{$(x'_j,u'_i) \in \mathcal{E}_{IC}$ if $j=f(i)$}. The set $\mathcal{U}_{IC}$ denotes the \textit{user-set} and $\mathcal{P}_{IC}$ denotes the set of packets or the \textit{information symbol-set}. The directed edges from \textit{user-set} to \textit{information symbol-set} in $\mathcal{E}_{IC}$ denotes the user's side information and directed edges from \textit{information symbol-set} to \textit{user-set} in $\mathcal{E}_{IC}$  denotes the user's demanded message.

\begin{figure}[h!]
\centering
\vspace{-2mm}
\includegraphics[width=2in]{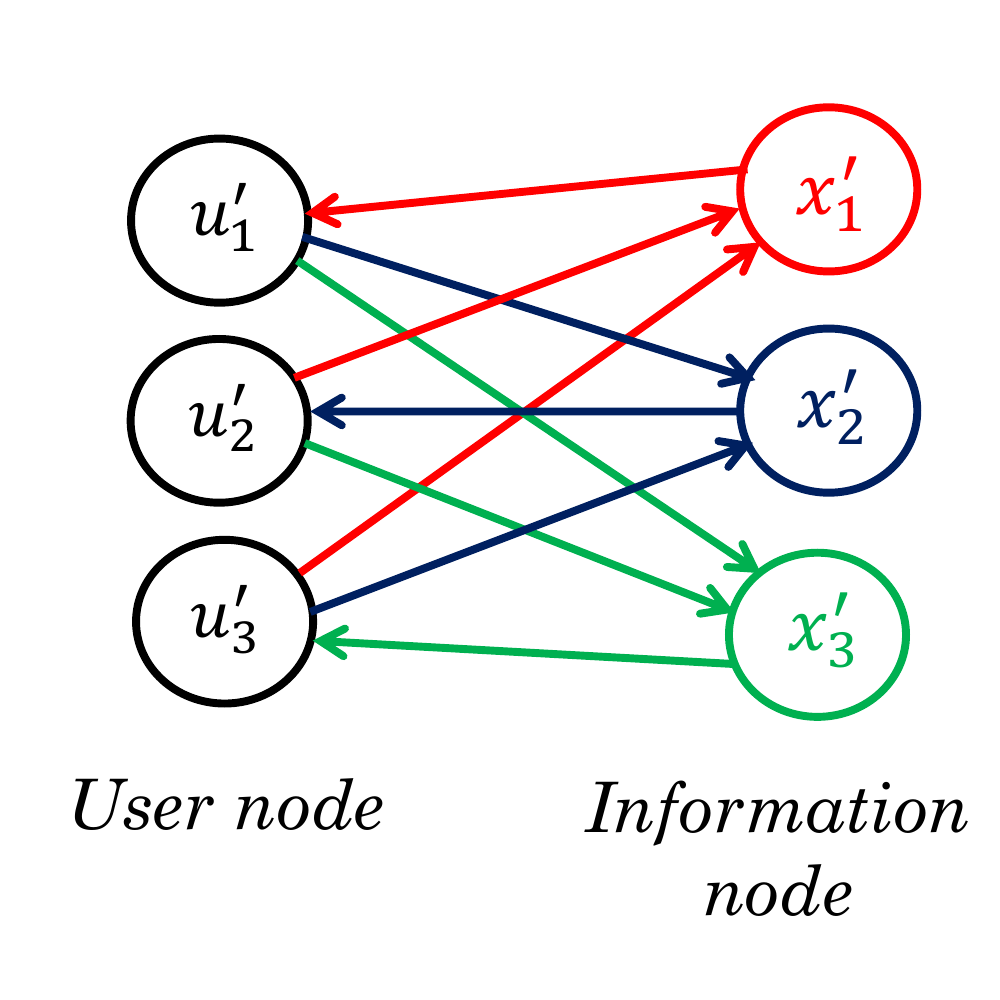}
\captionsetup{justification=centering}
\vspace{-5mm}
\caption{\footnotesize{Directed bipartite graph $\mathcal{B}_{IC}$ for the Index Coding problem in Example~\ref{exmp_IC}}.}
\label{fig:image_IC}
\vspace{-3mm}
\end{figure} 

\begin{example} \label{exmp_IC}
Consider the Index Coding problem with \mbox{$n_{IC}=3$} information symbols, \mbox{$m_{IC}=3$} users and user side information index sets \mbox{$\xcal_{1,IC}=\{2,3\}$}, \mbox{$\xcal_{2,IC}=\{1,3\}$}, \mbox{$\xcal_{3,IC}=\{1,2\}$}. The directed bipartite graph \mbox{$\mathcal{B}_{IC}=(\mathcal{U}_{IC},\mathcal{P}_{IC},\mathcal{E}_{IC})$} in Fig.~\ref{fig:image_IC} describes this scenario where \mbox{$\mathcal{U}_{IC}=\{u'_1,u'_2,u'_3\}$}, \mbox{$\mathcal{P}_{IC}=\{x'_1,x'_2,x'_3\}$} and $\mathcal{E}_{IC}=\{(x'_1,u'_1),(u'_1,x'_2),(u'_1,x'_3),(x'_2,u'_2),(u'_2,x'_1),(u'_2,x'_3),(x'_3,u'_3),(u'_3,x'_1),(u'_3,x'_2)\}$
\end{example}

\subsection{Construction of an Index Coding Problem from a given BNSI problem}    \label{ic_bnsi}

From the definition of $\mathcal{I}(q,m,n,\xcal,\delta_s)$ given in (\ref{ical_def}) and Theorem~\ref{thm1}, now we construct an $(\hat{m},n,\hat{\xcal},f)$ Index Coding problem from an $(m,n,\xcal,\delta_s)$ BNSI problem, where 
 \begin{equation} \label{hat_m}
 \hat{m}=
\left\{
\begin{array}{lc}
\sum_{i=1}^{m}\Comb{|\xcal_i|}{1} \times \Comb{|\xcal_i|-1}{2\delta_s-1} & \mbox{if}~|\xcal_i| \geq 2\delta_s \\ \sum_{i=1}^{m}\Comb{|\xcal_i|}{1} & \mbox{otherwise}
\end{array}
\right.
\end{equation}
and $\hat{\xcal}$ and $f$ are obtained from the construction of an Index Coding problem as described in Algorithm~3.
\begin{algorithm}[h]
\caption{Construction of an Index Coding problem from a given BNSI problem}
\SetAlgoLined
\textbf{Input}: $(m,n,\xcal,\delta_s)$ BNSI problem\\
\textbf{Output}: $(\hat{m},n,\hat{\xcal},f)$ Index Coding problem\\
 \% \% {Initialization:} $j=0$ \\
 \% \% Iteration: \\
 \For{$i=1,2,\dots,m$}{
 \For{each $p \in \xcal_i$}{
 \For{each $Q \subseteq \xcal_i \setminus \{p\}$ with $|Q|=\min\{|\xcal_i|-1,2\delta_s-1\}$}{
 $j \leftarrow j+1$\\
 $\xcal_{j,IC} \leftarrow \xcal_i \setminus (Q \cup \{p\})$\\
 $f(j) \leftarrow p$
 }
 }
 }
 $\hat{\xcal}=(\xcal_{1,IC},\xcal_{2,IC},\dots,\xcal_{\hat{m},IC})$\\
 \texttt{output} $(\hat{m},n,\hat{\xcal},f)$ Index Coding problem; \texttt{return};
\end{algorithm}

Algorithm~3 considers each $i^{th}$ user $u_i,~i \in [m]$ in $(m,n,\xcal,\delta_s)$ BNSI problem, for every possible choice of an element $p \in \xcal_i$ and a set $Q \subseteq \xcal_i \setminus \{p\}$ such that $|Q|=\min\{|\xcal_i|-1,2\delta_s-1\}$, it defines a new user $u'_j$ with $f(j)=p$ and $\xcal_{j,IC}=\xcal_i \setminus (Q \cup \{p\})$. In the newly constructed Index Coding problem, the total number of users $m_{IC}=\hat{m}$, number of information symbols $n_{IC}=n$, the tuple of side information index sets $\xcal_{IC}$ is given by $\hat{\xcal}$ and the demanded message $f(j)$ of each user $u_j,~j \in [\hat{m}]$ is given by mapping $f$. Hence we will obtain an $(\hat{m},n,\hat{\xcal},f)$ Index Coding problem. Now we relate the set $\mathcal{I}(q,m,n,\xcal,\delta_s)$ defined for the $(m,n,\xcal,\delta_s)$ BNSI problem and the set $\ical_{IC}(q,\hat{m},n,\hat{\xcal},f)$ for the $(\hat{m},n,\hat{\xcal},f)$ Index-Coding problem.

\begin{theorem}  \label{i_bnsi_i_ic}
Let for an $(m,n,\xcal,\delta_s)$ BNSI problem, the set $\mathcal{I}(q,m,n,\xcal,\delta_s)$ be defined by (\ref{ical_def}), and for the equivalent $(\hat{m},n,\hat{\xcal},f)$ Index Coding problem constructed from the $(m,n,\xcal,\delta_s)$ BNSI problem, the set $\ical_{IC}(q,\hat{m},n,\hat{\xcal},f)$ be defined by (\ref{i_ic}). Then $\mathcal{I}(q,m,n,\xcal,\delta_s)=\ical_{IC}(q,\hat{m},n,\hat{\xcal},f)$.   
\end{theorem}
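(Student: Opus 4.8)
The plan is to prove the set equality by establishing the two inclusions separately, after first translating both defining conditions into statements about the support of $\zb$ restricted to the coordinates of a single BNSI user $u_i$. On the BNSI side, $\zb \in \ical(q,m,n,\xcal,\delta_s)$ means there is some $i \in [m]$ with $1 \leq wt(\zb_{\xcal_i}) \leq 2\delta_s$. On the index coding side, recall that Algorithm~3 creates exactly one IC-user for each triple $(i,p,Q)$ with $i \in [m]$, $p \in \xcal_i$ and $Q \subseteq \xcal_i \setminus \{p\}$ of size $|Q| = \min\{|\xcal_i|-1,\, 2\delta_s-1\}$, setting $f(j)=p$ and $\xcal_{j,IC} = \xcal_i \setminus (Q \cup \{p\})$. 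Hence, by~(\ref{i_ic}), $\zb \in \ical_{IC}$ iff for some such triple we have $z_p \neq 0$ and $\zb$ vanishes on $\xcal_i \setminus (Q \cup \{p\})$; equivalently, $\mathrm{supp}(\zb_{\xcal_i}) \subseteq Q \cup \{p\}$ together with $p \in \mathrm{supp}(\zb_{\xcal_i})$. Throughout I would split into the two regimes $|\xcal_i| \geq 2\delta_s$ (where $|Q| = 2\delta_s - 1$, so $|Q \cup \{p\}| = 2\delta_s$) and $|\xcal_i| < 2\delta_s$ (where $Q = \xcal_i \setminus \{p\}$, so $\xcal_{j,IC} = \phi$).

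For the inclusion $\ical \subseteq \ical_{IC}$, take $\zb \in \ical$, fix a witnessing user $i$, and let $S = \mathrm{supp}(\zb_{\xcal_i})$, so $1 \leq |S| \leq 2\delta_s$. Since $|S| \geq 1$, choose any $p \in S$; then $S \setminus \{p\}$ has size at most $2\delta_s - 1$ and lies in $\xcal_i \setminus \{p\}$. When $|\xcal_i| \geq 2\delta_s$ I would extend $S \setminus \{p\}$ to a set $Q \subseteq \xcal_i \setminus \{p\}$ of size exactly $2\delta_s - 1$; this extension is possible precisely because $|\xcal_i \setminus \{p\}| = |\xcal_i| - 1 \geq 2\delta_s - 1$. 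When $|\xcal_i| < 2\delta_s$ the only admissible set is $Q = \xcal_i \setminus \{p\}$, which automatically contains $S \setminus \{p\}$. In both cases $S \subseteq Q \cup \{p\}$ and $z_p \neq 0$, so the IC-user indexed by $(i,p,Q)$ witnesses $\zb \in \ical_{IC}$.

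For the reverse inclusion $\ical_{IC} \subseteq \ical$, take $\zb \in \ical_{IC}$ witnessed by the IC-user corresponding to $(i,p,Q)$, so $\mathrm{supp}(\zb_{\xcal_i}) \subseteq Q \cup \{p\}$ and $z_p \neq 0$. The lower bound is immediate: $p \in \xcal_i$ and $z_p \neq 0$ give $wt(\zb_{\xcal_i}) \geq 1$. For the upper bound, $wt(\zb_{\xcal_i}) \leq |Q \cup \{p\}| = |Q| + 1$ (using $p \notin Q$); this equals $2\delta_s$ when $|\xcal_i| \geq 2\delta_s$, and equals $|\xcal_i| \leq 2\delta_s - 1 < 2\delta_s$ when $|\xcal_i| < 2\delta_s$, so in either regime $wt(\zb_{\xcal_i}) \leq 2\delta_s$. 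Thus $1 \leq wt(\zb_{\xcal_i}) \leq 2\delta_s$ and $\zb \in \ical$.

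The only genuinely delicate step is the extension of $S \setminus \{p\}$ to a set $Q$ of the prescribed size $\min\{|\xcal_i|-1,\, 2\delta_s-1\}$ in the forward inclusion, since this is where one must verify that Algorithm~3 actually produces an IC-user capturing every BNSI error pattern of weight up to $2\delta_s$. The matching of the cardinality constraint on $Q$ with the weight bound $2\delta_s$, together with the separate handling of small demand sets $|\xcal_i| < 2\delta_s$ (where $\xcal_{j,IC}$ collapses to $\phi$), is exactly what forces the two sets to coincide; the remaining arithmetic on supports and set sizes is routine.
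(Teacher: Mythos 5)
Your proposal is correct and follows essentially the same route as the paper: both directions are proved by support bookkeeping over the triples $(i,p,Q)$ of Algorithm~3, picking $p$ in the support of $\zb_{\xcal_i}$ and completing $\mathrm{supp}(\zb_{\xcal_i})\setminus\{p\}$ to a $Q$ of size $\min\{|\xcal_i|-1,2\delta_s-1\}$ for the forward inclusion, and bounding $wt(\zb_{\xcal_i})\leq|Q|+1\leq 2\delta_s$ for the reverse. If anything, your reverse direction is slightly more careful than the paper's, which loosely says to take \emph{any} $Q$ of the prescribed size from $[n]\setminus(\{f(j)\}\cup\xcal_{j,IC})$, whereas you correctly anchor $Q$ to the specific triple that Algorithm~3 used to create the IC-user $j$, so that $\xcal_i=\{p\}\cup\xcal_{j,IC}\cup Q$ is guaranteed.
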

\begin{proof}
To show that $\mathcal{I}(q,m,n,\xcal,\delta_s)=\ical_{IC}(q,\hat{m},n,\hat{\xcal},f)$, we will show that $\mathcal{I}(q,m,n,\xcal,\delta_s)\subseteq \ical_{IC}(q,\hat{m},n,\hat{\xcal},f)$ and $\ical_{IC}(q,\hat{m},n,\hat{\xcal},f) \subseteq \mathcal{I}(q,m,n,\xcal,\delta_s)$.

\vspace{2mm}
\textit{Proof for} $\mathcal{I}(q,m,n,\xcal,\delta_s) \subseteq \ical_{IC}(q,\hat{m},n,\hat{\xcal},f)$:
Suppose a vector $\zb \in \mathcal{I}(q,m,n,\xcal,\delta_s)$. Then from (\ref{ical_def}), we obtain that there exists at least one $i \in [m]$ such that $wt(\zb_{\xcal_i}) \in [2\delta_s]$. Therefore $\zb_{\xcal_i} \neq \mathbf{0}$. Hence there exists a $p \in \xcal_i$ such that $z_p \neq 0$. Note that $wt(\zb_{\xcal_i}) \leq 2\delta_s$ and since $wt(\zb_{\xcal_i \setminus \{p\}}) \leq 2\delta_s-1$ there exists $Q \subseteq \xcal_i \setminus \{p\}$ such that $|Q|=\min\{|\xcal_i|-1,2\delta_s-1\}$ and $\zb_{\xcal_i \setminus (Q \cup \{p\})}= \mathbf{0}$. Now using the construction procedure described in Algorithm~3 we see that there exists a $j$ such that $\xcal_{j,IC}=\xcal_i \setminus (Q \cup \{p\})$ satisfies $\zb_{\xcal_{j,IC}}=\mathbf{0}$ and and $f(j)=p$ satisfies $z_{f(j)} \neq 0$. Hence $\zb \in \ical_{IC}(q,\hat{m},n,\hat{\xcal},f)$.

\vspace{2mm}
\textit{Proof for} $\ical_{IC}(q,\hat{m},n,\hat{\xcal},f) \subseteq \mathcal{I}(q,m,n,\xcal,\delta_s)$: Suppose a vector $\zb \in \ical_{IC}(q,\hat{m},n,\hat{\xcal},f)$. Then there exists at least one user $j \in [\hat{m}]$ such that $z_{f(j)} \neq 0$ and $\zb_{\xcal_{j,IC}}=\mathbf{0}$.  Let $p=f(j)$ and $Q$ be any \mbox{$\min\{n-|\xcal_{j,IC}|-1,2\delta_s-1\}$} elements from the remaining set $[n] \setminus (\{f(j)\} \cup \xcal_{j,IC})$ . Note that $wt(\zb_{(\{p\}\cup \xcal_{j,IC} \cup Q))}) \in [2\delta_s]$. From Algorithm~3 we see that there exists $i \in [m]$ such that \mbox{$\xcal_i=\{p\} \cup \xcal_{j,IC} \cup Q$} satisfies $wt(\zb_{\xcal_i}) \in [2\delta_s]$. Hence $\zb \in \mathcal{I}(q,m,n,\xcal,\delta_s)$.

Hence the theorem holds. 
\end{proof}
 
\begin{example}  \label{BNSI_to_IC}
Here we will consider the BNSI problem scenario given in Example~\ref{exmp1}. The total number of users in corresponding Index-Coding problem will be $3 \times \Comb{3}{1} \times \Comb{2}{1}=18$ but among them only $12$ users have distinct \emph{(side information, demanded message)} pair. The number of information symbols will be same for Index Coding and BNSI problem. Table~\ref{table:1} shows all the distinct users of the Index Coding problem with their \emph{demanded message} and \emph{side information} symbols and Fig.~\ref{fig:image2} shows the corresponding bipartite graph for the Index Coding problem.
\begin{figure}[h!]
\centering
  \parbox[t]{12cm}{\null
  \centering
  \vspace{4mm}
  \captionof{table}[t]{Users in Index Coding problem}
  \label{table:1}
  \vspace{2mm}
  \begin{tabular}{|c|c|c|}
  \hline
 \multirow{2}{4em}{\textit{Rx. index}} & \textit{demanded} & \textit{side} \\
 & \textit{message} & \textit{information} \\
\hline
$1$ & $x_1$ & $\{x_2\}$\\
\hline
$2$ & $x_1$ & $\{x_3\}$\\
\hline
$3$ & $x_1$ & $\{x_4\}$\\
\hline
$4$ & $x_2$ & $\{x_1\}$\\
\hline
$5$ & $x_2$ & $\{x_3\}$\\
\hline
$6$ & $x_2$ & $\{x_4\}$\\
\hline
$7$ & $x_3$ & $\{x_1\}$\\
\hline
$8$ & $x_3$ & $\{x_2\}$\\
\hline
$9$ & $x_3$ & $\{x_4\}$\\
\hline
$10$ & $x_4$ & $\{x_1\}$\\
\hline
$11$ & $x_4$ & $\{x_2\}$\\
\hline
$12$ & $x_4$ & $\{x_3\}$\\
\hline
\end{tabular}
  }
  \parbox[t]{6cm}{\null
  \centering
  \vspace{-7mm}
  \includegraphics[width=1.4in]{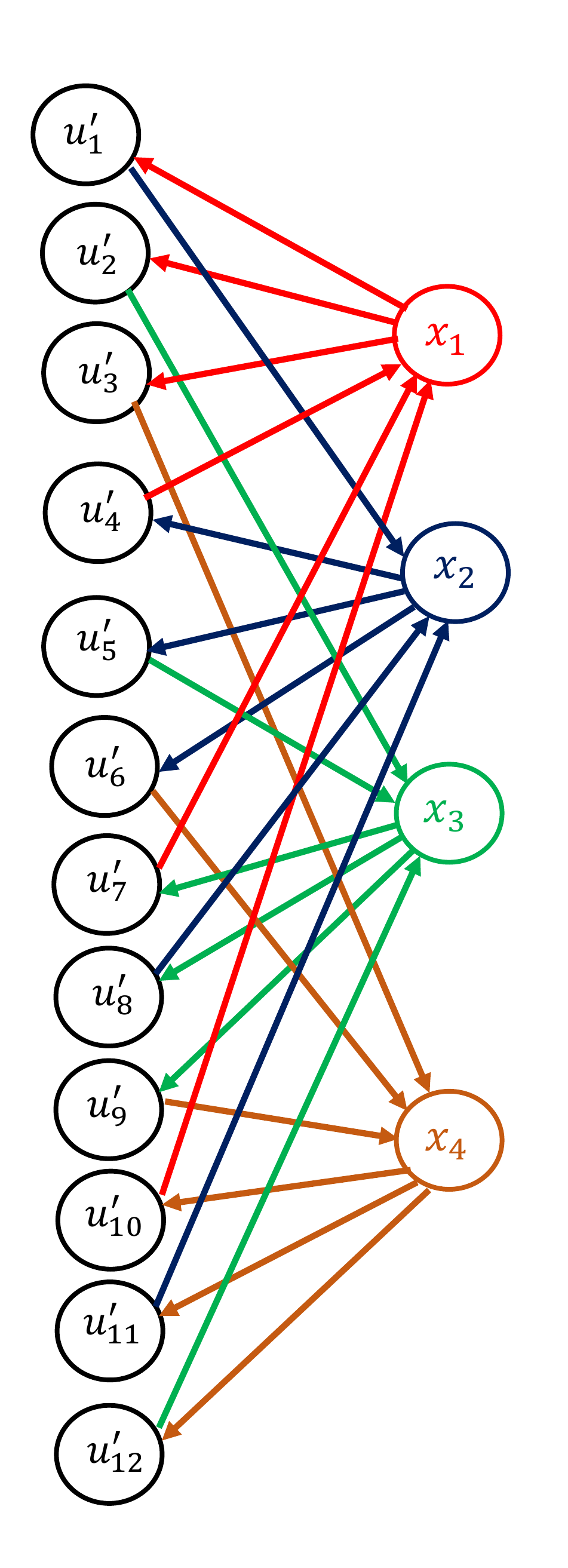}
  \captionof{figure}{Directed bipartite graph of Index Coding problem corresponding to BNSI problem in \mbox{Example \ref{exmp1}}.}
  \label{fig:image2}
}
\end{figure}
\end{example} 

Using the $(\hat{m},n,\hat{\xcal},\hat{f})$ Index-Coding problem corresponding to an $(m,n,\xcal,\delta_s)$ BNSI problem, now we relate the construction of $\Lb$ to the problem of designing scalar linear index coding scheme.
\begin{theorem}  \label{BNSI_IC_L}
$\Lb$ is a valid encoder matrix for the $(m,n,\xcal,\delta_s)$ BNSI problem if and only if $\Lb$ is a valid encoder matrix for the $(\hat{m},n,\hat{\xcal},\hat{f})$ scalar linear Index Coding problem.  
\end{theorem}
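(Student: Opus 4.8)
The plan is to observe that both validity conditions reduce to the same set-membership constraint on $\Lb$, so that the equivalence follows directly once the two constraint sets are identified. First I would recall from Theorem~\ref{thm1} that $\Lb \in \Fb_q^{n \times N}$ is a valid encoder matrix for the $(m,n,\xcal,\delta_s)$ BNSI problem if and only if $\zb\Lb \neq {\bf 0}$ for every $\zb \in \ical(q,m,n,\xcal,\delta_s)$. Next I would recall the scalar linear index-coding criterion in~\eqref{L_IC} (Corollary~3.10 of~\cite{Dau_IEEE_J_IT_13}): a matrix is a valid encoder for the $(\hat{m},n,\hat{\xcal},f)$ Index Coding problem if and only if $\zb\Lb \neq {\bf 0}$ for every $\zb \in \ical_{IC}(q,\hat{m},n,\hat{\xcal},f)$. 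Crucially, since the construction in Algorithm~3 leaves the number of information symbols unchanged ($n_{IC}=n$) and the field $\Fb_q$ fixed, a single matrix $\Lb$ of size $n \times N$ is a candidate encoder for both problems, so the two criteria are assertions about the very same object.

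The key step is then to equate the two constraint sets. I would invoke Theorem~\ref{i_bnsi_i_ic}, which already establishes that $\ical(q,m,n,\xcal,\delta_s) = \ical_{IC}(q,\hat{m},n,\hat{\xcal},f)$. Given this equality, the condition ``$\zb\Lb \neq {\bf 0}$ for all $\zb \in \ical(q,m,n,\xcal,\delta_s)$'' is literally identical to the condition ``$\zb\Lb \neq {\bf 0}$ for all $\zb \in \ical_{IC}(q,\hat{m},n,\hat{\xcal},f)$''. Hence $\Lb$ satisfies the BNSI validity criterion if and only if it satisfies the index-coding validity criterion, which is exactly the claim of the theorem.

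There is no genuine obstacle remaining, because the combinatorial content has already been front-loaded into Theorem~\ref{i_bnsi_i_ic}; the present statement is essentially a corollary of that set equality together with the two previously established linear-algebraic characterizations. The only point I would be careful to spell out explicitly is that the ambient dimension $n$ and the field $\Fb_q$ agree across the two problems, so that the same $\Lb$ and the same forbidden vectors $\zb$ are being compared --- both of which are immediate from the construction. In writing this up I would therefore keep the argument to a few lines, simply chaining Theorem~\ref{thm1}, Theorem~\ref{i_bnsi_i_ic}, and~\eqref{L_IC}.
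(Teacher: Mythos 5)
Your proposal is correct and follows exactly the paper's own argument: chaining Theorem~\ref{thm1}, the set equality $\ical(q,m,n,\xcal,\delta_s)=\ical_{IC}(q,\hat{m},n,\hat{\xcal},f)$ from Theorem~\ref{i_bnsi_i_ic}, and the index-coding criterion~\eqref{L_IC}. Your explicit remark that the ambient dimension $n$ and field $\Fb_q$ coincide, so the same matrix $\Lb$ is a candidate for both problems, is a point the paper leaves implicit but changes nothing in substance.
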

\begin{proof}
From Theorem~\ref{thm1} we know that $\Lb$ is a valid encoder matrix for the $(m,n,\xcal,\delta_s)$ BNSI problem if and only if it satisfies
\begin{equation*}
\zb\Lb \neq {\bf{0}},\quad \forall \zb \in \mathcal{I}(q,m,n,\xcal,\delta_s).
\end{equation*}
Now from Theorem \ref{i_bnsi_i_ic} we have $\mathcal{I}(q,m,n,\xcal,\delta_s) = \ical_{IC}(q,\hat{m},n,\hat{\xcal},f)$. So, $\Lb$ also satisfies, 
\begin{equation*}
\zb\Lb \neq {\bf{0}},\quad \forall \zb \in \ical_{IC}(q,\hat{m},n,\hat{\xcal},\hat{f}).
\end{equation*}
Therefore using (\ref{L_IC}) we can conclude that $\Lb$ is a valid encoder matrix for the $(\hat{m},n,\hat{\xcal},f)$ scalar linear Index Coding problem if and only if $\Lb$ is a valid encoder matrix for the $(m,n,\xcal,\delta_s)$ BNSI problem.
\end{proof}

Theorem \ref{BNSI_IC_L} claims that constructing an encoder matrix $\Lb$ for the $(m,n,\xcal,\delta_s)$ BNSI problem is equivalent to constructing an encoder matrix for the $(\hat{m},n,\hat{\xcal},f)$ scalar linear Index Coding problem. From \cite{YBJK_IEEE_IT_11, Dau_IEEE_J_IT_13}, we know that an encoder matrix for scalar linear Index Coding problem can be found by finding a matrix that fits its \emph{side information hypergraph} and the optimal length of a scalar linear Index Code equals the \emph{min-rank} of its side information hypergraph. 
%\begin{remark}
%As calculating \emph{min-rank} of a matrix is NP-Hard \cite{Peeters1996}, Constructing encoder matrix $\Lb$ and calculating $N_{q,opt}$ for a based on $(m,n,\xcal,\delta_s)$ BNSI problem based on its corresponding Index Coding problem is also NP-Hard.  
%\end{remark}
\begin{example} 
Here we will consider the BNSI problem scenario given in Example~\ref{exmp1}. The users in the corresponding Index Coding problem is listed in Table \ref{table:1} and the graphical representation is given in Fig. \ref{fig:image2}. In the bipatite graph, we can notice that the edge sets $\{(x_1,u'_3),(u'_3,x_4),(x_4,u'_{10}),(u'_{10},x_1)\}$, $\{(x_2,u'_6),(u'_6,x_4),(x_4,u'_{11}),(u'_{11},x_2)\}$ and $\{(x_3,u'_9),(u'_9,x_4),(x_4,u'_{12}),(u'_{12},x_3)\}$ constitute $3$ cycles involving information symbol sets $\{x_1,x_4\}$, $\{x_2,x_4\}$ and $\{x_3,x_4\}$ respectively. Now using the \emph{Cyclic Code Actions} as described in \cite{MAZ_IEEE_IT_13} on each of these cycles, we can save one transmission. We encode the information symbols corresponding to $1^{st}$, $2^{nd}$ and $3^{rd}$ cycles as $x_1+x_4$, $x_2+x_4$, $x_3+x_4$ respectively. Therefore the codeword $(x_1+x_4,x_2+x_4,x_3+x_4)$ saves one transmission. Hence, $N_{q,opt,IC} \leq 3$.

Again we can notice that users $u'_3,u'_6,u'_9$ have $x_4$ as side information and each of the three users demands three distinct messages $x_1,x_2,x_3$, respectively. Therefore the encoder needs to encode $x_1,x_2,x_3$ such that with appropriate decoding functions $u'_3,u'_6,u'_9$ can decode $x_1,x_2,x_3$ respectively using the common side information $x_4$. Hence, $N_{q,opt,IC} \geq 3$. So, $N_{q,opt,IC}=3$. The encoder matrix that generates the codeword $(x_1+x_4,x_2+x_4,x_3+x_4)$ is   
  
\begin{equation*}
\Lb=
\begin{bmatrix}
1 & 0 & 0\\
0 & 1 & 0\\
0 & 0 & 1\\
1 & 1 & 1
\end{bmatrix}.
\end{equation*}
Note that this $\Lb$ we took in Example \ref{exmp2} to validate the design criterion of a valid encoder matrix for a $(m,n,\xcal,\delta_s)$ BNSI problem given in Example \ref{exmp1} and we have also used this $\Lb$ to describe the \emph{Syndrome Decoding} for $(m,n,\xcal,\delta_s)$ BNSI problem in Example \ref{exmp3}. Therefore the matrix $\Lb$ serves as a valid encoder matrix both for $(m,n,\xcal,\delta_s)$ BNSI problem given in Example \ref{exmp1} and corresponding $(\hat{m},n,\hat{\xcal},f)$ Index Coding problem given in Example \ref{BNSI_to_IC}.   
\end{example}

%\begin{remark} \label{ic_not_bnsi}
%From the expression of $\hat{m}$ given in (\ref{hat_m}), we have $\hat{m} \geq \sum_{i=1}^{m}\Comb{|\xcal_i|}{1} =\sum_{i=1}^{m}{|\xcal_i|} \geq |\bigcup_{i=1}^{m}{\xcal_i}|=n$. From this inequality we can see that the $(\hat{m},n,\hat{\xcal},f)$ Index Coding problem constructed from the $(m,n,\xcal,\delta_s)$ BNSI problem satisfies $\hat{m} \geq n$. In other words we can say that any $(m_{IC},n_{IC},\xcal_{IC},f)$ Index Coding problem is equivalent to a BNSI problem if $m_{IC}\geq n_{IC}$. Therefore if we consider a class of $(m_{IC},n_{IC},\xcal_{IC},f)$ Index Coding problem where $m_{IC}\leq n_{IC}$, this class of Index Coding problem is not equivalent to any BNSI problem. Hence in general every Index Coding problem is not equivalent to a BNSI problem.
%\end{remark}

\subsection{Lower Bound on $N_{q,opt}$ based on Index Coding}   \label{lb_ic}
A lower bound on the optimal codelength $N_{q,opt}$ of an $(m,n,\xcal,\delta_s)$ BNSI problem can be derived based on its equivalent $(\hat{m},n,\hat{\xcal},f)$ scalar linear Index Coding problem. Suppose a directed bipartite graph $\mathcal{B}_{IC}=(\mathcal{U}_{IC},\mathcal{P}_{IC},\mathcal{E}_{IC})$ represents the $(\hat{m},n,\hat{\xcal},f)$ scalar linear Index Coding problem. Theorem~1 of \cite{MAZ_IEEE_IT_13} shows that if a directed bipartite graph $\mathcal{G}$ representing a scalar linear Index-Coding problem with $P$ information symbols is acyclic then its optimal codelength $N_{opt}(q,\mathcal{G})=P$. Now consider the bipartite graph $\mathcal{B}_{IC}$ and perform the \textit{pruning operations} given Section II-A in \cite{MAZ_IEEE_IT_13} to construct a subgraph $\mathcal{B}_{IC}^s=(\mathcal{U}_{IC}^s,\mathcal{P}_{IC}^s,\mathcal{E}_{IC}^s)$ which is an acyclic subgraph with information-set $\mathcal{P}_{IC}^s$. This leads to a lower bound on the optimal codelength of our BNSI problem.
\begin{lemma}
Let $\mathcal{B}_{IC}^s=(\mathcal{U}_{IC}^s,\mathcal{P}_{IC}^s,\mathcal{E}_{IC}^s)$ be any acyclic subgraph of $\bcal_{IC}$ induced by information-set $\mathcal{P}_{IC}^s$. Then the optimal codelength over $\Fb_q$ for the $(m,n,\xcal,\delta_s)$ BNSI problem  satisfies,
\begin{equation*}
N_{q,opt} \geq |\mathcal{P}_{IC}^s|.
\end{equation*}
\end{lemma}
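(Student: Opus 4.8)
The plan is to transport the bound to the equivalent index coding problem and then combine two ingredients: that restricting an index coding problem to a subset of its information symbols cannot increase the optimal scalar linear codelength, and that an acyclic index coding problem requires exactly as many transmissions as it has information symbols.

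First I would invoke Theorem~\ref{BNSI_IC_L}. The constructed index coding problem $(\hat{m},n,\hat{\xcal},f)$ has exactly $n$ information symbols, so every candidate encoder matrix $\Lb$ lies in $\Fb_q^{n \times N}$ for both problems. Theorem~\ref{BNSI_IC_L} asserts that $\Lb$ is valid for the BNSI problem if and only if it is valid for the index coding problem, hence the two problems share the same family of valid encoder matrices and their optimal linear codelengths coincide:
\begin{equation*}
N_{q,opt}(m,n,\xcal,\delta_s) = N_{q,opt,IC}(\hat{m},n,\hat{\xcal},f).
\end{equation*}

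Next I would establish monotonicity under taking the induced subgraph $\mathcal{B}_{IC}^s$. This subgraph represents a smaller index coding problem whose information symbols are indexed by $\mathcal{P}_{IC}^s$, whose users are those of $\bcal_{IC}$ whose demanded message lies in $\mathcal{P}_{IC}^s$, and whose side information sets are truncated to $\mathcal{P}_{IC}^s$. Using the characterization in~\eqref{L_IC}, I would show that if $\Lb$ is an optimal valid encoder matrix for $\bcal_{IC}$, then the submatrix $\Lb_{\mathcal{P}_{IC}^s}$ formed by the rows indexed by $\mathcal{P}_{IC}^s$ is valid for $\mathcal{B}_{IC}^s$. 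Indeed, any vector $\zb'$ in the set $\ical_{IC}$ associated with $\mathcal{B}_{IC}^s$ witnesses some user $j$ with $z'_{f(j)} \neq 0$ and $\zb'_{\xcal_{j,IC} \cap \mathcal{P}_{IC}^s}=\mathbf{0}$; padding $\zb'$ with zeros on $[n] \setminus \mathcal{P}_{IC}^s$ yields a vector $\zb$ with $z_{f(j)} \neq 0$ and $\zb_{\xcal_{j,IC}}=\mathbf{0}$, so $\zb \in \ical_{IC}(q,\hat{m},n,\hat{\xcal},f)$. Since $\zb$ is supported on $\mathcal{P}_{IC}^s$, we have $\zb\Lb = \zb'\Lb_{\mathcal{P}_{IC}^s}$, and validity of $\Lb$ forces this to be nonzero. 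Hence $\Lb_{\mathcal{P}_{IC}^s}$ is valid for $\mathcal{B}_{IC}^s$ with the same number of columns, giving
\begin{equation*}
N_{q,opt,IC}(\mathcal{B}_{IC}^s) \leq N_{q,opt,IC}(\hat{m},n,\hat{\xcal},f).
\end{equation*}

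Finally I would apply Theorem~1 of~\cite{MAZ_IEEE_IT_13}: since $\mathcal{B}_{IC}^s$ is acyclic with information-set $\mathcal{P}_{IC}^s$, its optimal scalar linear codelength equals $|\mathcal{P}_{IC}^s|$. Chaining the three relations yields $N_{q,opt} = N_{q,opt,IC}(\hat{m},n,\hat{\xcal},f) \geq N_{q,opt,IC}(\mathcal{B}_{IC}^s) = |\mathcal{P}_{IC}^s|$, as claimed. I expect the main obstacle to be the monotonicity step: one must state precisely how the induced subgraph $\mathcal{B}_{IC}^s$ inherits each surviving user's demand and its truncated side information, and then verify carefully that the zero-padding of $\zb'$ meets both defining conditions of $\ical_{IC}$ for $\bcal_{IC}$, so that validity of $\Lb$ genuinely transfers to $\Lb_{\mathcal{P}_{IC}^s}$. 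The acyclic evaluation is then a direct citation and needs no separate argument.
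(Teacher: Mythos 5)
Your proposal is correct and follows essentially the same three-step chain as the paper's proof: invoke Theorem~\ref{BNSI_IC_L} to identify $N_{q,opt}$ with $N_{q,opt,IC}(\bcal_{IC})$, establish monotonicity of the optimal index-coding length under passing to the induced subgraph $\mathcal{B}_{IC}^s$, and evaluate the acyclic subproblem via Theorem~1 of~\cite{MAZ_IEEE_IT_13}. The only difference is that where the paper simply cites Lemma~1 of~\cite{MAZ_IEEE_IT_13} for the monotonicity step, you prove it directly through the zero-padding argument on $\ical_{IC}$ (mirroring the technique of Lemma~\ref{subprob} for BNSI subproblems), which is a correct and self-contained substitute for that citation.
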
 
\begin{proof}
From equivalence relation of the BNSI problem and the scalar linear Index Coding problem described in Theorem \ref{BNSI_IC_L}, we have $N_{q,opt} = N_{q,opt,IC}(\mathcal{B}_{IC})$. As $\mathcal{B}_{IC}^s$ is a subgraph of $\bcal_{IC}$,  using Lemma~1 in \cite{MAZ_IEEE_IT_13} we have $N_{q,opt,IC}(\mathcal{B}_{IC}) \geq N_{q,opt,IC}(\mathcal{B}_{IC}^s)$.    
The directed bipartite graph $\mathcal{B}_{IC}^s$ is an acyclic subgraph with information-set $\mathcal{P}_{IC}^s$. 
Thereby using Theorem 1 of \cite{MAZ_IEEE_IT_13}, we have $N_{q,opt,IC}(\mathcal{B}_{IC}^s)=|\mathcal{P}_{IC}^s|$. Hence, $N_{q,opt}(\mathcal{B}) \geq |\mathcal{P}_{IC}^s|$    
\end{proof}

\section{Conclusions and discussions}
We derived a design criterion for linear coding schemes for BNSI problems, and identified the subset of problems where linear coding provides gains over uncoded transmission. Reduction in the codelength is achieved by jointly coding the information symbols to simultaneously meet the demands of all the receivers. We have derived lower bounds on the optimal codelength. We have shown a valid encoder matrix can be constructed from the transpose of a parity check matrix of linear error correcting codes. Based on the construction of a valid encoder matrix derived from MDS code, we found some upper bounds on optimal codelength. Codelength can be further reduced by partitioning any BNSI problems into many BNSI subproblems. We have shown that each BNSI problem is equivalent to an Index Coding problem. The presented results bring to light several questions regarding BNSI networks, such as evaluation of optimum code length $N_{q,opt}$, designing linear coding schemes that achieve this optimum length, designing schemes that admit low complexity decoding at the receivers, some efficient algorithms to find the presented lower and upper bounds and designing schemes for broadcasting in the presence of channel noise.

% References

\bibliographystyle{IEEEtran}
% \bibliography{IEEEabrv,journal_ref}

% Generated by IEEEtran.bst, version: 1.13 (2008/09/30)

\end{document}